\newtheoremstyle{plain}{15pt}{15pt}{\itshape}{}{\bfseries}{.}{.5em}{}
\newtheoremstyle{definition}{15pt}{20pt}{}{}{\bfseries}{.}{.5em}{}
\theoremstyle{plain}
\theoremstyle{definition}
\newtheorem{notation}[theorem]{Notation}
\theoremstyle{remark}
\newcommand{\rem}[1]{}  % remove big chunks of text
\def\orcidID#1{\href{http://orcid.org/#1}{\protect\raisebox{-1.25pt}{\protect\includegraphics{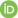}}}}
\newcommand*\circled[1]{
    \tikz[baseline=(char.base), scale=.8, every node/.style={transform shape}]{
        \node[shape=circle,draw,inner sep=2pt] (char) {#1};
    }
}
\newcommand{\customqed}{\hfill $\blacksquare$}
\newcommand{\qmarks}[1]{``#1''}
\DeclareTextFontCommand{\myemph}{\bfseries\em}
\newcolumntype{Y}{>{\centering\arraybackslash}X}
\colorlet{linkcolor}{red!60!black} %black red!60!black
\DeclarePairedDelimiter\set{\{}{\}}
\DeclarePairedDelimiterX\setvbar[2]{\{}{\}}{#1 \nonscript\;\delimsize \vert \nonscript\; #2}
\DeclarePairedDelimiterX\setcolon[2]{\{}{\}}{#1 : #2}
\newcommand{\ov}[1]{\overline{\vphantom{1}#1}}
\newcommand{\eqclass}[2]{\ov{#1}^{\smash{\scriptscriptstyle #2}}} 
\newcommand{\eqS}[1]{\eqclass{#1}{\bbS}}%{[#1]_S}
\newcommand{\eqT}[1]{\eqclass{#1}{\bbT}}%{[#1]_T}
 \newcommand{\xmapsfrom}[2][]{%
    \ext@arrow3095\leftarrowfill@{#1}{#2}\mapsfromchar
}
\DeclareMathOperator{\id}{id}
\newcommand{\mysetminusD}{\hbox{\tikz{\draw[line width=0.6pt, line cap=round] (3pt, 0) -- (0, 6pt);}}}
\newcommand{\mysetminusT}{\mysetminusD}
\newcommand{\mysetminusS}{\hbox{\tikz{\draw[line width=0.45pt, line cap=round] (2pt, 0) -- (0, 4pt);}}}
\newcommand{\mysetminusSS}{\hbox{\tikz{\draw[line width=0.4pt, line cap=round] (1.5pt, 0) -- (0, 3pt);}}}
\renewcommand{\setminus}{\mathbin{\mathchoice{\mysetminusD}{\mysetminusT}{\mysetminusS}{\mysetminusSS}}}
\renewcommand{\leq}{\leqslant}
\renewcommand{\geq}{\geqslant}
\newcommand{\bb}[1]{\mathbb{#1}}
\newcommand{\N}{\ensuremath{\bb{N}}\xspace}
\newcommand{\Z}{\ensuremath{\bb{Z}}\xspace}
\newcommand{\calR}{\ensuremath{\mathcal{R}}\xspace}
\newcommand{\calRsep}{\ensuremath{\mathcal{R}^{\mathit{sep}}}\xspace}
\let\isom\cong
\newcommand{\defeq}{\vcentcolon=}
\let\lnottemp\lnot
\renewcommand{\lnot}{\lnottemp \hspace*{0.1em}}
\let\oldexists\exists
\let\exists\relax
\newcommand{\exists}{\hspace*{0em}\oldexists\hspace*{0.07em}}
\let\oldforall\forall
\let\forall\relax 
\newcommand{\forall}{\hspace*{0em}\oldforall\hspace*{0.07em}}
\gdef\scalefactor{#1}\begin{center}\proofSkipAmount \leavevmode}%
\scalebox{\scalefactor}{\DisplayProof}\proofSkipAmount \end{center} }
\newcommand{\bbS}{\bb{S}}
\newcommand{\bbT}{\bb{T}}
\newcommand{\bbU}{\bb{U}}
\newcommand{\variables}{\mathcal{V}}
\newcommand{\var}{\mathsf{var}} % \DeclareMathOperator{\var}{var}
\newcommand{\op}{\mathsf{op}}
\newcommand{\theoryeq}[1]{=_\mathbb{#1}} 
\newcommand{\modST}{modulo~$(\bb{S}, \bb{T})$\xspace}
\newcommand{\Ring}{\ensuremath{\mathsf{Ring}}}
\newcommand{\Monoid}{\ensuremath{\mathsf{Mon}}}
\newcommand{\AbGrp}{\ensuremath{\mathsf{AbGrp}}}
\newcommand{\type}{\mathsf{type}}
\newcommand{\inject}{\mathsf{inject}}
\newcommand{\sep}{\mathsf{sep}}
\newcommand{\cat}[1]{\mathsf{#1}}
\newcommand{\Set}{\cat{Set}}
\newcommand{\cisom}{\isom_{\textsf{conc}}}
\newcommand{\EM}[1]{{\normalfont\textbf{EM}(#1)}} % monad
\newcommand{\catalg}[1]{{\normalfont\textbf{Alg}(#1)}} % algebraic theory
\mathchardef\mathhyphen="2D                            % unused
\newcommand{\multiset}{{\mathcal{M}}}
\newcommand{\distribution}{{\mathcal{D}}}
\newcommand{\AbGrpMonad}{\mathcal{A}}
\DeclarePairedDelimiter{\brackets}{\llbracket}{\rrbracket}
\newcommand{\term}[2]{\mathcal{T}(#1,#2)}  % Maaike version: {Term}
\newcommand{\freealgebra}[3]{\nicefrac{\term{#1}{#2}}{#3}}
\newcommand{\freeinter}[2]{\brackets{#2}^{#1}}
\newcommand{\freeinterX}[1]{\freeinter{X}{#1}}
\newcommand{\adjunction}[4]{%
  % #1 : #2 <arrows> #3 : #4
  #1\colon #2%
  \mathrel{\vcenter{%
    \offinterlineskip\m@th
    \ialign{%
      \hfil$##$\hfil\cr
      \longrightarrow\cr
      %\longrightharpoonup\cr
      \noalign{\kern-.3ex}
      {\scriptscriptstyle\bot}\cr
      %\smallbot\cr
      \longleftarrow\cr
      %\longleftharpoondown\cr
    }%
  }}%
  #3 \noloc #4%
}
\newcommand\noloc{%
  \nobreak
  \mspace{6mu plus 1mu}
  {:}
  \nonscript\mkern-\thinmuskip
  \mathpunct{}
  \mspace{2mu}
}
\newcommand{\depth}{\mathsf{depth}} % \DeclareMathOperator{\depth}{depth}
\newcommand{\rightarrowdbl}{\rightarrow\mathrel{\mkern-14mu}\rightarrow}
\newcommand{\xrightarrowdbl}[2][]{%
  \xrightarrow[#1]{#2}\mathrel{\mkern-14mu}\rightarrow
}
\newcommand{\leftarrowdbl}{\leftarrow\mathrel{\mkern-14mu}\leftarrow}
\newcommand{\xleftarrowdbl}[2][]{%
  \xleftarrow[#1]{#2}\mathrel{\mkern-14mu}\leftarrow
}
\newcommand{\SN}{{\normalfont SN}\xspace}
\newcommand{\CR}{{\normalfont CR}\xspace}
\newcommand{\WCR}{{\normalfont WCR}\xspace}
\newcommand{\CRc}{{\normalfont CR$\circlearrowleft$}\xspace}
\newcommand{\WCRc}{{\normalfont WCR$\circlearrowleft$}\xspace}
\definecolor{cblue}{rgb}{0,0.4,0.7}
\definecolor{clighterblue}{rgb}{0,0.6,1.0}
\colorlet{cred}{red}
\colorlet{cgreen}{green!80!black}
\colorlet{corange}{orange!70!red}
\colorlet{cpureorange}{orange}
\colorlet{cpurple}{clighterblue!50!cred}
\colorlet{clightblue}{clighterblue!50!cblue!40}%{cblue!70!blue!30}
\colorlet{clightred}{cred!40}
\colorlet{clightgreen}{cgreen!80!cblue!40}
\colorlet{clightyellow}{corange!40!yellow!50}
\colorlet{clightorange}{cred!50!orange!40}
\colorlet{clightpurple}{clighterblue!50!cred!50}
\colorlet{cdarkred}{cred!70!black}
\colorlet{cdarkgreen}{cgreen!60!black}
\colorlet{cdarkblue}{cblue!60!black}
\begin{document}

\title{Correspondence between Composite Theories and Distributive Laws}
% Correspondence between distributive laws and composite theories using term rewriting
\titlerunning{Correspondence between Composite Theories and Dist.~Laws}
%dist. laws and composite theories using TRS}
% Correspondence between dist. laws and composite theories using TRS
%\title{A proof of the correspondence between distributive laws and composite theories using term rewriting}
%\titlerunning{A proof of dist. laws and comp. theories using TRS}

\author{
    Alo\"is Rosset \inst{1} \orcidID{0000-0002-7841-2318} \and
    Maaike Zwart \inst{2} \orcidID{0000-0002-0257-1574} \and \\
    \vspace*{-3mm}
    Helle Hvid Hansen \inst{3} \orcidID{0000-0001-7061-1219} \and
    J\"org Endrullis \inst{1}
    \orcidID{0000-0002-2554-8270}
}
\authorrunning{A.~Rosset, M.~Zwart, H.H.~Hansen, J.~Endrullis}

\institute{
    Vrije Universiteit Amsterdam, Amsterdam, Netherlands \\
    \email{\{a.rosset, j.endrullis\}@vu.nl}
    \and
    IT University of Copenhagen, Copenhagen, Denmark \\
    \email{maaike.annebeth@gmail.com}
    \and
    University of Groningen, Groningen, Netherlands \\
    \email{h.h.hansen@rug.nl}
}

\maketitle

\begin{abstract}
    Composite theories are the algebraic equivalent of distributive laws. In this paper, we delve into the details of this correspondence and concretely show how to construct a composite theory from a distributive law and vice versa.
    %\todo{introducing \emph{functorial rewriting system}}.
    Using term rewriting methods, we also describe when a minimal set of equations axiomatises the composite theory.
    %along with $E_\bbS$ and $E\bbT$
    \keywords{monad \and distributive law \and algebraic theory \and composite theory \and term rewriting}
\end{abstract}

\section{Introduction}

% Background: monads
% 1. General stuff
    Monads are categorical structures \cite{Barr_Wells_1985_TTT,MacLane_1971} with many applications in (co)algebraic approaches to program semantics, notably to model effects such as nondeterminism, probabilities and exceptions \cite{Moggi_1991, Plotkin_Power_2001, Bonchi_Sokolova_Vignudelli_2019,Jacobs_Silva_Sokolova_2014}.
% 2. Finitary Set-monads admit algebraic presentations
    Monads that occur in the specification of programs and are used in reasoning about programs are often finitary and $\Set$-based, and hence can be presented as algebraic theories \cite{Borceux_1994_vol2,Manes_1976,Aczel_Adamek_Milius_Velebil_2003}.
    
% Background: algebraic theories
% 1. Study monad composition
    The algebraic view on monads has been especially useful when studying monad compositions \cite{Cheng_2020,Goy_Petrisan_2020,Parlant_2020_Thesis,Pirog_Staton_2017,Zwart_Marsden_journal_2022}.
    Composing monads is a way to combine multiple computational effects, and is usually done categorically via a distributive law \cite{Beck_1969,ManesMulry_2007}. 
    However, the required distributive laws do not always exist, and the use of algebraic theories was instrumental in proving so-called no-go theorems, which tell us when two finitary monads cannot be composed via a distributive law \cite{Zwart_Marsden_journal_2022}. 
% 2. Correspondence composite theories - d.l 

    Central to these results is the correspondence between composites of algebraic theories, and distributive laws between the corresponding monads. 
    Briefly stated, a composite of two algebraic theories $\bbS$ and $\bbT$ is a theory $\bbU$ that contains all the function symbols and equations of $\bbS$ and $\bbT$ as well as a set of distribution axioms that specify how equality of mixed terms can be reduced to equality in $\bbS$ and $\bbT$. 
    Composite theories were originally studied by Cheng \cite{Cheng_2020} on the abstract level of Lawvere theories. Pir\'og \& Staton \cite{Pirog_Staton_2017} formulated them in the more concrete setting of algebraic theories. 

% Gaps
% 1: missing/incomplete proof
    While Pir\'og \& Staton state the correspondence between composite theories and distributive laws, they do not provide a proof, referring instead to Cheng. In her thesis, Zwart \cite{Zwart_2020} gives a constructive version of this correspondence for the category $\Set$, but she does not prove directly that the algebraic theory she constructs from a distributive law is indeed a composite theory.

% 2: minimal set of distributivity equations
    Furthermore, the theory Zwart constructs is given via a set $E_\lambda$ that contains all possible equations with interaction between the theories $\bbS$ and $\bbT$.
    While this axiomatisation does the job, it is neither elegant nor practical to work with.
    Composite theories can often be described in terms of a few simple distribution axioms. A classic example is the theory of rings, which is a composite of the theories of monoids and Abelian groups via the two `times over plus' distribution axioms. A systematic approach to identify such a minimal set of distribution axioms for a composite theory would be far more practical than the set $E_\lambda$.

% Contribution 1: Complete correspondence proof
    In this paper, we present a full and self-contained proof of the correspondence between composite theories $\bbU$ (of $\bbT$ after $\bbS$) and distributive laws $\lambda\colon ST \to TS$, where $\bbS$ and $\bbT$ are algebraic theories and $S,T$ are their corresponding finitary $\Set$-monads. \cref{sec:composite_theory_=>_dl} shows how to get a distributive law from a composite theory, and \cref{sec:dl_=>_composite_theory} shows how to construct a composite theory from a distributive law.
    The proof of the latter uses term rewriting techniques.
    In particular, we introduce \emph{functorial rewriting systems} in order to reason about strings of functors, and to obtain a separation of $\bbU$-terms.

% Contribution 2: minimal E_lambda
    In addition, in \cref{sec:application} we give criteria that ensure that a certain minimal set of distribution axioms $E' \subseteq E_\lambda$ suffices to axiomatise $\bbU$. 
    The natural candidate for $E'$ consists of equations in which the left-hand side is made of exactly one $\bbS$-operation symbol, which is applied to arguments built from up to one $\bbT$-operation symbol.
    We prove that if a term rewriting system corresponding to $E'$ is terminating, then $E_\bbS \cup E_\bbT \cup E'$ axiomatises $\bbU$.
    To illustrate that this criterion is not trivially satisfied, we give an example in which $E'$ does not terminate and indeed does not axiomatise $\bbU$.  
    Finally, we show that we have termination if the right-hand sides of the equations in $E'$ are of a certain form, and apply our results to establish presentations of some composite monads/theories.

\section{Preliminaries}\label{sec:preliminaries}

We assume that the reader is familiar with basic notions of category theory \cite{Awodey_2006,MacLane_1971,Riehl_2017}.
This section recalls basic definitions and results concerning monads, algebraic theories, and term rewriting systems, and fixes notation for the concepts we use in this paper.

\subsection{Monads}

\begin{definition}
    \label{def:monad}
    A \myemph{monad} $(M,\eta,\mu)$ on a category $\cat{C}$ is a triple consisting of an endofunctor $M: \cat{C} \to \cat{C}$, and two natural transformations, the \myemph{unit} $\eta : \id \Rightarrow M$ and the \myemph{multiplication} $\mu : M^2 \Rightarrow M$ that make \eqref{eq:def_monad_axiom_unit} and \eqref{eq:def_monad_axiom_multiplication_associativity} commute.
    For convenience, we often refer to a monad $(M,\eta,\mu)$ by its functor part $M$.

    % \vspace*{-.5\baselineskip}
    \noindent\begin{tabularx}{\textwidth}{@{}XX@{}}
        \begin{equation}
            \label{eq:def_monad_axiom_unit}
            \begin{tikzcd}[ampersand replacement=\&, scale cd=.85]
                M \ar[r, "M\eta"] \ar[rd, equal] \& M^2 \ar[d, "\mu"] \& M \ar[l, "\eta M"'] \ar[ld, equal] \\
                \& M \&
            \end{tikzcd}
        \end{equation} &
        \begin{equation}
            \label{eq:def_monad_axiom_multiplication_associativity}
            \begin{tikzcd}[ampersand replacement=\&, scale cd=.85]
                M^3 \ar[d, "M\mu"'] \ar[r, "\mu M"] \& M^2 \ar[d, "\mu"] \\
                M^2 \ar[r, "\mu"'] \& M
            \end{tikzcd}
        \end{equation}
    \end{tabularx}
\end{definition}
\vspace*{-1.5\baselineskip}

\begin{example}
    \label{ex:monads}
    Here are some examples of $\Set$-monads, where we always mean the finitary versions. 
    For more details on these monads, see e.g.~\cite[§1.2.1]{Goy_2021_Thesis}.
    \begin{itemize}[topsep=0pt]
        \item The \emph{list} and \emph{non-empty list} monads $L$ and $L^+$, with $\eta^L_X(x) = \eta^{L^+} (x) = [x]$, and $\smash{\mu^L = \mu^{L^+}}$ being concatenation.
        
        \item The \emph{multiset} monad $\multiset$, with $\eta^\multiset(x) = \Lbag x \Rbag$ and $\mu^\multiset$ taking the union, adding multiplicities.
        Taking multiplicities in $\Z$ gives the \emph{Abelian group} monad $\AbGrpMonad$.
        
        \item The \emph{distribution} monad $\distribution$, with $\eta^\distribution (x) = 1x$ and a weighted average of $\mu^\mathcal{D}$.
        
        \item The \emph{reader} monad $R_A(X) = X^A$, where $A$ is a finite set, with $\eta^R$ the constant function and $\mu^R$ reading the same element twice. 
    \end{itemize}
\end{example}

\begin{definition}
    \label{def:monad_morphism}
    Given two monads $(M,\eta^M,\mu^M)$ and $(T,\eta^T,\mu^T)$ on a category $\cat{C}$, a \myemph{monad morphism} from $M$ to $T$ is a natural transformation $\theta : M \Rightarrow T$ that makes \eqref{eq:def_monad_morphism_axiom_unit} and \eqref{eq:def_monad_morphism_axiom_multiplication} commute, where $\theta \theta \defeq \theta_{T} \cdot M\theta = T\theta \cdot \theta_M$.
    If each component of $\theta$ is an isomorphism, we say that the two monads are \myemph{isomorphic}.

    \vspace*{-.5\baselineskip}
    \noindent\begin{tabularx}{\textwidth}{@{}XX@{}}
        \begin{equation}
            \label{eq:def_monad_morphism_axiom_unit}
            \begin{tikzcd}[row sep = 0.25em, ampersand replacement=\&, scale cd=.9]
                \& M \ar[dd, "\theta"] \\
                \id \ar[ur, "\eta^M"] \ar[dr, "\eta^T"'] \& \\
                \& T
            \end{tikzcd}
        \end{equation} &
        \begin{equation}
            \label{eq:def_monad_morphism_axiom_multiplication}
            \begin{tikzcd}[ampersand replacement=\&,scale cd=.9]
                M^2 \ar[d, "\mu^M"'] \ar[r, "\theta \theta"] \& T^2 \ar[d, "\mu^T"] \\
                M \ar[r, "\theta"'] \& T
            \end{tikzcd}
        \end{equation}
    \end{tabularx}
\end{definition}
\vspace*{-1.5\baselineskip}

\begin{definition}
    \label{def:monad_algebra}
    Let $(M,\eta,\mu)$ be a monad on category $\cat{C}$.
    An (Eilenberg-Moore) $M$-\myemph{algebra} is a $\cat{C}$-morphism $\alpha : MX \to X$ for some $X \in \cat{C}$, denoted $(X,\alpha)$ for short, such that \eqref{eq:def_monad_algebra_axiom_unit} and \eqref{eq:def_monad_algebra_axiom_multiplication_associativity} commute. 
    An $M$-algebra \myemph{homomorphism} $f : (X,\alpha) \to (Y,\beta)$ between two $M$-algebras is a function $f : X \to Y$ such that \eqref{eq:def_monad_algebra_morphism} commutes.
    The category of $M$-algebras and $M$-algebra homomorphisms is denoted $\EM{M}$ and called the \myemph{Eilenberg-Moore category} of $M$.

    \vspace*{-1\baselineskip}
    \noindent\begin{tabularx}{\textwidth}{@{}XXX@{}}
        \begin{equation}
            \label{eq:def_monad_algebra_axiom_unit}
            \begin{tikzcd}[ampersand replacement=\&, scale cd=.9]
                X \ar[r, "\eta_X"] \ar[rd, equal] \& MX \ar[d, "\alpha"] \\
                \& X
            \end{tikzcd}
        \end{equation} 
        &
        \begin{equation}
            \label{eq:def_monad_algebra_axiom_multiplication_associativity}
            \begin{tikzcd}[ampersand replacement=\&, scale cd=.9]
                M^2X \ar[d, "M\alpha"'] \ar[r, "\mu_X"] \& MX \ar[d, "\alpha"] \\
                MX \ar[r, "\alpha"'] \& X
            \end{tikzcd}
        \end{equation}
        &
        \begin{equation}
            \label{eq:def_monad_algebra_morphism}
            \begin{tikzcd}[ampersand replacement=\&,scale cd=.9]
                MX \ar[d, "\alpha"'] \ar[r, "Mf"] \& MY \ar[d, "\beta"] \\
                X \ar[r, "f"'] \& Y
            \end{tikzcd}
        \end{equation}
    \end{tabularx}
\end{definition}
\vspace*{-2\baselineskip}

\begin{definition}
    \label{def:distributive_law}
    Let $S,T$ be monads.
    A \myemph{distributive law} $\lambda : ST \Rightarrow TS$ between monads is a natural transformation satisfying \eqref{eqn:distributive_law_unit_axiom_S}-\eqref{eqn:distributive_law_multiplication_axiom_T}.
    A \myemph{weak distributive law} $\lambda : ST \Rightarrow TS$ is a natural transformation satisfying \eqref{eqn:distributive_law_unit_axiom_T}-\eqref{eqn:distributive_law_multiplication_axiom_T}.

    \vspace*{-0.5\baselineskip}
    \noindent\begin{tabularx}{\textwidth}{@{}XX@{}}
        \begin{equation}
            \label{eqn:distributive_law_unit_axiom_S}
            \begin{tikzcd}[ampersand replacement=\&, scale cd=.9]
                \& T \& \\
                ST \& \& TS
                \ar[from=1-2, to=2-1, "\eta^S T"']
                \ar[from=1-2, to=2-3, "T \eta^S"]
                \ar[from=2-1, to=2-3, "\lambda"]
            \end{tikzcd}
        \end{equation}
        &
        \begin{equation}
            \label{eqn:distributive_law_unit_axiom_T}
            \begin{tikzcd}[ampersand replacement=\&, scale cd=.9]
                \& S \& \\
                ST \& \& TS
                \ar[from=1-2, to=2-1, "S \eta^T"']
                \ar[from=1-2, to=2-3, "\eta^T S"]
                \ar[from=2-1, to=2-3, "\lambda"]
            \end{tikzcd}
        \end{equation}
    \end{tabularx}
    \\[-8mm]
    \noindent\begin{tabularx}{\textwidth}{@{}XX@{}}
        \begin{equation}
            \label{eqn:distributive_law_multiplication_axiom_S}
            \begin{tikzcd}[ampersand replacement=\&, scale cd=.9]
                SST \& STS \& TSS \\
                ST \& \& TS
                \ar[from=1-1, to=1-2, "S \lambda"]
                \ar[from=1-2, to=1-3, "\lambda S"]
                \ar[from=1-1, to=2-1, "\mu^S T"]
                \ar[from=1-3, to=2-3, "T \mu^S"']
                \ar[from=2-1, to=2-3, "\lambda"]
            \end{tikzcd}
        \end{equation}
        &
        \begin{equation}
            \label{eqn:distributive_law_multiplication_axiom_T}
            \begin{tikzcd}[ampersand replacement=\&, scale cd=.9]
                STT \& TST \& TTS \\
                ST \& \& TS
                \ar[from=1-1, to=1-2, "\lambda T"]
                \ar[from=1-2, to=1-3, "T \lambda"]
                \ar[from=1-1, to=2-1, "S \mu^T"]
                \ar[from=1-3, to=2-3, "\mu^T S"']
                \ar[from=2-1, to=2-3, "\lambda"]
            \end{tikzcd}
        \end{equation}
    \end{tabularx}
\end{definition}
\vspace*{-\baselineskip}

A distributive law $\lambda : ST \to TS$ induces a monad structure on the functor $TS$ as follows \cite[§1]{Beck_1969}:
\begin{equation}
    \label{eq:monads_composition_obtained_from_a_distributive_law}
    \Big(TS, 
    \;\; \eta^{TS} \defeq \big(\id{} \xrightarrow{\eta^T \eta^S} TS \big),
    \;\; \mu^{TS} \defeq \big( TSTS \xrightarrow{T \lambda S} TTSS \xrightarrow{\mu^T \mu^S} TS \big) \Big)
\end{equation}

The algebras for this composite monad are algebras that are simultaneously $S$-algebras and $T$-algebras. 
This is visible through the isomorphism $\EM{TS} \isom \catalg{\lambda}$ \cite[§2]{Beck_1969}, where the category $\catalg{\lambda}$ of $\lambda$-algebras is defined as follows:

\bigskip
\noindent\begin{minipage}{.69\linewidth}
\begin{definition}
    \label{def:lambda_algebra}
    Given monads $S, T$ and distributive law $\lambda : ST \to TS$, then the objects of the category $\catalg{\lambda}$ are triples $(X, \sigma, \tau)$, such that $(X, \sigma)$ is an $S$-algebra and $(X, \tau)$ is a $T$-algebra, and the diagram on the right commutes.
    The morphisms of $\catalg{\lambda}$ are $\cat{C}$-morphisms that are both $S$- and $T$-algebra homomorphisms.
\end{definition}
\end{minipage}
\hfill
\begin{minipage}{.29\linewidth}
    \begin{equation*}
        % \label{eqn:lambda-alg}
        \begin{tikzcd}[ampersand replacement=\&, column sep=tiny, row sep=small]%scale cd = .5]
            STX \& \& TSX \\
            SX \& \& TX \\
            \& X \&
            \ar[from=1-1, to=1-3, "\lambda"]
            \ar[from=1-1, to=2-1, "S\tau"']
            \ar[from=2-1, to=3-2, "\sigma"']
            \ar[from=1-3, to=2-3, "T\sigma"]
            \ar[from=2-3, to=3-2, "\tau"]
        \end{tikzcd}
    \end{equation*}
\end{minipage}

\subsection{Algebraic Theories}

\begin{definition}\label{def:algebraic-theory}
    An \myemph{algebraic theory} is a pair $(\Sigma,E)$ consisting of an algebraic signature $\Sigma$ and set of equations $E$ over $\Sigma$ defined as follows.
    \begin{itemize}[topsep=2pt]
        \item 
        An \myemph{algebraic signature} $\Sigma$ is a set of operation symbols. Each $\op^{(n)} \in \Sigma$ has an arity $n \in \N$.
        
        \item 
        The set $\term{\Sigma}{X}$, also denoted $\Sigma^* X$, of $\Sigma$-\myemph{terms} over a set $X$ is defined inductively: elements in $X$ are terms, and given terms $t_1,\ldots,t_n$ and $\op^{(n)} \in \Sigma$, then $\op(t_1,\ldots,t_n)$ is a term.
        
        \item 
        An \myemph{equation} over a signature $\Sigma$ is a pair $(s,t)$ of $\Sigma$-terms.
    \end{itemize}        
\end{definition}

For the rest of this paper, we fix a set $\variables = \set{v_1, v_2, v_3, \ldots}$ of variables. The subset of $\variables$ appearing in a term $t$ is denoted as $\var(t)$.

\begin{definition}
    The category $\catalg{\Sigma,E}$ consists of $(\Sigma,E)$-algebras and homomorphisms between them.
    \begin{itemize}[topsep=2pt]
        \item 
        A $\Sigma$-\myemph{algebra} is a pair $(X, \brackets{\cdot})$ consisting of a set $X$ and a collection of interpretations: for each $\op^{(n)}  \in \Sigma$, we have $\brackets{\op} : X^n \to X$. Any function $f : X \to Y$ extends to a unique homomorphism,
        $\brackets{\cdot}_f : \term{\Sigma}{X} \to Y$, as given by equations \eqref{eqn:def_[[]]_variables} and \eqref{eqn:def_[[]]_operations} below. When $f = \id_X$, we omit the subscript. Functions of the form $\upsilon : \variables \to Y$ are called \myemph{variable assignments}.
        \begin{align}
            \brackets{x}_f &\defeq f(x), \text{ and} \label{eqn:def_[[]]_variables} \\
            \brackets{\op(t_1,\ldots,t_n)}_f &\defeq \brackets{\op} ( \brackets{t_1}_f, \ldots, \brackets{t_n}_f). \label{eqn:def_[[]]_operations}
        \end{align}
        
        \item 
        A $(\Sigma,E)$-\myemph{algebra} $(X,\brackets{\cdot})$ is a $\Sigma$-algebra whose $\brackets{\cdot}$ satisfies all equations in $E$, i.e., for each $(s,t) \in E$ and all variable assignments $\upsilon$, $\brackets{s}_\upsilon = \brackets{t}_\upsilon$.
        
        \item 
        A $(\Sigma,E)$-algebra \myemph{homomorphism} $f : (X,\brackets{\cdot}) \to (X', \brackets{\cdot}')$ is a function $f : X \to X'$ such that $f \brackets{\op} = \brackets{\op}' f^n$, for all $ \op^{(n)} \in \Sigma$.
    \end{itemize}   
\end{definition}

Given an algebraic theory $\bbT=(\Sigma_\bbT, E_\bbT)$ and $\Sigma_\bbT$-terms $s$ and $t$, we write $s \theoryeq{T} t$ to denote that the equality $s=t$ is derivable from the axioms $E_\bbT$ in equational logic. The inference rules of equational logic are in \cref{tab:equational_logic} (appendix).

\begin{definition}
    We have a free-forgetful adjunction
    $\adjunction{F}{\Set}{\catalg{\Sigma,E}}{U}$ described as follows:
    \begin{itemize}[topsep=2pt]
        \item 
        The \myemph{free $\bbT$-algebra} on a set $X$ is the $(\Sigma_\bbT,E_\bbT)$-algebra $(\freealgebra{\Sigma_\bbT}{X}{\theoryeq{T}},\freeinterX{\cdot})$ with carrier $\term{\Sigma_\bbT}{X}$ modulo $\theoryeq{T}$.
        The equivalence class of a term $t$ is denoted $\eqclass{t}{}$.
        The interpretation of $\op^{(n)} \in \Sigma$ is
        \(
            \freeinterX{\op}(\eqclass{t_1}{}, \ldots, \eqclass{t_n}{}) \defeq \eqclass{\op(t_1,\ldots,t_n)}{}.
        \)
        
        \item 
        The \myemph{free functor} $F: \Set \to \catalg{\Sigma,E}$ sends $X$ to its free $(\Sigma,E)$-algebra.
    \end{itemize}
    Composing the adjoint functors gives a monad $(T \defeq UF, \eta, \mu)$, called the \myemph{free algebra monad}~\cite[VI.1]{MacLane_1971}. 
    The unit is $\eta:x \mapsto \eqclass{x}{}$ and the multiplication is $\mu: \smash{\eqclass{t[\eqclass{t_i}{}/v_i]}{} \mapsto \eqclass{t[t_i/v_i]}{}}$.
    The free algebra monad is defined on functions $f: X \to Y$ as follows, for $x \in X$ and $\op^{(n)} \in \Sigma$:
    \vspace*{-.1\baselineskip}
    \begin{equation}
        \label{eq:free_monad_on_function}
        \begin{aligned}
            Tf(\eqclass{x}{}) &= \eqclass{f(x)}{} \\
            Tf \big( \eqclass{\op(t_1,\ldots,t_n)}{} \big) &= \mu \big( \eqclass{\op(Tf(\eqclass{t_1}{}), \ldots, Tf(\eqclass{t_n}{}))}{} \big)
        \end{aligned}
    \end{equation}
\end{definition}
\vspace*{-.7\baselineskip}

\begin{notation}
    \label{rem:wlog_distinct_variables}
    The following standard (shorthand) notation will be used throughout the paper.
    Given terms $t(x_1,..,x_n)$ and $s_1,\ldots,s_n$, we denote by $t[s_1,...,s_n]$ or by $t[s_i]$ the term $t[h]$ where $h(x_i)=s_i$ for $i=1,..,n$.
    In particular, we will write $t[s_x]$ instead of $t[s_x/x]$, where $x$ ranges over all variables in $t$.
    Moreover, given a family of terms $(t_x[s_{x,y}/y])_{x \in X}$, we will simply write each term $t_x[s_y]$.
    Indeed, we can assume each $t_x$ has distinct variables by choosing the (say $m$) variables of $t_{x_1}$ to be $y_1, \ldots, y_m$, the variables of $t_{x_2}$ to start at $y_{m+1}$, and so on.
\end{notation}
\vspace*{-1\baselineskip}

\begin{definition}[{\cite[Def.~5, Lem.~8]{Rosset_Hansen_Endrullis_2022}}]
    \label{def:algebraic_presentation}
    An algebraic theory $(\Sigma,E)$ is an \myemph{algebraic presentation} of a $\Set$-monad $(M,\eta^M,\mu^M)$ if we have an isomorphism of monads $(T,\eta^T,\mu^T) \isom (M, \eta^M, \mu^M)$, where $T$ is the free algebra monad of $(\Sigma,E)$.
    An equivalent formulation is that both categories of algebras are concretely isomorphic\footnote{\qmarks{concrete} means that both functors of this isomorphism commute with the forgetful functors $\EM{M} \to \Set$ and $\catalg{\Sigma, E} \to \Set$. In other words it sends an $M$-algebra $(X, x : MX \to X)$ to a $(\Sigma,E)$-algebra with same carrier $(X, \brackets{\cdot})$ and vice-versa.}: $\EM{M} \cisom \catalg{\Sigma,E}$. The former isomorphism relates the monads on a syntactic level, whereas the latter relates them semantically.
\end{definition}

Note that a monad can have multiple presentations.
\begin{example}
    \label{ex:algebraic_presentation}
    Here are algebraic presentations of the monads from \cref{ex:monads}.
    \begin{itemize}
        \item The \emph{list} monad $L$ is presented by the theory of \emph{monoids}.
        \item The \emph{non-empty list} monad $L^+$ is presented by the theory of \emph{semigroups}.
        \item The \emph{multiset} monad $\multiset$ is presented by the theory of \emph{commutative monoids}.
        \item The \emph{Abelian group} monad $\AbGrpMonad$ is presented by the theory of \emph{Abelian groups}.
        \item The \emph{distribution} monad $\distribution$ is presented by the theory of \emph{convex algebras}~\cite{Jacobs_2010}.
        \item The \emph{reader} monad $R_A$ is presented by the theory of \emph{local states} \cite{Plotkin_Power_2001} consisting of a single $|A|$-ary operation symbol, satisfying idempotence (e.g. $a * a = a$) and diagonal equations (e.g. $(a * b) * (c * d) = (a * d)$). 
    \end{itemize}
    
    \iffalse
    The signatures of those theories are $\Sigma_\mathsf{Mon} = \Sigma_\mathsf{CMon} = \Sigma_\mathsf{JSLat} \defeq \set{\cdot^{(2)}, 1^{(0)}}$, $\Sigma_\mathsf{AbGrp} \defeq \set{0^{(0)}, +^{(2)}, -^{(1)}}$, and $\Sigma_\mathsf{ConvAlg} \defeq \setvbar{\oplus_p}{p \in [0,1]}$.
    The equations are
    \[
        E_\mathsf{Mon} \defeq \left\{
        \begin{aligned}
            (x \cdot y) \cdot z &= x \cdot (y \cdot z), \\
            1 \cdot x &= x, \\
            x \cdot 1 &= x
        \end{aligned}
        \right\}
        \qquad
        E_\mathsf{AbGrp} \defeq \left\{
        \begin{aligned}
            (x + y) + z &= x + (y + z), \\
            x + (-x) &= 0, \\
            x + y &= y + x, \\
            x + 0 &= x
        \end{aligned}
        \right\}
    \]
    \[
        \begin{aligned}
            E_\mathsf{CMon} &\defeq E_\mathsf{Mon} \uplus \set{x \cdot y = y \cdot x} \\
            E_\mathsf{JSLat} &\defeq E_\mathsf{CMon} \uplus \set{x \cdot x = x}
        \end{aligned}
        E_\mathsf{ConvAlg} \defeq \left\{
        \begin{aligned}
            x \oplus_1 y &= x, \\
            x \oplus_p x &= x, \\
            x \oplus_p y &= y \oplus_{1-p} x, \\
            x \oplus_p (y \oplus_q z) &= (x \oplus_{\frac{p}{p + r - pr}} y) \oplus_{p + r - pr} z
        \end{aligned}
        \right\}
    \]
    \fi
\end{example}

\subsection{Term Rewriting Systems}
\label{subsec:trs}

We only briefly explain the basic concepts and results of term rewriting systems (TRS) that we need in our proofs. For more background, we recommend the book \qmarks{Term Rewriting Systems} by Terese \cite{Terese_2003}.

\begin{definition}
    \label{def:TRS}
    Given a signature $\Sigma$, a  \myemph{rewrite rule} $(l \to r)$ is a pair of $\Sigma$-terms $(l,r)$ such that $l$ %\notin \variables$  
    is not a variable, and all variables in the right occur also in the left: $\var(l) \supseteq \var(r)$. 
    A \myemph{term rewriting system} $\mathcal{R}=(\Sigma,R)$ consists of a signature $\Sigma$ and a set of rewrite rules $R$.
    The rewrite relation $\to_\mathcal{R}$ is the smallest relation on $\term{\Sigma}{X}$ that contains $\mathcal{R}$ and is closed under substitution and under context\footnote{For the definition of context, see \cite[§2.1.1]{Terese_2003}}.
    We simply write $\to$ when $\mathcal{R}$ is clear from the context. The transitive and reflexive closure of $\to$ is written as $\rightarrowdbl$.
    When all operation symbols in $\Sigma$ have arity $1$, then $\mathcal{R} = (\Sigma,R)$ is called a \myemph{string rewriting system}.
\end{definition}

\begin{example}
    \label{ex:trs}
    Let $\Sigma \defeq \set{0^{(0)}, s^{(1)}, +^{(2)}}$ and $\mathcal{R} = \set{x+0 \to x, \; x+s(y) \to s(x+y)}$.
    A rewrite sequence is for instance
    \[
        s(s(0)) + s(0) \quad\to\quad s(s(s(0)) + 0) \quad\to\quad s(s(s(0))).
    \]
\end{example}

\noindent\begin{minipage}{.8\linewidth}
\begin{definition}
    Let $\mathcal{R} \defeq (\Sigma, R)$ be a TRS.
    \begin{itemize}
        \item 
        $\mathcal{R}$ is \myemph{terminating (SN)} if every rewriting sequence is finite $t_0 \to t_1 \to \ldots \to t_n \not\to$.

        \item 
        $\mathcal{R}$ is \myemph{locally confluent (\WCR)} if for all terms $t_1, t_2, t_3$ with $t_2 \leftarrow t_1 \to t_3$, there exists a term $t_4$ with $t_2 \rightarrowdbl t_4 \leftarrowdbl t_3$.

        \item 
        $\mathcal{R}$ is \myemph{confluent (\CR)} if for all terms $t_1, t_2, t_3$ with ${t_2 \leftarrowdbl t_1 \rightarrowdbl t_3}$, there exists a term $t_4$ with $t_2 \rightarrowdbl t_4 \leftarrowdbl t_3$.
    \end{itemize}
\end{definition}
\end{minipage}
\hfill
\begin{minipage}{.19\linewidth}
    \begin{tabularx}{\textwidth}{@{}Y@{}}
    \begin{tikzcd}[row sep=tiny, column sep=tiny, ampersand replacement=\&, scale cd=.75]
        \& t_1 \& \\
        t_2 \&\& t_3 \\
        \& t_4 \&
        \ar[from=1-2, to=2-1]
        \ar[from=1-2, to=2-3]
        \ar[from=2-1, to=3-2, two heads, dotted]
        \ar[from=2-3, to=3-2, two heads, dotted]
        \ar[from=1-2, to=3-2, draw=none, "(\WCR)" description]
    \end{tikzcd}
    \\%[20pt]
    \vspace*{.01mm}
    \begin{tikzcd}[row sep=tiny, column sep=tiny, ampersand replacement=\&, scale cd=.75]
        \& t_1 \& \\
        t_2 \&\& t_3 \\
        \& t_4 \&
        \ar[from=1-2, to=2-1, two heads]
        \ar[from=1-2, to=2-3, two heads]
        \ar[from=2-1, to=3-2, two heads, dotted]
        \ar[from=2-3, to=3-2, two heads, dotted]
        \ar[from=1-2, to=3-2, draw=none, "(\CR)" description]
    \end{tikzcd}
    \end{tabularx}
\end{minipage}
\medskip

A well-known result says that in the presence of termination, local confluence is enough to entail confluence.

\begin{lemma}[Newman's Lemma]
    \label{lem:newmans_lemma}
    If a TRS is terminating {\normalfont(\SN)} and locally confluent {\normalfont(\WCR)}, then it is also confluent {\normalfont(\CR)}.
\end{lemma}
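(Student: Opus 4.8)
The plan is to prove confluence by well-founded (Noetherian) induction on terms, where the required well-founded order is supplied precisely by the termination hypothesis. Since the system is \SN, the rewrite relation $\to$ admits no infinite chains $t_0 \to t_1 \to \cdots$, so its transitive closure is well-founded. I would therefore perform induction on a term $t$ with the hypothesis that every term reachable from $t$ in at least one step is already confluent (\CR), and prove that $t$ itself is confluent.

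For the induction step, suppose $t_2 \leftarrowdbl t \rightarrowdbl t_3$. If either reduction is empty, the other reduct is already a common reduct, so assume both take at least one step and factor them as $t \to a \rightarrowdbl t_2$ and $t \to b \rightarrowdbl t_3$. The crucial observation is that one may \emph{not} apply \WCR to the full multi-step peak $t_2 \leftarrowdbl t \rightarrowdbl t_3$, since local confluence only closes single-step peaks. Instead, I would apply \WCR to the one-step local peak $a \leftarrow t \to b$, obtaining a term $c$ with $a \rightarrowdbl c \leftarrowdbl b$.

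The diagram is then tiled in two further stages using the induction hypothesis. Because $t \to a$, the term $a$ is confluent; applying this to $a \rightarrowdbl t_2$ and $a \rightarrowdbl c$ yields $d$ with $t_2 \rightarrowdbl d \leftarrowdbl c$. Because $t \to b$, the term $b$ is confluent; since $b \rightarrowdbl c \rightarrowdbl d$ gives $b \rightarrowdbl d$, applying confluence of $b$ to $b \rightarrowdbl d$ and $b \rightarrowdbl t_3$ yields $e$ with $d \rightarrowdbl e \leftarrowdbl t_3$. Composing, $t_2 \rightarrowdbl d \rightarrowdbl e$ and $t_3 \rightarrowdbl e$, so $e$ is the desired common reduct and $t$ is confluent.

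The main obstacle is conceptual rather than computational: one must resist applying \WCR to the full peak and instead realise that taking a single step down each side lands on strictly smaller terms $a$ and $b$, on which the stronger property \CR is available by the induction hypothesis. Termination is exactly what legitimises this induction; without it the argument breaks, and indeed \WCR does not imply \CR for non-terminating systems.
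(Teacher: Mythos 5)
Your proof is correct and is exactly the standard well-founded-induction argument (Huet's proof) that the paper relies on: the paper states Newman's Lemma without proof as a known result, but its appendix proof of the functorial analogue (\cref{lem:functorial_newmans_lemma}) uses precisely your tiling — apply \WCR to the one-step local peak, then the induction hypothesis twice on the one-step reducts. No gaps; the approach matches.
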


Two common techniques to prove termination are the \emph{polynomial interpretation} over $\N$~\cite[§6.2.2]{Terese_2003} and the \emph{multiset path order}~\cite{SchneiderKamp_etal_2007_Recursive_path_orders} methods.
The idea of polynomial interpretation over $\N$ is to choose a $\Sigma$-algebra $(\N, \brackets{\cdot})$ where every interpretation $\brackets{\op}$ is a monotone polynomial on $\N$.
If each rule $(l, r)$ of a system is strictly decreasing, $\brackets{l}>\brackets{r}$, then termination follows by well-foundedness of $\N$.

\begin{example}
    The TRS in \cref{ex:trs} is terminating.
    To see this, take as polynomial interpretation for example $\brackets{0} = 1$, $\brackets{s(x)} = x + 1$, and $\brackets{x+y} = x + 2y + 1$.
    These polynomials are monotone and every rule is strictly decreasing:
    \begin{align*}
    \brackets{x+0} = x+1 &> x = \brackets{x}, \\
        \qquad \brackets{x+s(y)} = x+2y+3 &> x+2y+2 = \brackets{s(x+y)}.
    \end{align*}
\end{example}

The multiset path order method uses a decreasing sequence of multisets to show termination. We explain this briefly in the appendix.

A common technique for proving local confluence is to prove convergence of \emph{critical pairs}~\cite[§2.7]{Terese_2003}.
Informally, a critical pair is formed when two rewrite rules can be applied to the same term while overlapping on a function symbol, creating two different outcomes.
A critical pair  \emph{converges} if the two resulting terms can be rewritten to a common term.

\begin{lemma}[Critical pair lemma]
    \label{lem:critical_pair_lemma}
    A TRS is locally confluent {\normalfont(\WCR)} if and only if all its critical pairs converge.
\end{lemma}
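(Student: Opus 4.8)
The plan is to prove both implications, with the right-to-left direction carrying all the work. Throughout I assume the two rules involved in a peak have been renamed apart so that they share no variables, and I use the closure of $\to_\mathcal{R}$ under substitution and context from \cref{def:TRS}.

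\textbf{Forward direction.} By construction, every critical pair $(s,u)$ is witnessed by a concrete local peak: it is exactly the pair of terms obtained by rewriting a single overlap term $t$ in two ways, so $s \leftarrow t \to u$. If the system is \WCR, this peak must converge, which is precisely the statement that the critical pair converges. Hence \WCR implies convergence of all critical pairs.

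\textbf{Backward direction.} Assume all critical pairs converge, and take an arbitrary local peak $s \leftarrow t \to u$, where the two steps rewrite $t$ at positions $p_1$ and $p_2$ using rules $l_1 \to r_1$ and $l_2 \to r_2$ with matching substitutions $\sigma_1, \sigma_2$. I would do a case analysis on the relative position of $p_1$ and $p_2$. \emph{Parallel case}: if $p_1$ and $p_2$ are incomparable, the two redexes are disjoint, so the steps commute—rewriting the $p_2$-redex in $s$ and the $p_1$-redex in $u$ yields a common term, closing the peak in one step on each side. \emph{Variable case}: if (without loss of generality) $p_2$ lies strictly below $p_1$ and falls inside the part of the term contributed by $\sigma_1$, i.e.\ inside $\sigma_1(x)$ for some variable $x$ occurring in $l_1$, then the inner step can be lifted through the substitution. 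Let $\sigma_1'$ agree with $\sigma_1$ except that $\sigma_1(x)$ is rewritten by the $l_2$-step. From $u$ one reaches $\sigma_1'(r_1)$ (plugged into the outer context) by rewriting each of the possibly several copies of $x$ in $r_1$; from $s$ one rewrites the outer $l_1$-redex from $\sigma_1$ to $\sigma_1'$. Because $x$ may occur any number of times in $r_1$, this uses $\rightarrowdbl$ rather than single steps, and the peak converges. \emph{Function-overlap case}: if $p_2 = p_1 \cdot q$ where $q$ is a non-variable position of $l_1$, then $l_2$ and the subterm of $l_1$ at $q$ are unifiable—$\sigma_1$ and $\sigma_2$ provide a common unifier—so there is a most general unifier $\theta$, and the associated critical pair is obtained by rewriting $\theta(l_1)$ both ways. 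By hypothesis this critical pair converges, and since the actual peak is a substitution instance of it (via the substitution $\rho$ with $\sigma_i = \rho \theta$) placed in the context surrounding $p_1$, closure of $\to_\mathcal{R}$ under substitution and context propagates the convergence to $s$ and $u$.

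Since every local peak falls into one of these three cases and converges in each, the system is \WCR, completing the proof.

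\textbf{Main obstacle.} The crux is the function-overlap case: one must argue rigorously that every genuine overlap of two redexes on a shared function symbol is a substitution instance of a critical pair. This relies on the existence of a most general unifier (so that the finitely many critical pairs suffice to witness all overlaps) together with the closure properties of the rewrite relation. The variable case is conceptually simpler but technically fiddly, since a variable of $l_1$ may occur zero, one, or many times in $r_1$, which is exactly why the closing arrows must be taken as $\rightarrowdbl$ rather than single $\to$-steps.
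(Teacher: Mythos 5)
Your proof addresses a statement that the paper itself does not prove: \cref{lem:critical_pair_lemma} is the classical Knuth--Bendix/Huet critical pair lemma, which the authors simply cite from Terese \cite[\S 2.7]{Terese_2003} (the appendix only gives an informal description of critical pairs, and the only proof of this flavour in the paper is for the \emph{functorial} variant, \cref{lem:functorial_critical_pair_lemma}, which itself defers the key step to Terese's Lemma~2.7.12). Your sketch is the standard textbook argument and is correct in structure: the forward direction is immediate since every critical pair is witnessed by an actual peak, and the backward direction goes by the usual trichotomy on the relative positions of the two redexes (disjoint, variable overlap, function-symbol overlap), with the function-overlap case reduced to a critical pair via a most general unifier and then propagated by closure under substitution and context. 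This is exactly the route the cited source takes, so there is no methodological divergence to report.

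Two small points of precision, neither fatal. First, in the variable-overlap case your bookkeeping is slightly garbled: if the inner redex sits inside $\sigma_1(x)$ for some occurrence of $x$ in $l_1$, then from $s = C[\sigma_1(r_1)]$ one rewrites $\sigma_1(x)$ to $\sigma_1'(x)$ in each of the (possibly zero or many) occurrences of $x$ in $r_1$, while from $u$ one must first rewrite the \emph{remaining} occurrences of $x$ in $l_1$ (since $l_1$ need not be linear) before firing the outer rule; your text attributes the $r_1$-side rewrites to the wrong branch and omits the non-linearity of $l_1$, though the conclusion that both sides need $\rightarrowdbl$ rather than $\to$ is right. Second, the set of critical pairs conventionally excludes the trivial root overlap of a rule with (a renamed copy of) itself, so the function-overlap case needs the extra observation that such a peak gives $s = u$ and converges trivially. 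With those repairs the argument is the standard complete proof.
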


\section{Composite Theories}
\label{sec:composite_theories}

We introduce the concept of \emph{composite theories}. Our definition is slightly different from, but equivalent to, the original definition by Pir\'og \& Staton~{\cite[Def.~3]{Pirog_Staton_2017}} and equivalent formulations in Zwart's thesis \cite[Def.~3.2, Prop.~3.4]{Zwart_2020}.
\vspace*{-.2\baselineskip}

\begin{definition}\label{def:composite_theory}
    Let $\bbU, \bbS, \bbT$ be algebraic theories.
    Suppose $\bbU$ contains $\bbS$ and $\bbT$, meaning $\Sigma_\bbS, \Sigma_\bbT \subseteq \Sigma_\bbU$ and $E_\bbS, E_\bbT \subseteq E_\bbU$.
    \begin{itemize}[itemsep=2pt, topsep=2pt]
        \item 
        A $\bbU$-term is \myemph{separated} if it is of the form $t[s_x /x]$, where $t$ is a $\bbT$-term and $\setvbar{s_x}{x \in \var(t)}$  is a family of $\bbS$-terms.
        
        \item
        Two separated terms $t[s_x]$ and $t'[s'_y]$ are \myemph{equal \modST} if their $TS$-equivalence classes are equal in $TS\variables$: $\smash{\eqT{t[\eqS{s_x}]} = \eqT{t'[\eqS{s'_y}]}}$.
        
        \item
        $\bbU$ is a \myemph{composite theory} of $\bbT$ after $\bbS$ if every $\bbU$-term $u$ is equal to a separated term $u \theoryeq{U} t[s_x / x]$, that we call a \myemph{separation} of $u$,
        and for any two separated terms $v,v'$, if $v \theoryeq{U} v'$ then $v$ and $v'$ must be equal \modST.
    \end{itemize}
\end{definition}

\begin{proposition}
    \label{prop:essuniqformulations}
    \label{lem:same_equivalence_class_TSX_implies_a_to_d}
    For any two separated terms $t[s_x /x]$ and $t'[s_y /y]$ in a composite theory, the following are equivalent:
    \begin{enumerate}[itemsep=1pt]
        \item \label{it:eqclass-es}
        $t[s_x /x]$ and $t'[s'_y /y]$ are equal \modST in the sense of \cref{def:composite_theory}.

        \item \label{it:dan-es} 
        $t[s_x /x]$ and $t'[s'_y /y]$ are equal \modST in the sense of \cite[Def.~3.2]{Zwart_2020}.
        \customqed\footnote{
            The symbol $\blacksquare$ denotes that the proof is in the Appendix.   
        }
    \end{enumerate}
\end{proposition}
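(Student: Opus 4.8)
The plan is to unfold both notions of \qmarks{equal \modST} and show they cut out exactly the same condition on the pair of separated terms. Item~\ref{it:eqclass-es} (\cref{def:composite_theory}) expresses equality directly as an identity of equivalence classes in the set $TS\variables$, whereas Zwart's Def.~3.2 gives a more syntactic characterization in terms of derivations that apply $E_\bbS$ to the inner $\bbS$-parts and $E_\bbT$ to the outer $\bbT$-skeleton. The proof is therefore essentially a translation between the semantic and the syntactic presentations of the same object, and I expect it to run by unwinding the free-algebra descriptions of $S$ and $T$.

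First I would spell out the element of $TS\variables$ determined by a separated term $t[s_x/x]$: each inner $\bbS$-term $s_x$ determines its class $\eqS{s_x} \in S\variables$, and substituting these classes into $t$ yields a $\bbT$-term over the set $S\variables$ whose $\theoryeq{T}$-class is $\eqT{t[\eqS{s_x}]}$. Using the free algebra monad description, equality of two such classes means precisely that the two $\bbT$-terms over the alphabet $S\variables$ are derivably equal from $E_\bbT$ in equational logic, where each distinct element of $S\variables$ is treated as an atomic variable. For the direction \ref{it:eqclass-es}~$\Rightarrow$~\ref{it:dan-es}, I would take such a $TS\variables$-equality and decompose the witnessing $E_\bbT$-derivation step by step: each step is either an application of an $E_\bbT$-axiom to the $\bbT$-skeleton, or it exploits that two leaves are the \emph{same} element of $S\variables$, i.e. an $\bbS$-equality $s \theoryeq{S} s'$ of the corresponding inner terms. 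Reading these off should produce exactly the data required by Zwart's definition. For the converse, each $E_\bbS$-step on an inner part leaves the $S\variables$-class of that leaf unchanged, and each $E_\bbT$-step on the skeleton is a legitimate $\theoryeq{T}$-step over $S\variables$, so the total derivation witnesses equality of the two classes in $TS\variables$.

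The main obstacle I anticipate is the bookkeeping around variable identification at the $\bbS/\bbT$ interface: two syntactically different inner terms $s_x, s_{x'}$ may be $\theoryeq{S}$-equal and hence collapse to the same element of $S\variables$, which at the $\bbT$-layer means they may be treated as occurrences of a single variable, enabling $\bbT$-equations (contraction, permutation or deletion of variables) that would otherwise not apply. Matching Zwart's syntactic handling of this collapse with the quotient $S\variables$ is the crux of the argument. I would handle it by factoring the substitution $t \mapsto t[\eqS{s_x}]$ through the quotient $\term{\Sigma_\bbS}{\variables} \twoheadrightarrow S\variables$ and invoking the universal property of the free $\bbT$-algebra to commute $\theoryeq{T}$-derivations with this quotient, so that a derivation over $S\variables$ lifts to, and descends from, the combined $E_\bbS \cup E_\bbT$ reasoning demanded by Def.~3.2. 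The remaining verifications are routine inductions on the length of the derivations, and I would record them only to the extent needed to confirm that no interaction (mixing) axiom is ever used, which is what keeps both notions strictly \modST.
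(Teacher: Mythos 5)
Your proposal is correct and follows essentially the same route as the paper: both directions hinge on using the quotient set $S\variables$ itself as the gluing set $Z$, i.e.\ factoring through $\term{\Sigma_\bbS}{\variables} \twoheadrightarrow S\variables$ with $h(x) = \eqS{s_x}$ and $h'(y) = \eqS{s'_y}$, and on closure of $\theoryeq{T}$ under substitution for the converse. The one difference is that the paper's forward direction requires no induction on the witnessing $E_\bbT$-derivation at all --- with this choice of $Z$ the skeleton condition is literally the hypothesis and the remaining conditions on the inner $\bbS$-terms are immediate from the definition of the quotient --- so the step-by-step decomposition you single out as the crux is unnecessary overhead rather than the heart of the argument.
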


\begin{example}
    Two $\bbS$-terms $s$ and $s'$ are equal \modST if and only if $s \theoryeq{S} s'$, and similarly for $\bbT$-terms.
\end{example}

\begin{example} \label{ex:monoid_abelian_ring}
    The prime example of a composite theory is the theory of rings $\bbU \defeq \mathsf{Ring}$.
    It contains the theories $\bbS \defeq \mathsf{Mon}$ of monoids and $\bbT \defeq \mathsf{AbGrp}$ of Abelian groups.
    We recall their signatures to fix notation: $\Sigma_\mathsf{Mon} \defeq \set{\cdot^{(2)}, 1^{(0)}}$ and $\Sigma_\mathsf{AbGrp} \defeq \set{0^{(0)}, +^{(2)}, -^{(1)}}$.
    We sometimes omit the \qmarks{multiplication} symbol $\cdot$ for simplicity.
    The signature of rings is given by $\Sigma_\mathsf{Ring} \defeq \Sigma_\mathsf{Mon} \uplus \Sigma_\mathsf{AbGrp}$, and the equations are given by the equations of monoids, Abelian groups, and two distributivity axioms:
    \[
        E_\mathsf{Ring} \defeq E_\mathsf{Mon} \cup E_\mathsf{AbGrp} \cup
        \left\{
        \begin{aligned}
            x (y + z) &= (xy) + (xz), \\
            (y + z)x &= (yx) + (zx)
        \end{aligned}
        \right\}.
    \]
    
    A separated term $t[s_x/x]$ in $\mathsf{Ring}$ is an Abelian group term $t$, with  monoid terms $\set{s_x}$ substituted for its variables.
    We give some examples of non-separated terms, of possible separations for them, and of equality %\modST 
    modulo $(\mathsf{Mon}, \mathsf{AbGrp})$ between the separations.
    The term $x (y + z)$ is non-separated.
    Possible separations are e.g.~$xy + xz$ and $(xy + xz) + 0$.
    Both are equal modulo $(\mathsf{Mon}, \mathsf{AbGrp})$, as their monoid parts are identical and their Abelian group parts $t = (x_1 + x_2) + 0$ and $t' = x_1 + x_2$ are equal in the theory of Abelian groups.

    The term $x \cdot 0$ is also non-separated.
    It is equal in $\Ring$ to the separated terms $0$ and $(1\cdot x) + (-(x \cdot 1))$. 
    To see that these separations are equal modulo $(\mathsf{Mon}, \mathsf{AbGrp})$, notice that $1 \cdot x \theoryeq{\mathsf{Mon}} x \cdot 1$, and that the terms $0$ and $x_1 + (-x_2)$ are equal in Abelian groups when $x_1 = x_2$. Hence: $\eqclass{0}{\mathsf{AbGrp}} = \eqclass{(\eqclass{1\cdot x}{\mathsf{Mon}}) + (-(\eqclass{x \cdot 1}{\mathsf{Mon}}))}{\mathsf{AbGrp}}$.
\end{example}

We now show that distributive laws between monads correspond one-to-one to composite theories.

\section{From Composite Theory to Distributive Law}
\label{sec:composite_theory_=>_dl}

We first show how to construct a distributive law from a given composite theory.

\begin{theorem}\cite[Theorem 3.8]{Zwart_2020}
    \label{thm:composite_=>_dl}
    Let $\bbS, \bbT$ be algebraic theories with free algebra monads $S,T$ respectively.
    Let $\bbU$ be a composite theory of $\bbT$ after $\bbS$, with free algebra monad $U$.
    Then the following defines a distributive law $\lambda : ST \Rightarrow TS$ such that $\bbU$ is an algebraic presentation of the resulting monad $TS$, where $t'[s'_x]$ is a separation of $s[t_x]$:
    \[
        \lambda_\variables : ST\variables \to TS\variables : \quad
        \eqclass{s[\eqclass{t_x}{\bbT}/x]}{\bbS} \mapsto \eqclass{t'[\eqclass{s'_x}{\bbS}/x]}{\bbT}
    \]
\end{theorem}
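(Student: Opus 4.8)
The plan is to factor $\lambda$ through the free $\bbU$-algebra monad $U$ and to transport the monad structure of $U$ onto $TS$. First I would record the map that turns an $\bbS$-term of $\bbT$-terms into a $\bbU$-term. The theory inclusions $\bbS,\bbT\subseteq\bbU$ give monad morphisms $S\Rightarrow U$ and $T\Rightarrow U$, and whiskering these and post-composing with $\mu^U$ yields a natural transformation $m:ST\Rightarrow U$, concretely $m_X\big(\eqclass{s[\eqclass{t_x}{\bbT}/x]}{\bbS}\big)=\eqclass{s[t_x/x]}{\bbU}$. This $m$ is well defined precisely because $E_\bbS,E_\bbT\subseteq E_\bbU$, so that every $\bbS$- and every $\bbT$-equality between chosen representatives is already a $\bbU$-equality.

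Next I would build the \emph{separation map} $\phi_X:UX\to TSX$ sending $\eqclass{u}{\bbU}$ to $\eqT{t[\eqS{s_x}/x]}$ for any separation $u\theoryeq{U}t[s_x/x]$. That $\phi_X$ is well defined is exactly the content of \cref{def:composite_theory}: separations exist, and any two separations of the same $\bbU$-term are equal \modST, i.e.\ determine the same class in $TSX$. Moreover $\phi_X$ is a bijection. Surjectivity is clear, since every element $\eqT{t[\eqS{s_x}]}$ of $TSX$ is the image of $\eqclass{t[s_x]}{\bbU}$; injectivity follows from the converse implication that two separated terms equal \modST are already $\theoryeq{U}$-equal, which is where the inclusions $E_\bbS,E_\bbT\subseteq E_\bbU$ together with closure of equational logic under substitution are used. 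With these two pieces in hand, the stated map is $\lambda=\phi\circ m$, so its well-definedness and naturality reduce to those of $m$ and $\phi$ (all computations may be checked on elements, as we work over $\Set$).

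It then remains to verify the four distributive-law axioms. The two unit axioms are immediate: unwinding the definitions, $\eta^S T$ and $S\eta^T$ produce an outer term that is a single variable, whose separation is itself, so $\lambda$ returns $T\eta^S$ and $\eta^T S$ respectively. The two multiplication axioms are the crux. The guiding idea is that both composites in, say, the $\bbS$-axiom start from a two-layer $\bbS$-term of $\bbT$-terms and land in $TSX$, and in each case the output equals $\phi_X$ applied to one and the same underlying $\bbU$-term. On the left, $\mu^S$ first flattens the two $\bbS$-layers and then $\lambda$ separates once; on the right, $S\lambda$ separates the inner layer, $\lambda_{SX}$ separates the outer layer, and $T\mu^S$ re-flattens the two resulting $\bbS$-layers. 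Since each route yields a separation of the same $\bbU$-term, uniqueness of separation \modST forces the two outputs to coincide in $TSX$; the $\bbT$-multiplication axiom is symmetric. I expect this bookkeeping to be the main obstacle: one must match the underlying $\bbU$-terms of the staged separations through nested substitutions (managed via \cref{rem:wlog_distinct_variables}), which is the only place where one genuinely argues that \qmarks{separate in stages} agrees with \qmarks{separate at once}.

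Finally, for the presentation claim I would promote the bijection $\phi:U\to TS$ to a monad isomorphism. Beck's construction equips $TS$ with unit $\eta^T\eta^S$ and multiplication $\mu^{TS}=(\mu^T\mu^S)(T\lambda S)$, as in \eqref{eq:monads_composition_obtained_from_a_distributive_law}; I would check $\phi\cdot\eta^U=\eta^T\eta^S$ (the one-variable term separates to itself) and $\phi\cdot\mu^U=\mu^{TS}\cdot(\phi\phi)$, the latter being the same staged-versus-once separation computation used for the multiplication axioms. A monad isomorphism $U\isom TS$ is, by \cref{def:algebraic_presentation}, exactly the assertion that $\bbU$ is an algebraic presentation of the resulting monad $TS$.
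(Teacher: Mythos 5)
Your proposal is correct, but it takes a genuinely different route from the paper's. You construct $\lambda$ directly as the composite of a flattening map $m\colon ST \Rightarrow U$ with the separation bijection $\phi\colon U \to TS$, and then verify the four distributive-law axioms by hand: the unit laws are immediate, and for the multiplication laws you argue that both sides compute the $TS$-class of some separation of one and the same underlying $\bbU$-term, so essential uniqueness \modST forces agreement. That argument is sound --- each intermediate step ($S\lambda$, $\lambda S$, $T\mu^S$, \dots) preserves $\bbU$-equality of underlying representatives, and the endpoints lie in $TS\variables$ where the composite-theory axiom pins the class down --- but, as you acknowledge, the nested-substitution bookkeeping needed to match ``separate in stages'' with ``separate at once'' is where all the real work of your version lives, and naturality of $\phi$ (separation commutes with renaming) also needs to be checked. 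The paper sidesteps exactly this step by invoking Beck's alternative characterisation of distributive laws: it transports the monad structure of $U$ onto $TS$ by setting $\mu^{TS} \defeq \phi\cdot\mu^U\cdot\psi\psi$, and then only has to check that $(TS,\eta^T\eta^S,\mu^{TS})$ is a monad, that $\eta^T S$ and $T\eta^S$ are monad morphisms, and that the middle unitary law holds --- all of which follow formally from naturality, the monad laws of $U$, and $\phi,\psi$ being mutually inverse, with no term-level case analysis. The law is then recovered as $\lambda = \mu^{TS}\cdot\eta^T S T\eta^S$ and a single elementwise calculation confirms the stated formula. What your approach buys is a self-contained, hands-on verification independent of Beck's theorem; what the paper's buys is that the two multiplication axioms never have to be confronted on terms, and the presentation claim $U \isom TS$ falls out of the construction of $\mu^{TS}$ rather than requiring the separate check $\phi\cdot\mu^U = \mu^{TS}\cdot\phi\phi$, which in your version is once more the staged-versus-once argument.
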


\begin{proof}
    Instead of directly checking the axioms for a distributive law, we prove an equivalent characterisation given by Beck~\cite[p.122]{Beck_1969}.
    That is, we claim that there exist a natural transformation $\mu^{TS} : TSTS \Rightarrow TS$ such that: 
    
    \begin{enumerate}[label=(\roman*)]
        \item 
        $(TS, \eta^{TS} \defeq \eta^T\eta^S, \mu^{TS})$ is a monad.
        
        \item 
        The natural transformations $\eta^T S$ and $T\eta^S$ are monad morphisms.
        
        \item 
        The middle unitary law holds: $\mu^{TS}\cdot T\eta^S\eta^TS = Id_{TS}$.
    \end{enumerate}

    It follows then that the monad $(TS, \eta^T\eta^S, \mu^{TS})$ does indeed come from a distributive law, which is given by:
    \(
        \lambda = \mu^{TS} \cdot \eta^T ST \eta^S
    \)
    A simple but tedious calculation shows that indeed $\lambda (\eqclass{s[\eqclass{t_x}{\bbT}/x]}{\bbS} ) = \eqclass{t'[\eqclass{s'_x}{\bbS}/x]}{\bbT}$.
    The details of this calculation are in the appendix.

    To define $\mu^{TS}$, we use the fact that the \textit{functors} $U$ and $TS$ are isomorphic.
    Indeed, since $\bbU$ is a composite theory, every $\bbU$-term $u$ 
    has a separation $u \theoryeq{U} t[s_x/x]$.
    Hence $\phi : U \Rightarrow TS$ and $\psi : TS \Rightarrow U$ given below are inverse natural transformations. 
    Using $\phi, \psi$, and the multiplication $\mu^U$, we can then define $\mu^{TS}$. 
    \begin{align}
        \phi(u) &\defeq 
        \eqclass{t[\eqclass{s_x}{\bbS}/x]}{\bbT}
        \label{eq:def_phi} \\
        \psi(\eqclass{t[\eqclass{s_x}{\bbS}/x]}{\bbT}) &\defeq 
        \eqclass{t[s_x/x]}{\bbU} 
        \label{eq:def_psi}\\
        \mu^{TS} &\defeq \label{eq:def_muTS}
        \big( 
            TSTS \xrightarrow{\smash{\psi\psi}} 
            UU \xrightarrow{\smash{\mu^U}} 
            U \xrightarrow{\smash{\phi}} 
            TS 
        \big).
    \end{align}
    The proofs of $(i)$-$(iii)$ are in the appendix.
\end{proof}

\section{From Distributive Law to Composite Theory}
\label{sec:dl_=>_composite_theory}

We now show how to construct a composite theory from a given distributive law.

\begin{theorem}
    \label{def:U_lambda}
    \label{thm:dl_=>_composite_theory}
    Let $S,T$ be two monads algebraically presented by two algebraic theories $\bbS$ and $\bbT$, respectively.
    Let $\lambda : ST \Rightarrow TS$ be a distributive law.
    We define a set $E_\lambda$ of equations and a theory $\bbU^\lambda$ as follows~\cite[Definition 3.8]{Zwart_2020}.
    \begin{align*}
        E_\lambda \defeq 
        \Big\{
            \big(s[t_x/x],t[s_y/y]\big) 
            & \; \vert \; \lambda_\variables\big( \eqclass{s[\eqclass{t_x}\bbT/x]}\bbS \big) = \eqclass{t[\eqclass{s_y}\bbS/y]}\bbT \big) 
        \Big\}. \\
        \Sigma_{\bbU^\lambda} &\defeq \Sigma_\bbS \uplus \Sigma_\bbT, \\
        E_{\bbU^\lambda} &\defeq E_\bbS \cup E_\bbT \cup E_\lambda.
    \end{align*}
    Then, $\bbU^\lambda$ is a composite theory of $\bbT$ after $\bbS$.
\end{theorem}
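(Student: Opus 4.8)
The statement to establish has exactly two halves, matching the two bullet points of \cref{def:composite_theory}. First, \emph{separation exists}: every $\bbU^\lambda$-term $u$ must satisfy $u =_{\bbU^\lambda} t[s_x/x]$ for some $\bbT$-term $t$ and family of $\bbS$-terms $s_x$. Second, \emph{soundness}: whenever two separated terms $v = t[s_x]$ and $v' = t'[s'_y]$ satisfy $v =_{\bbU^\lambda} v'$, they must be equal \modST, i.e.\ $\eqT{t[\eqS{s_x}]} = \eqT{t'[\eqS{s'_y}]}$ in $TS\variables$. The plan is to treat these two obligations separately.

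For separation, the plan is to induct on the structure of $u$. Variables are already separated, and if $u = \op(u_1,\dots,u_n)$ with $\op \in \Sigma_\bbT$ then the separations $u_i =_{\bbU^\lambda} t_i[s_{i,x}]$ supplied by the induction hypothesis combine directly, since $\op(t_1,\dots,t_n)$ is again a $\bbT$-term whose leaves are the $\bbS$-terms $s_{i,x}$. The only real work is the case $\op \in \Sigma_\bbS$: here $\op(t_1,\dots,t_n)$ is a single $\bbS$-operation sitting on a block of $\bbT$-structure, i.e.\ it names the class $\eqS{\op(\eqT{t_x})} \in ST\variables$. Choosing a separated representative $t'[s'_z]$ of $\lambda_\variables\big(\eqS{\op(\eqT{t_x})}\big) \in TS\variables$ yields, by the very definition of $E_\lambda$, an axiom $\big(\op(t_1,\dots,t_n),\, t'[s'_z]\big) \in E_\lambda \subseteq E_{\bbU^\lambda}$; closing under substitution then absorbs the inner $\bbS$-terms $s_{i,x}$ into the $\bbS$-terms $s'_z$, producing a separation of $u$. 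The systematic incarnation of this commuting step is the \emph{functorial rewriting system} announced in the introduction, which abstracts a term to the string of $S$/$T$ layers it traverses and orients the moves $ST \to TS$ (a $\lambda$-step) together with $SS \to S$ and $TT \to T$ (the two multiplications), so that normal forms are precisely the separated strings $TS$. The subtlety that forces this extra machinery — and the reason a bare orientation of $E_\lambda$ is insufficient — is \textbf{termination}: the rule $ST \to TS$ does not decrease term size, so one must supply a well-founded measure (for instance counting $S$-before-$T$ inversions, combined with the length collapse induced by the multiplications) at the level of functor strings, then lift it back and invoke \cref{lem:newmans_lemma} and \cref{lem:critical_pair_lemma} for confluence.

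For soundness, the plan is to exhibit a single $(\Sigma_{\bbU^\lambda}, E_{\bbU^\lambda})$-algebra that separates non-trivially-distinct classes, namely the carrier $TS\variables$. By Beck's isomorphism $\EM{TS} \isom \catalg{\lambda}$, the free $TS$-algebra on $\variables$ carries a canonical $\lambda$-algebra structure $(TS\variables, \sigma, \tau)$. Since $S$ and $T$ are the free algebra monads of $\bbS$ and $\bbT$, reading $\sigma$ as an $\bbS$-algebra and $\tau$ as a $\bbT$-algebra makes $TS\variables$ a $\Sigma_{\bbU^\lambda}$-algebra that already satisfies $E_\bbS \cup E_\bbT$; crucially, the compatibility square of \cref{def:lambda_algebra} is exactly the identity that forces each axiom of $E_\lambda$ to hold, so $TS\variables$ is a model of all of $E_{\bbU^\lambda}$. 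Taking the variable assignment $\iota \defeq \eta^{TS}_\variables$, a direct computation using the free-algebra structure shows that $\brackets{t[s_x]}_\iota = \eqT{t[\eqS{s_x}]}$ for every separated term. Hence if $v =_{\bbU^\lambda} v'$, soundness of equational logic gives $\brackets{v}_\iota = \brackets{v'}_\iota$, which unwinds to $\eqT{t[\eqS{s_x}]} = \eqT{t'[\eqS{s'_y}]}$, i.e.\ equality \modST.

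The hard parts, as I see them, are concentrated in the two flagged places. In the first half the genuine obstacle is termination of the functorial rewriting (existence of separations alone is cheap via the induction above, but the systematic normal-form argument — which is also what lets one read off $=_{\bbU^\lambda}$ as equality of normal forms — rests on a nontrivial termination measure on functor strings). In the second half the delicate point is verifying that the $\lambda$-algebra square validates \emph{every} equation of $E_\lambda$ under arbitrary assignments, and confirming that the interpretation of a separated term in $TS\variables$ agrees with its intended $TS$-class; both are where the distributive-law axioms of \cref{def:distributive_law} are actually consumed.
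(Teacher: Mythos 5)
Your proposal is correct, but the second half takes a genuinely different route from the paper. For separation, your structural induction on $u$ (absorbing the inner $\bbS$-terms by the substitution rule after applying an $E_\lambda$ axiom at the topmost $\bbS$-symbol) is a sound, more elementary alternative to the paper's \cref{lem:dl_=>_composite_theory_lem1}, which instead builds a canonical function $\sep$ via $\type$, $\inject$ and the functorial rewriting system $\calRsep$; your suggested termination measure (inversions plus length collapse) differs from the paper's polynomial interpretation $\brackets{S}(x)=2x+1$, $\brackets{T}(x)=x+1$ but would work. The real divergence is in soundness: the paper proves \cref{lem:modulo_S_T} \emph{syntactically}, by induction on the equational-logic derivation tree of $v \theoryeq{U^\lambda} v'$, showing at every node that $\sep$ of the two sides agree \modST --- and this is exactly where the \CRc property of $\calRsep$ (hence the whole functorial-rewriting apparatus, functorial Newman's lemma and the critical-pair analysis) is consumed. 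You instead argue \emph{semantically}: $TS\variables$ with its $\lambda$-algebra structure is a model of $E_{\bbU^\lambda}$ (the pentagon of \cref{def:lambda_algebra} plus naturality of $\lambda$ validates every $E_\lambda$ equation under arbitrary assignments), and evaluating separated terms at $\iota = \eta^{TS}_\variables$ recovers their $TS$-class, so soundness of equational logic finishes the job. This is correct and shorter, and it makes the well-definedness of $\sep$ (hence \CRc) unnecessary for this theorem; interestingly, the key ingredient you need --- that every $\lambda$-algebra satisfies $E_\lambda$ --- is verified by the paper in the appendix, but only later and for a different purpose, namely the presentation result \cref{thm:composite_theory_from_distributive_law}. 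What the paper's heavier syntactic route buys is an explicit, reusable separation procedure: $\sep$ and the rewriting viewpoint are what drive the minimal-axiomatisation results of \cref{lem:simplification_E_lambda} and Section 6, which your semantic argument would not provide.
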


To prove \cref{thm:dl_=>_composite_theory}, we observe that every $\bbU^\lambda$-term $u$ can be assigned a regular expression $\type(u)$ of the form $(S \cup T)^*\variables$ expressing how $u$ nests $\bbS$ and $\bbT$ operation symbols. 
We give an example below in \cref{ex:type_inject}.
We obtain a $TS$-separated term by first mapping $u$ to the equivalence class $\eqclass{u}{}$ in $\type(u)$, now viewed as a set.
We then apply $\lambda, \mu^S$ and $\mu^T$ to $\eqclass{u}{}$ until we reach an equivalence class $\xi \in TS\variables$. 
The axioms of the three natural transformations ensure that $\xi$ does not depend on the order in which they were applied.
Finally, by the axiom of choice we can choose a representative of $\xi$ which serves as a separation of $u$.

The termination of the procedure of applying $\lambda, \mu^S$ and $\mu^T$ and the uniqueness of $\xi$ are intuitively clear, yet showing it formally is not trivial. 
In the following definitions we formalise the separation procedure that we described here. We then give a proof of termination using rewriting techniques.
We denote string concatenation with \qmarks{$::$}.

\begin{definition}
    We define a function $\type : \Sigma^*_{\bbU^\lambda}\variables \to \set{S,T}^*\variables$ recursively:
    \begin{itemize}
        \item 
        For $v \in \variables$, then $\type(v) \defeq \variables$.

        \item 
        For $s[u_1, \ldots, u_n]$, where $s \in \term{\Sigma_\bbS}{\variables}$, and $u_1, \ldots, u_n \in \Sigma^*_{\bbU^\lambda}\variables$ such that their root is not an $\bbS$-symbol, let $w$ be longest word in the set $\set{\type(u_1), \ldots, \type(u_n)}$, then ${\type(s[u_1,\ldots, u_n]) \defeq S :: w}$.

        \item 
        The $t[u_1, \ldots, u_n]$ case, where $u_1, \ldots, u_n$ do not start with a $\bbT$-symbol, is dual.
    \end{itemize}
\end{definition}

\begin{definition}
    We recursively define a dependent partial function 
    \[
        \textstyle \inject: \smash{\prod_{(u,w) \in \Sigma^*_{\bbU^\lambda}\variables \times \set{S,T}^*\variables}} \rightharpoonup w.
    \]
    \begin{itemize}
        \item 
        For $v \in \variables$, $\inject(v,\variables) \defeq v$, and 
        \begin{align*}
         \inject(v,S::w') & \defeq \eqS{\inject(v,w')} &  \inject(v,T::w')  & \defeq \eqT{\inject(v,w')}
        \end{align*}
        
        \item 
        For $s[u_1, \ldots, u_n]$ where $s \in \Sigma_\bbS^* \variables$, and $u_1, \ldots, u_n \in \Sigma^*_{\bbU^\lambda}\variables$ being either variables or having root symbols in $\Sigma_\bbT$, 
        \[
            \inject(s[u_1, \ldots, u_n], S::w') \defeq \eqclass{s[\inject(u_1,w'), \ldots, \inject(u_n,w')]}\bbS.
        \]
        
        \item 
        The $t[u_1, \ldots, u_n]$ case, where $u_1, \ldots, u_n$ do not start with a $\bbT$-symbol, is dual.
    \end{itemize}
\end{definition}

\noindent\begin{minipage}{.8\linewidth}
\begin{example}
    \label{ex:type_inject}
    Take the operations $s^{(2)}, s'^{(1)} \in \Sigma_\bbS$, and $t^{(1)} \in \Sigma_\bbT$. For $u \defeq s(s(x,t(x)),t(s'(s(x,x))))$, we have
    \begin{align*}
        w &\defeq \type(u) = STS\variables. \\
        \zeta &\defeq \inject(u,w) = \eqS{s(s(\,\eqT{\eqS{x}}, \eqT{t(\eqS{x})}), \eqT{t(\eqS{s'(s(x,x))})} )}.
    \end{align*}
\end{example}
\end{minipage}
\hfill
\begin{minipage}{.19\linewidth}
    \hfill%
    % \include{tree}
    % \includegraphics[scale =.8]{tree.pdf}
    % colors
\definecolor{cblue}{rgb}{0,0.4,0.7}
\definecolor{clighterblue}{rgb}{0,0.6,1.0}
\colorlet{cred}{red}
\colorlet{cgreen}{green!80!black}
\colorlet{corange}{orange!70!red}
\colorlet{cpureorange}{orange}
\colorlet{cpurple}{clighterblue!50!cred}

% light colors
\colorlet{clightblue}{clighterblue!50!cblue!40}%{cblue!70!blue!30}
\colorlet{clightred}{cred!40}
\colorlet{clightgreen}{cgreen!80!cblue!40}
\colorlet{clightyellow}{corange!40!yellow!50}
\colorlet{clightorange}{cred!50!orange!40}
\colorlet{clightpurple}{clighterblue!50!cred!50}

% dark colors
\colorlet{cdarkred}{cred!70!black}
\colorlet{cdarkgreen}{cgreen!60!black}
\colorlet{cdarkblue}{cblue!60!black}

\begin{tikzpicture}[
  baseline=-10mm,-,
  level distance=6mm,
  level 1/.style={sibling distance=15mm},
  level 2/.style={sibling distance=8mm},
  s/.style={fill=cblue!20},
  t/.style={fill=cgreen!20},
  scale=.8,
  every node/.style={transform shape}
  ]
  \node (s) {$s$}
    child {node (s') {$s$}
      child {node (x0) {$x$}}
      child {node (t0) {$t$}
        child {node (x1) {$x$}}
      }
    }
    child {node (t) {$t$}
      child {node (s'') {$s'$}
        child {node (s''') {$s$}
          child {node (x2) {$x$}}
          child {node (x3) {$x$}}
        }
      }
    };
  
  \begin{scope}[rounded corners=2mm,on background layer]
    \draw [s] (s.north) to (s.north west) to (s'.north west) to (s'.south west) to (s'.south east) to (s.south east) to (s.north east) to (s.north);
    \draw [s] (s''.north) to (s''.north west) to (x2.south west) to (x3.south east) to (s''.north east) to (s''.north);
    \draw [t] (t.north) to (t.north west) to (t.south west) to (t.south east) to (t.north east) to (t.north);
    \draw [t] (t0.north) to (t0.north west) to (t0.south west) to (t0.south east) to (t0.north east) to (t0.north);
    \draw [s] (x1.north) to (x1.north west) to (x1.south west) to (x1.south east) to (x1.north east) to (x1.north);
    \draw [t] (x0) circle (3mm);
    \draw [s] (x0.north) to (x0.north west) to (x0.south west) to (x0.south east) to (x0.north east) to (x0.north);
  \end{scope}
\end{tikzpicture}
\end{minipage}
\smallskip

Before we formalise the remainder of the separation procedure, we interpret functors and natural transformations as a term rewriting system.

\begin{definition}
    \label{def:functorial_rewriting_system}
    Let $\Sigma \defeq \setvbar{F_i}{i \in I}$ be a finite set of functors, and $\calR \defeq \setvbar{\alpha_j : L_{j} \to R_{j}}{L_j, R_j \in \Sigma^*, j \in J}$ be a finite set of natural transformations.
    We call $(\Sigma, \calR)$ a \myemph{functorial rewriting system}.
\end{definition}

The name \qmarks{functorial rewriting system} is motivated by seeing each natural transformation $\alpha : L \to R$ as a rewrite rule on strings of functors in $\Sigma^*$ as follows. For any strings of functors $A,B \in \Sigma^*$, we can make the rewrite step 
\begin{equation}
    \label{eq:functorial_rewrite_step}
    ALB \xrightarrow{A \alpha B} ARB,
\end{equation} 
where $A$ is seen as a left-context and $B$ as a right-context.
Note that for a functorial rewriting system $(\Sigma, \calR)$ the only valid rewrite steps are those resulting from natural transformations in $\calR$. If the functors in $\Sigma$ satisfy identities like $FG = H$ that are not represented by some $\beta \in \calR$, then we do not allow rewrite steps that use this identity.

We use the following functorial rewriting system for our separation procedure.

\begin{definition}
    \label{def:functorial_rewriting_system_lambda_muS_muT}
    We define a functorial rewriting system $\calRsep = (\Sigma, R)$, where $\Sigma \defeq \set{S,T}$ and
    \(
        R \defeq \set{
            \circled{1} \, \lambda: ST \to TS, \;
            \circled{2} \, \mu^S: SS \to S, \; 
            \circled{3} \, \mu^T: TT \to T
        }.
    \)
\end{definition}

The functors and natural transformations in a functorial rewriting system carry categorical structure in the form of commuting diagrams. We use this information 
to define a variation
of (local) confluence.

\begin{definition}
    A functorial rewriting system is (read $\circlearrowleft$ as \qmarks{commuting})
    \begin{itemize}[topsep=0pt]
        \item
        \WCRc if for all $T_0 \xleftarrow{\alpha} \cdot \xrightarrow{\beta} T_1$ there exists $T_0 \xrightarrowdbl{\gamma} \cdot \xleftarrowdbl{\delta} T_1$ s.t.~$\gamma\alpha = \delta \beta$.
        
        \item
        \CRc if for all $T_0 \xleftarrowdbl{\mathclap{\alpha}} \cdot \xrightarrowdbl{\beta} T_1$ there exists $T_0 \xrightarrowdbl{\gamma} \cdot \xleftarrowdbl{\delta} T_1$ s.t.~$\gamma\alpha = \delta \beta$.
    \end{itemize}
\end{definition}

\begin{remark}
    When writing $T_0 \rightarrowdbl \cdot \leftarrowdbl T_1$, the common descendent must be the exact same word in $\Sigma^*$.
    No equality property between functors is allowed.
\end{remark}

There are equivalents to Newman's Lemma (\cref{lem:newmans_lemma}) and the Critical Pair Lemma (\cref{lem:critical_pair_lemma}).
The proofs are in the appendix.

\begin{lemma}[Functorial Newman's lemma] 
    \label{lem:functorial_newmans_lemma}
    If a functorial rewriting system is terminating {\normalfont(\SN)} and \WCRc, then it is \CRc.
    \customqed
\end{lemma}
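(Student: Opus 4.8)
The plan is to mimic the classical proof of Newman's lemma (\cref{lem:newmans_lemma}) by well-founded induction, while carrying along enough data about the natural transformations to secure the commutation requirement $\gamma\alpha=\delta\beta$ that distinguishes \CRc from ordinary confluence. Since the system is \SN, the one-step relation $\to$ on $\Sigma^*$ is well-founded, so I would induct on the apex $W$ of a peak $T_0 \xleftarrowdbl{\alpha} W \xrightarrowdbl{\beta} T_1$, where $\alpha\colon W\Rightarrow T_0$ and $\beta\colon W\Rightarrow T_1$ are the vertical composites of the whiskered rules used along each reduction. The induction hypothesis states that \CRc holds for every peak whose apex is a strict reduct of $W$.

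For the base cases, if either reduction is empty---say $\alpha=\id_W$, so that $W=T_0$---I take the common descendant $T_1$ with $\gamma=\beta$ and $\delta=\id_{T_1}$; then $\gamma\alpha=\beta=\delta\beta$, and the case $\beta=\id_W$ is symmetric. For the inductive step both reductions begin with a genuine step, so I factor them as $W \xrightarrow{a} W_0 \xrightarrowdbl{\alpha'} T_0$ and $W \xrightarrow{b} W_1 \xrightarrowdbl{\beta'} T_1$, giving $\alpha=\alpha' a$ and $\beta=\beta' b$. Applying \WCRc to the one-step peak $W_0 \xleftarrow{a} W \xrightarrow{b} W_1$ yields $W_0 \xrightarrowdbl{p} Z \xleftarrowdbl{q} W_1$ with $pa=qb$. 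Since $W_0$ and $W_1$ are strict reducts of $W$, I can invoke the induction hypothesis twice: first on the peak $T_0 \xleftarrowdbl{\alpha'} W_0 \xrightarrowdbl{p} Z$ to obtain $T_0 \xrightarrowdbl{c} U \xleftarrowdbl{d} Z$ with $c\alpha'=dp$, and then on the peak $T_1 \xleftarrowdbl{\beta'} W_1 \xrightarrowdbl{dq} U$ (using $W_1 \xrightarrowdbl{q} Z \xrightarrowdbl{d} U$) to obtain $T_1 \xrightarrowdbl{e} V \xleftarrowdbl{f} U$ with $e\beta'=fdq$.

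Setting $\gamma\defeq fc$ and $\delta\defeq e$ gives reductions $T_0 \xrightarrowdbl{\gamma} V \xleftarrowdbl{\delta} T_1$ into the common word $V$, exactly as in the classical diagram chase. The genuinely new point, and the step I expect to require the most care, is verifying the commutation equation. It follows by chaining the three equalities produced above with the associativity of vertical composition of natural transformations:
\[
\gamma\alpha = fc\alpha' a = f(dp)a = fd\,pa = fd\,qb = (fdq)b = (e\beta')b = e\beta = \delta\beta.
\]
Here the crucial substitutions are $c\alpha'=dp$ (first IH), $pa=qb$ (the \WCRc square), and $e\beta'=fdq$ (second IH); everything else is bracket rearrangement. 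The main obstacle is therefore not the reduction geometry---which is identical to Newman's lemma---but the discipline of recording each rule as a natural transformation and checking that the \WCRc square and the two inductive squares paste into a single commuting square. I would also stress that \WCRc has been strengthened precisely so as to supply these commuting squares at the one-step level: plain local confluence would leave $\gamma\alpha$ and $\delta\beta$ merely parallel rather than equal, and the induction would not go through.
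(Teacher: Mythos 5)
Your proof is correct and follows essentially the same route as the paper's: a well-founded induction on the apex (justified by \SN), splitting off the first step of each reduction, one application of \WCRc to the resulting one-step peak, two applications of the induction hypothesis, and pasting of the three commuting squares. The paper leaves the commutation $\gamma\alpha=\delta\beta$ to a labelled diagram, whereas you write out the chain of substitutions explicitly; the content is identical.
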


\begin{lemma}[Functorial critical pair lemma]
    \label{lem:functorial_critical_pair_lemma}
    A functorial rewriting system is \WCRc if and only if all critical pairs converge with a commuting diagram.
    \customqed
\end{lemma}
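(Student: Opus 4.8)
The plan is to mirror the classical proof of the Critical Pair Lemma (\cref{lem:critical_pair_lemma}), exploiting that a functorial rewriting system is a string rewriting system over the \emph{free} monoid $\Sigma^*$ — no functor identities are available, by the Remark — while upgrading "joinability of critical pairs" to "convergence with a commuting diagram." The forward implication is immediate: a critical pair is a particular local peak $T_0 \xleftarrow{\alpha} w \xrightarrow{\beta} T_1$, so if the system is \WCRc then by definition this peak is completed to a commuting diagram. The content lies in the converse. So, assume all critical pairs converge commutingly and take an arbitrary local peak $T_0 \xleftarrow{A\alpha B} w \xrightarrow{C\beta D} T_1$, where $w = A L_\alpha B = C L_\beta D \in \Sigma^*$ for rules $\alpha\colon L_\alpha \to R_\alpha$ and $\beta\colon L_\beta \to R_\beta$. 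I would classify the peak by the relative positions of the two redex occurrences in the string $w$, assuming without loss of generality that the $\alpha$-redex starts no later than the $\beta$-redex.

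\emph{Case 1 (disjoint redexes).} Write $w = P L_\alpha M L_\beta R$ with $M \in \Sigma^*$. Applying the other rule on the unchanged side gives
\[
    T_0 = P R_\alpha M L_\beta R \xrightarrow{P R_\alpha M \beta R} P R_\alpha M R_\beta R \xleftarrow{P \alpha M R_\beta R} P L_\alpha M R_\beta R = T_1,
\]
and the required equality $(P R_\alpha M \beta R)\cdot(P\alpha M L_\beta R) = (P\alpha M R_\beta R)\cdot(P L_\alpha M \beta R)$ is exactly the middle-four interchange law for horizontal composition of the natural transformations $\alpha$ and $\beta$ (whiskered by $P$, $M$, $R$): both composites equal the single horizontal composite of $\alpha$ and $\beta$. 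Thus the disjoint case commutes \emph{without} invoking the hypothesis.

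\emph{Case 2 (overlapping redexes).} Here the two occurrences share at least one letter, so the union of their spans is a single factor $w_0$ of $w$, i.e. $w = P\, w_0\, R$ with $P, R \in \Sigma^*$, and restricting the two steps to $w_0$ yields a critical pair $U_0 \xleftarrow{\alpha_0} w_0 \xrightarrow{\beta_0} U_1$ with empty outer context. This covers both the proper-overlap subcase (a suffix of $L_\alpha$ equals a prefix of $L_\beta$) and the inclusion subcase ($L_\beta$ is a factor of $L_\alpha$). By hypothesis there are $U_0 \xrightarrowdbl{\gamma_0} V \xleftarrowdbl{\delta_0} U_1$ with $\gamma_0\alpha_0 = \delta_0\beta_0$. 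Whiskering the entire commuting diagram by $P(-)R$ and using that whiskering is functorial on vertical composites, we obtain $(P\gamma_0 R)(P\alpha_0 R) = P(\gamma_0\alpha_0)R = P(\delta_0\beta_0)R = (P\delta_0 R)(P\beta_0 R)$, completing the original peak to a commuting diagram.

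I expect the main obstacle to be the careful case analysis on string overlaps together with the precise invocation of the interchange law in the disjoint case: unlike ordinary term rewriting it does not suffice to reach a common string — one must check that the two composite natural transformations \emph{agree}, and the disjoint case is exactly where the $2$-categorical interchange law is needed in place of the critical-pair hypothesis. A secondary, purely bookkeeping point is that the convergence arrows $\gamma_0, \delta_0$ are multi-step ($\rightarrowdbl$), so one must observe that whiskering distributes over the vertical composition of the individual steps; hence the whiskered sequences $P\gamma_0 R$ and $P\delta_0 R$ are again valid rewrite sequences whose composites are the whiskered composites, which is what makes the final equality go through.
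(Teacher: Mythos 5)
Your proposal is correct and takes essentially the same route as the paper's proof: your ``disjoint'' case is what the paper (phrasing everything in term-rewriting vocabulary, where all symbols are unary) calls the \emph{nested} case, and your appeal to the middle-four interchange law is the paper's appeal to naturality of the rule $\rho_0$ --- the same fact in different clothing --- while the overlap case is handled identically in both, by whiskering the commuting diagram supplied by the critical-pair hypothesis. The only detail the paper makes explicit that you elide is the trivial overlap of a rule with itself at the same position, which is not a critical pair and converges by an identity diagram.
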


The functorial rewriting system $\calRsep$ of \cref{def:functorial_rewriting_system_lambda_muS_muT} has nice properties.

\begin{lemma}
    % \label{lem:U_lambda_separation_iff_normal_form}
        $\calRsep$ is terminating {\normalfont(\SN)} and confluent-commuting {\normalfont(\CRc)}.
        %, then \cref{lem:dl_=>_composite_theory_lem1} holds.
    
\end{lemma}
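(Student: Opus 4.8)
The plan is to prove the two properties separately with the functorial tools developed just above: a polynomial interpretation for termination, and the functorial Newman and critical-pair lemmas for confluence-commuting.

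For termination (\SN) I would give a polynomial interpretation in the style of the discussion after \cref{lem:newmans_lemma}, interpreting each functor as a strictly monotone function $\N \to \N$ and each word as the induced composite. Take $\brackets{S}(x) \defeq 2x+1$ and $\brackets{T}(x) \defeq x+1$; both are strictly monotone, so the interpretation is stable under left- and right-contexts. It then remains to check that each rule strictly decreases the interpretation:
\begin{align*}
    \brackets{ST}(x) &= 2x+3 > 2x+2 = \brackets{TS}(x), \\
    \brackets{SS}(x) &= 4x+3 > 2x+1 = \brackets{S}(x), \\
    \brackets{TT}(x) &= x+2 > x+1 = \brackets{T}(x).
\end{align*}
Strict monotonicity propagates each of these strict decreases through any surrounding context, so every rewrite step strictly decreases the natural number $\brackets{w}(0)$; well-foundedness of $\N$ then yields \SN. (Intuitively $\lambda$ removes one $S$-before-$T$ inversion while $\mu^S,\mu^T$ shorten the word, and the interpretation merely bundles these into a single measure.)

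For confluence-commuting (\CRc) I would apply the functorial Newman's lemma (\cref{lem:functorial_newmans_lemma}): since \SN already holds, it is enough to prove \WCRc, and by the functorial critical pair lemma (\cref{lem:functorial_critical_pair_lemma}) this reduces to showing that all critical pairs converge with a commuting diagram. As the three left-hand sides $ST$, $SS$, $TT$ all have length two, the only overlaps are single-symbol ones, yielding exactly the four words $SSS$, $TTT$, $SST$, $STT$. The key point --- and the actual content of the argument --- is that the commuting condition for each of these four is precisely one of the defining axioms. For $SSS$ the redexes $\mu^S S$ and $S\mu^S$ both land on $SS$, and closing each with $\mu^S$ turns the required equality $\mu^S \cdot \mu^S S = \mu^S \cdot S\mu^S$ into associativity of $\mu^S$, i.e.\ \eqref{eq:def_monad_axiom_multiplication_associativity} for $S$; the word $TTT$ is symmetric. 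For $SST$ the reducts $\mu^S T : SST \to ST$ and $S\lambda : SST \to STS$ both join at $TS$, via $\lambda$ and via $T\mu^S \cdot \lambda S$ respectively, and the commuting condition $\lambda \cdot \mu^S T = T\mu^S \cdot \lambda S \cdot S\lambda$ is exactly the multiplication axiom \eqref{eqn:distributive_law_multiplication_axiom_S}; dually, $STT$ reduces via $\lambda T$ and $S\mu^T$, joins at $TS$, and its commuting condition is \eqref{eqn:distributive_law_multiplication_axiom_T}. Thus every critical pair converges commutingly, giving \WCRc and hence \CRc.

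I expect the main obstacle to be conceptual rather than computational: \WCRc asks for more than a string-level join, since the two composite natural transformations must be equal, so the verification really does rest on the categorical axioms rather than on the bare string rewriting. It is worth noting that the unit axioms \eqref{eqn:distributive_law_unit_axiom_S} and \eqref{eqn:distributive_law_unit_axiom_T} are never invoked, because $\calRsep$ has no unit rules; only the two multiplication axioms of $\lambda$ and the associativities of $\mu^S$ and $\mu^T$ are needed.
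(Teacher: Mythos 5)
Your proposal is correct and follows essentially the same route as the paper: the identical polynomial interpretation $\brackets{S}(x)=2x+1$, $\brackets{T}(x)=x+1$ for termination, then the functorial Newman and critical-pair lemmas reducing \CRc to the same four critical pairs $SST$, $STT$, $SSS$, $TTT$, which commute by the two multiplication axioms of $\lambda$ and the associativity axioms of $\mu^S$ and $\mu^T$. Your closing observations (that the commuting condition is the real content and that the unit axioms are never needed) are accurate glosses on the paper's argument rather than deviations from it.
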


\begin{proof}%[Proof of \cref{lem:dl_=>_composite_theory_lem1}]
    We show termination (\SN) of $\calRsep$ using polynomial interpretation over $\N$.
    Let $\brackets{S}(x) \defeq 2x+1$ and $\brackets{T}(x) \defeq x+1$, which are indeed monotone in $x$.
    The three rewrite rules are strictly decreasing with respect to that order:
    \begin{align*}
        \brackets{ST}(x) = 2x + 3 &> 2x + 2 = \brackets{TS}(x), \\
        \brackets{SS}(x) = 4x + 3 &> 2x + 1 = \brackets{S}(x), \\
        \brackets{TT}(x) = x + 2 &> x + 1 = \brackets{T}(x).
    \end{align*}
    We now want to prove that $\calRsep$ is \CRc.
    Since we have termination (\SN) it suffices to prove \WCRc by \cref{lem:functorial_newmans_lemma}. 
    We use \cref{lem:functorial_critical_pair_lemma} and notice that there are only $4$ critical pairs, and all converge with a commuting diagram:
    \[
            \begin{tikzcd}[ampersand replacement=\&, sep=small]
                \& SST \\
                STS \&\& ST \\
                TSS \\
                \& TS
                \arrow["{S \lambda ~ \circled{1}}"', from=1-2, to=2-1]
                \arrow["{\circled{2} ~ \mu^S T}", from=1-2, to=2-3]
                \arrow["{\lambda}", from=2-3, to=4-2]
                \arrow["{\lambda S}"', from=2-1, to=3-1]
                \arrow["{T \mu^S}"', from=3-1, to=4-2]
                \arrow[phantom, "\scriptstyle (\lambda \text{ axiom})", from=1-2, to=4-2]
            \end{tikzcd}
            \qquad
            \begin{tikzcd}[ampersand replacement=\&, sep=small]
                \& STT \\
                ST \&\& TST \\
                \&\& TTS \\
                \& TS
                \arrow["{S \mu^T ~ \circled{3}}"', from=1-2, to=2-1]
                \arrow["{\lambda}"', from=2-1, to=4-2]
                \arrow["{\circled{1} ~ \lambda T}", from=1-2, to=2-3]
                \arrow["{T \lambda}", from=2-3, to=3-3]
                \arrow["{\mu^T S}", from=3-3, to=4-2]
                \arrow[phantom, "\scriptstyle (\lambda \text{ axiom})", from=1-2, to=4-2]
            \end{tikzcd}
    \]
    \[
            \begin{tikzcd}[ampersand replacement=\&, sep=small]
                \& SSS \\
                SS \&\& SS \\
                \& S
                \arrow["{S \mu^S ~ \circled{2}}"', from=1-2, to=2-1]
                \arrow["{\circled{2} ~ \mu^S S}", from=1-2, to=2-3]
                \arrow["{\mu^S}", from=2-3, to=3-2]
                \arrow["{\mu^S}"', from=2-1, to=3-2]
                \arrow[phantom, "\scriptstyle (S \text{ axiom})", from=1-2, to=3-2]
            \end{tikzcd}
            \qquad
            \begin{tikzcd}[ampersand replacement=\&, sep=small]
                \& TTT \\
                TT \&\& TT \\
                \& T
                \arrow["{T \mu^T ~ \circled{3}}"', from=1-2, to=2-1]
                \arrow["{\circled{3} ~ \mu^T T}", from=1-2, to=2-3]
                \arrow["{\mu^T}", from=2-3, to=3-2]
                \arrow["{\mu^T}"', from=2-1, to=3-2]
                \arrow[phantom, "\scriptstyle (T \text{ axiom})", from=1-2, to=3-2]
            \end{tikzcd} 
        \qedhere
    \]
\end{proof}

We now have the required tools
to formalise the separation procedure and show that every term in $\bbU^\lambda$ can be separated.

\begin{lemma}
    \label{lem:dl_=>_composite_theory_lem1}
    There is a function $\sep\colon \Sigma^*_{\bbU^\lambda} \to \Sigma^*_{\bbU^\lambda}$ that maps a $\bbU^\lambda$-term $u$ to a $TS$-separated term $\sep(u)$ that is $\bbU^\lambda$-equal to $u$. 
\end{lemma}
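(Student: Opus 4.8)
The plan is to turn the informal separation procedure sketched before the statement into an honest function. Given a $\bbU^\lambda$-term $u$, I first form the word $w \defeq \type(u) \in \set{S,T}^*\variables$ and the stratified element $\zeta \defeq \inject(u,w) \in w\variables$, which is defined precisely because $w=\type(u)$, reading $w$ as a composite functor applied to $\variables$. Regarding $\calRsep$ as rewriting the functor-word $w$, termination (\SN) guarantees that $w$ reduces to a normal form, and by \CRc this normal form $\hat{w}$ is unique and the induced composite natural transformation $\rho : w \Rightarrow \hat{w}$ obtained from any reduction is the same. I then set $\xi \defeq \rho_\variables(\zeta) \in \hat{w}\variables$, fix once and for all (using the axiom of choice) a representative for every element of $TS\variables$, and define $\sep(u)$ to be the chosen representative of $\xi$.

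Next I would check that $\sep(u)$ is genuinely $TS$-separated, which reduces to identifying the normal forms of $\calRsep$. A word is a normal form exactly when it contains none of the redexes $SS$, $TT$, $ST$ as a contiguous substring; the only admissible adjacent pair is then $TS$, which forces the normal forms to be precisely $\varepsilon$, $S$, $T$, and $TS$. Each of these is a subword of $TS$, so $\hat{w}\variables \subseteq TS\variables$ and any representative of $\xi$ has the shape $t[s_x/x]$ with $t$ a $\bbT$-term and the $s_x$ being $\bbS$-terms, i.e.\ it is separated (the cases $\varepsilon, S, T$ are separated with a trivial $\bbT$- or $\bbS$-layer).

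The substantive part is to show $\sep(u) \theoryeq{U} u$. For this I would introduce an evaluation map collapsing a stratified element back into a $\bbU^\lambda$-term. Let $U$ be the free algebra monad of $\bbU^\lambda$, and write $\sigma : S \Rightarrow U$ and $\tau : T \Rightarrow U$ for the monad morphisms induced by the signature inclusions $\Sigma_\bbS,\Sigma_\bbT \hookrightarrow \Sigma_{\bbU^\lambda}$ (these are monad morphisms because $E_\bbS, E_\bbT \subseteq E_{\bbU^\lambda}$); put $\theta_S \defeq \sigma$, $\theta_T \defeq \tau$. For each word $w = X_1 \cdots X_k$ I define a natural transformation $\mathrm{ev}_w : w \Rightarrow U$ as the fold of the Godement product $\theta_{X_1} \ast \cdots \ast \theta_{X_k} : w \Rightarrow U^k$ along the iterated multiplication $U^k \Rightarrow U$; associativity of $\mu^U$ makes this unambiguous, and $\mathrm{ev}_w$ is automatically well defined on the equivalence classes constituting $w\variables$ since $\sigma,\tau$ are. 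Two facts then suffice. First, by induction on $u$ one shows $\mathrm{ev}_{\type(u)}(\inject(u,\type(u))) = \eqclass{u}{\bbU}$, so $\mathrm{ev}_w(\zeta)=\eqclass{u}{\bbU}$. Second, every rewrite step $w \xrightarrow{A \alpha B} w'$ of $\calRsep$ satisfies $\mathrm{ev}_{w'} \circ (A\alpha B)_\variables = \mathrm{ev}_w$. Applying the second fact repeatedly along the reduction with composite $\rho$ gives $\mathrm{ev}_{\hat w}\circ\rho_\variables=\mathrm{ev}_w$, whence $\eqclass{\sep(u)}{\bbU} = \mathrm{ev}_{\hat{w}}(\xi) = \mathrm{ev}_{\hat w}(\rho_\variables(\zeta)) = \mathrm{ev}_w(\zeta) = \eqclass{u}{\bbU}$, which is exactly $\sep(u) \theoryeq{U} u$.

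The main obstacle is verifying the step-compatibility $\mathrm{ev}_{w'} \circ (A\alpha B)_\variables = \mathrm{ev}_w$; by naturality, functoriality of the contexts $A,B$, and associativity of $\mu^U$ it localises to the three redexes. For $\mu^S$ (resp.\ $\mu^T$) it is exactly the monad-morphism law for $\sigma$ (resp.\ $\tau$), i.e.\ $\mu^U \circ (\sigma \ast \sigma) = \sigma \circ \mu^S$. The $\lambda$-step is the crux and is where $E_\lambda$ enters: local compatibility amounts to $\mu^U \circ (\sigma \ast \tau) = \mu^U \circ (\tau \ast \sigma) \circ \lambda$ as maps $ST \Rightarrow U$, and evaluating both sides on $\eqclass{s[\eqclass{t_x}{\bbT}/x]}{\bbS}$ yields $\eqclass{s[t_x/x]}{\bbU}$ on the left and $\eqclass{t'[s'_x/x]}{\bbU}$ on the right, where $t'[s'_x/x]$ is the $\lambda$-image $\eqclass{t'[\eqclass{s'_x}{\bbS}/x]}{\bbT}$. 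These are equal in $\bbU^\lambda$ precisely because the pair $(s[t_x/x],\, t'[s'_x/x])$ lies in $E_\lambda \subseteq E_{\bbU^\lambda}$ by the definition of $E_\lambda$ in \cref{thm:dl_=>_composite_theory}. This closes the argument.
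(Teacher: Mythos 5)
Your construction of $\sep$ is exactly the paper's: form $\type(u)$ and $\inject(u,\type(u))$, reduce the functor-word to its normal form using termination of $\calRsep$, invoke \CRc to see that the normal form and the composite natural transformation are independent of the chosen reduction, and pick a representative by the axiom of choice. The identification of normal forms as subwords of $TS$ is also the paper's argument (the paper lists $TS$, $T$, $S$; you additionally note the empty word for variables, which is harmless since variables are trivially separated). Where you diverge is in the verification that $u \theoryeq{U^\lambda} \sep(u)$: the paper argues directly in equational logic that at each application of $\lambda$, $\mu^S$ or $\mu^T$ any representative of the input is $\bbU^\lambda$-equal to any representative of the output (by definition of $E_\lambda$ for $\lambda$-steps, and by the congruence and substitution rules for the multiplication steps). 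You instead package this as an evaluation natural transformation $\mathrm{ev}_w \colon w \Rightarrow U$ built from the monad morphisms $S \Rightarrow U$ and $T \Rightarrow U$ induced by the theory inclusions, prove $\mathrm{ev}_{\type(u)}(\inject(u,\type(u))) = \eqclass{u}{\bbU}$ by induction, and show each rewrite step preserves $\mathrm{ev}$ --- the $\mu^S,\mu^T$ cases being the monad-morphism laws and the $\lambda$ case being precisely the $E_\lambda$ axioms. This is correct and buys a cleaner, representative-free statement of the invariant (well-definedness is absorbed into the fact that $\mathrm{ev}_w$ is a map on equivalence classes), at the cost of setting up the Godement-product machinery; the paper's version is more elementary but leaves the independence from representative choices to a verbal appeal to the congruence and substitution rules. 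Both are sound.
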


\begin{proof}
    Let $u$ be an arbitrary $\bbU^\lambda$-term, and
    let  $w \in (S \cup T)^*$ be such that $\type(u) = w\variables$. 
    Furthermore, let $\zeta \defeq \inject(u, \type(u)) \in \type(u)$.
    Since $\calRsep$ is terminating (\SN), the string $w$ has a normal form $w'$
    which must be $TS$, $T$, or $S$, as any other type will contain a reducible expression (redex).

    Let $\alpha$ be a $\calRsep$-rewrite sequence $w \rightarrowdbl w'$.
    Recall that $\alpha$ is a natural transformation composed of $\lambda$, $\mu^S$ and $\mu^T$.
    Let $\xi \defeq \alpha_\variables (\zeta)$.
    Since $\xi$ is an equivalence class in $TS\variables$, $T\variables$ or $S\variables$, all of its representatives must be separated.
    By the axiom of choice we have functions $\rho_{TS\variables}\colon TS\variables \to \Sigma_T^*\Sigma_S^*\variables$, 
    $\rho_{S\variables}\colon S\variables \to \Sigma_S^*\variables$, and
    $\rho_{T\variables}\colon T\variables \to \Sigma_T^*\variables$, 
    that choose a representative.
    We define $\sep(u)$ to be
    $\rho_{w'\variables} (\xi))$, that is, 
    $\sep(u) \defeq \rho_{w'\variables} ( \alpha_\variables ( \inject(u,\type(u))))$.
    
    To see that $\sep(u)$ is well defined, let $w''$ be another normal form  and $\beta$ be a rewrite sequence $w \rightarrowdbl w''$.
    By \CRc, we necessarily have $w' = w''$, and also $\alpha = \beta$.
    \[
        \begin{tikzcd}[sep=small]
            & w \in \set{S,T}^* \\
            w' & & w'' \\
            & \cdot
            \ar[from=1-2, to=2-1, two heads, "\alpha"']
            \ar[from=1-2, to=2-3, two heads, "\beta"]
            \ar[from=2-1, to=3-2, two heads, dotted, "0" near end]
            \ar[from=2-3, to=3-2, two heads, dotted, "0" {swap, near end}]
            \ar[from=1-2, to=3-2, draw=none, "{{(CR \circlearrowleft)}}" description]
        \end{tikzcd}
        \qquad
        \begin{tikzcd}[sep=small]
            & \zeta \in w\variables \\
            \alpha(\zeta) \in w'\variables & & \beta(\zeta) \in w''\variables \\
            & \cdot
            \ar[from=1-2, to=2-1, mapsto, "\alpha"']
            \ar[from=1-2, to=2-3, mapsto, "\beta"]
            \ar[from=2-1, to=3-2, equal]
            \ar[from=2-3, to=3-2, equal]
            \ar[from=1-2, to=3-2, draw=none, "{{(CR \circlearrowleft)}}" description]
        \end{tikzcd}
    \]

    We claim that $u \theoryeq{U^\lambda} \sep(u)$. Indeed, in every application of $\lambda$, $\mu^S$ or $\mu^T$, any $\bbU^\lambda$-term acting as representative of the input is $\bbU^\lambda$-equal to any term representing the output. For applications of $\lambda$, this is by definition of $E_\lambda$. The multiplication $\mu_\bbS$ maps an input $\eqclass{s[\eqclass{s'_x}\bbS]}\bbS$ to $\eqclass{s[s'_x]}\bbS$. Obviously $s[s'_x] \theoryeq{U^\lambda} s[s'_x]$. The congruence rule of equational logic then equates terms resulting from other choices of representatives for $\eqclass{s'_x}\bbS$, while the substitution rule equates terms resulting from other choices of representatives for $\eqclass{s}\bbS$. The same goes for $\mu_\bbT$.   
    \qedhere
\end{proof}

\begin{remark}\label{rem:AC}
    In general, we need the Axiom of Choice to obtain $\sep$.
    However, if all terms in $\bbS$ and $\bbT$ have unique normal forms, then $\sep$ is constructive. Instead of choosing a representative, we would choose the unique normal form obtained by normalising in both $\bbS$ and $\bbT$. 
\end{remark}

\begin{lemma}\label{lem:sep_on_S_terms}
For all $\bbS$-terms $s$, $\sep(s)\theoryeq{S} s$, for all $\bbT$-terms $t$, $\sep(t)\theoryeq{T} t$, and for any separated term $t[s_x/x]$, $\sep(t[s_x/x])$ is equal to $t[s_x/x]$ \modST. 
\end{lemma}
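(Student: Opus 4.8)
The plan is to exploit the observation that a separated term — of which pure $\bbS$-terms and pure $\bbT$-terms are the degenerate cases — already has a type that is a $\calRsep$-normal form. Consequently the rewrite sequence $\alpha$ appearing in the definition $\sep(u) = \rho_{w'\variables}(\alpha_\variables(\inject(u,\type(u))))$ from the proof of \cref{lem:dl_=>_composite_theory_lem1} is the empty one, so $\sep$ reduces to picking a representative (via $\rho$) of the class $\inject(u,\type(u))$, and each of the three claims follows by identifying this class explicitly.

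First I would compute $\type$ on the three kinds of input. For a non-variable $\bbS$-term $s$, decomposing $s = s[u_1,\dots,u_n]$ with all $u_i$ variables gives $\type(s) = S::\variables = S\variables$; dually $\type(t) = T\variables$ for a $\bbT$-term $t$; and for a separated term $t[s_x/x]$ with $t$ non-variable, the maximal-$\bbT$-top/$\bbS$-leaves decomposition yields $\type(t[s_x/x]) = T :: w$, where $w$ is the longest of the $\type(s_x)$, hence $TS\variables$ (or $T\variables$ if every $s_x$ is a variable). In every case the resulting word lies among $S\variables$, $T\variables$, $TS\variables$, $\variables$, none of which contains a redex $SS$, $TT$, or $ST$, so all are normal forms. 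Therefore, in the notation of \cref{lem:dl_=>_composite_theory_lem1}, we have $w = w'$, $\alpha_\variables = \id$, and $\xi = \inject(u,\type(u))$.

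Next I would unfold $\inject$. For a pure $\bbS$-term, $\inject(s, S\variables) = \eqS{s}$, so $\sep(s) = \rho_{S\variables}(\eqS{s})$ is by construction an $\bbS$-term lying in the class $\eqS{s}$, giving $\sep(s) \theoryeq{S} s$; the $\bbT$-case is dual. For a separated term, the dual injection rule gives $\inject(t[s_x/x], TS\variables) = \eqT{t[\eqS{s_x}]}$ — exactly the $TS$-equivalence class attached to $t[s_x/x]$ in \cref{def:composite_theory} (and matching the pattern of \cref{ex:type_inject}). Hence $\sep(t[s_x/x]) = \rho_{TS\variables}(\eqT{t[\eqS{s_x}]})$ is a separated term $t'[s'_y/y]$ satisfying $\eqT{t'[\eqS{s'_y}]} = \eqT{t[\eqS{s_x}]}$, which is precisely the statement that it is equal to $t[s_x/x]$ \modST. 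The degenerate sub-case $\type = T\variables$ (all $s_x$ variables) is handled the same way, using that applying $T\eta^S$ to a $\bbT$-equality of classes preserves it in $TS\variables$.

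The main obstacle is purely bookkeeping: faithfully following the recursive definitions of $\type$ and $\inject$, checking that the maximal-subterm decomposition $\type$ uses coincides with the top-$\bbT$/bottom-$\bbS$ shape of a separated term, and treating the variable and "all $s_x$ variables" edge cases. Once the normal-form observation is established, no rewriting occurs and the three claims collapse to the defining property of the representative-choosing maps $\rho_{S\variables}$, $\rho_{T\variables}$, and $\rho_{TS\variables}$.
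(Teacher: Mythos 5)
Your proposal is correct and follows essentially the same route as the paper's proof: observe that $\type$ of a pure $\bbS$-term, pure $\bbT$-term, or separated term is already a $\calRsep$-normal form, so $\sep$ degenerates to choosing a representative of $\inject(u,\type(u))$, which is exactly $\eqS{s}$, $\eqT{t}$, or $\eqT{t[\eqS{s_x}]}$ respectively. You merely spell out the separated-term and degenerate cases that the paper dismisses as ``similar,'' and you do so correctly.
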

\begin{proof}
        For an $\bbS$-term $s$, we have $\type(s) = S\variables$ and $\inject(s, S\variables) = \eqS{s}$.
    Since $S\variables$ is already a normal form in $\calRsep$, by definition $\sep(s) = \rho_{S\variables}(\eqS{s})$ is just the choice function choosing a representative of $\eqS{s}$.
    We therefore have $\sep(s) \theoryeq{S} s$.
    The arguments for $\bbT$-terms and for separated terms $t[s_x/x]$ are similar.
\end{proof}

We now apply \cref{lem:dl_=>_composite_theory_lem1} to show that any two separated terms that are equal in $\bbU^\lambda$, are equal \modST.

\begin{lemma}\label{lem:modulo_S_T}
    Any two separated terms equal in $\bbU^\lambda$ are equal \modST.
\end{lemma}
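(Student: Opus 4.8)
The plan is to realise $TS\variables$ as a $\bbU^\lambda$-algebra and then read the statement off from soundness of equational logic. Write the two separated terms as $v = t[s_x/x]$ and $v' = t'[s'_y/y]$. By \cref{def:composite_theory}, saying that $v$ and $v'$ are equal \modST means exactly $\eqT{t[\eqS{s_x}]} = \eqT{t'[\eqS{s'_y}]}$ in $TS\variables$. It therefore suffices to exhibit a $\bbU^\lambda$-algebra in which the interpretation of a separated term, evaluated at one fixed \qmarks{generic} variable assignment, is precisely its $TS$-equivalence class: soundness will then force the two interpretations to agree, which is the desired equality of $TS$-classes.

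First I would put $\bbU^\lambda$-algebra structure on the set $A \defeq TS\variables$. As the carrier of the free $TS$-algebra on $\variables$, $A$ is an object of $\EM{TS}$, and under the isomorphism $\EM{TS} \isom \catalg{\lambda}$ it becomes a $\lambda$-algebra $(A,\sigma,\tau)$ in which $(A,\sigma)$ is an $S$-algebra, $(A,\tau)$ is a $T$-algebra, and the square of \cref{def:lambda_algebra} commutes. Since $\bbS$ presents $S$ and $\bbT$ presents $T$, the concrete isomorphisms of \cref{def:algebraic_presentation} turn $(A,\sigma)$ into a $(\Sigma_\bbS,E_\bbS)$-algebra and $(A,\tau)$ into a $(\Sigma_\bbT,E_\bbT)$-algebra on the same carrier $A$. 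Taking the union of their interpretations yields a $\Sigma_{\bbU^\lambda}$-algebra $(A,\brackets{\cdot})$ that automatically satisfies $E_\bbS \cup E_\bbT$.

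The crux, and the step I expect to be the main obstacle, is verifying that $(A,\brackets{\cdot})$ also satisfies every equation of $E_\lambda$. A generating pair $\big(s[t_x/x],\, t[s_y/y]\big) \in E_\lambda$ is by definition witnessed by $\lambda_\variables\big(\eqS{s[\eqT{t_x}/x]}\big) = \eqT{t[\eqS{s_y}/y]}$, an identity living over $\variables$, whereas validity of the equation in $A$ is an identity at the carrier $A$ phrased through $\sigma$ and $\tau$. To bridge the two I would use that on the free algebra the structure maps $\sigma,\tau$ are built from $\mu^S$, $\mu^T$ and $\lambda$, and then transport the relevant instance of $\lambda$ from carrier $A$ back to carrier $\variables$ via naturality of $\lambda$ together with the commuting square of \cref{def:lambda_algebra}. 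Unwinding these along the term structure of $s$ and $t$ should collapse the required identity to exactly the defining equation of $E_\lambda$, establishing that $(A,\brackets{\cdot})$ is a genuine $\bbU^\lambda$-algebra.

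Finally I would fix the assignment $\upsilon \defeq \eta^{TS}_\variables$, i.e.\ $\upsilon(x) = \eqT{\eqS{x}}$, and prove by induction on the structure of a separated term that $\brackets{t[s_x/x]}_{\upsilon} = \eqT{t[\eqS{s_x}]}$; equivalently, that on separated terms $\brackets{\cdot}_\upsilon$ agrees with the separation map of \cref{lem:dl_=>_composite_theory_lem1}, whose value is the $TS$-class by \cref{lem:sep_on_S_terms}. With this identification in place the argument closes: from $v \theoryeq{U^\lambda} v'$ and soundness of equational logic we obtain $\brackets{v}_\upsilon = \brackets{v'}_\upsilon$, that is $\eqT{t[\eqS{s_x}]} = \eqT{t'[\eqS{s'_y}]}$, which is precisely the assertion that $v$ and $v'$ are equal \modST.
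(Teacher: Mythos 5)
Your proof is correct in strategy, but it takes a genuinely different route from the paper. The paper argues syntactically: it fixes a derivation tree of $v \theoryeq{U^\lambda} v'$ in equational logic and shows by induction on the rules (Axiom, Congruence, Substitution, \ldots) that the separation map $\sep$ of \cref{lem:dl_=>_composite_theory_lem1} sends both sides of every derived equation to terms that are equal \modST, relying on the \CRc{} property of the functorial rewriting system. You instead argue semantically: you equip $TS\variables$ (the free $TS$-algebra, transported along $\EM{TS}\isom\catalg{\lambda}$ and the presentations of $S$ and $T$) with a $\bbU^\lambda$-algebra structure, check that it validates $E_\lambda$, and then invoke soundness at the generic assignment $\upsilon(x)=\eqT{\eqS{x}}$ together with the computation $\brackets{t[s_x/x]}_\upsilon=\eqT{t[\eqS{s_x}]}$ (which indeed follows from the unit axioms of $\lambda$). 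The step you rightly flag as the crux --- that every $\lambda$-algebra satisfies $E_\lambda$ via the pentagon of \cref{def:lambda_algebra} and naturality of $\lambda$ --- is precisely the content of the functor $F$ in the paper's appendix proof of \cref{thm:composite_theory_from_distributive_law}, and that argument does not depend on \cref{lem:modulo_S_T}, so there is no circularity; your sketch of it is legitimate but would need to be spelled out for arbitrary assignments $\upsilon\colon\variables\to TS\variables$, not just the generic one. What each approach buys: the paper's proof stays entirely within the rewriting framework it has already built and reuses $\sep$, which is needed anyway for the separability half of \cref{def:composite_theory}; your proof is shorter and more conceptual (a standard free-model/soundness argument) and makes the later concrete-isomorphism theorem do double duty, at the cost of front-loading the verification that $TS\variables$ is a model. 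One minor imprecision: identifying $\brackets{\cdot}_\upsilon$ with $\sep$ via \cref{lem:sep_on_S_terms} conflates an equivalence class with a chosen representative; the direct induction you state first is the cleaner way to close the argument and suffices on its own.
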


\begin{proof}
    Suppose two separated terms $t_0[s_x/x]$ and $t'_0[s'_y/y]$ are equal in $\bbU^\lambda$.
    Let $\mathfrak{T}$ be a $\bbU^\lambda$-derivation tree of this equality $t_0[s_{x}/x] \theoryeq{U^\lambda} t'_0[s_{y}/y]$ in equational logic.
    By an induction on the structure of $\mathfrak{T}$, we prove that for each equation $u = u'$ in $\mathfrak{T}$, $\sep(u)$ and $\sep(u')$ are equal \modST. By \cref{lem:sep_on_S_terms} and transitiviy of equality \modST, we then conclude that $t_0[s_x/x]$ and $t'_0[s'_y/y]$ are equal \modST.

    The base cases are the Axiom and Reflexivity rules.
    The induction steps are the Symmetry, Transitivity, Congruence, and Substitution rules. We show only the cases of Congruence and Substitution here, as these are the only interesting cases. The full proof is in the appendix.
    \begin{itemize}[itemsep=5pt, topsep=3pt]
        \item
        Congruence: Given $\op^{(n)} \in \Sigma_{\bbU^\lambda}$, consider
        % \proofSkipAmount
        % \begin{prooftree}
            \AxiomC{$u_1 = u'_1$ }
            \AxiomC{$\ldots$}
            \AxiomC{$u_n = u'_n$}
            % \RightLabel{\scriptsize Cong.}
            \TrinaryInfC{$\op(u_1 , \ldots , u_n) = \op( u'_1, \ldots , u'_n)$}
            \DisplayProof
        % \end{prooftree}
        \proofSkipAmount
        Let $t_i[s_i] \defeq \sep(u_i)$ and $t'_i[s'_i] \defeq \sep(u'_i)$ for $i=1,\ldots,n$.
        The IH is that $\eqT{t_i[\eqS{s_i}]} = \eqT{t'_i[\eqS{s'_i}]}$.
        We distinguish when $\op$ is a $\bbT$-symbol or an $\bbS$-symbol.
        \begin{itemize}[topsep=4pt]
            \item 
            If $\op \in \Sigma_\bbT$, then
            \(
                % \smash{
                \eqT{\op(\eqT{t_1[\eqS{s_1}]}, \ldots, \eqT{t_n[\eqS{s_n}]})} = \eqT{\op(\eqT{t'_1[\eqS{s'_1}]}, \ldots, \eqT{t'_n[\eqS{s'_n}]})},
                % }
            \)
            by congruence in $\bbT$.
            By \CRc, applying $\mu^T S$ on both sides results in the $TS$-equivalence classes of $\sep(\op(u_1, \ldots, u_n))$ and $\sep(\op(u'_1, \ldots, u'_n))$, which must be equal.
            Thus $\sep(\op(u_1, \ldots, u_n))$ and $\sep(\op(u'_1, \ldots, u'_n))$ are equal \modST.

            \item 
            If $\op \in \Sigma_\bbS$, then 
            \(
                % \smash{
                \eqS{\op(\eqT{t_1[\eqS{s_1}]}, \ldots, \eqT{t_n[\eqS{s_n}]})} = \eqS{\op(\eqT{t'_1[\eqS{s'_1}]}, \ldots, \eqT{t'_n[\eqS{s'_n}]})},
                % }
            \)
            by congruence in $\bbS$.
            {By \CRc}, applying $T \mu^S \cdot \lambda$ on both sides results in the $TS$-equivalence classes of $\sep(\op(u_1, \ldots, u_n))$ and $\sep(\op(u'_1, \ldots, u'_n))$, which must be equal.
            Thus $\sep(\op(u_1, \ldots, u_n))$ and $\sep(\op(u'_1, \ldots, u'_n))$ are equal \modST.
        \end{itemize}
        
        \item
        Substitution:
        Consider a substitution $f$, consider
        % \begin{prooftree}
            \AxiomC{$u = u'$}
            % \RightLabel{\scriptsize Substitution($f$)}
            \UnaryInfC{$u[f] = u'[f]$}
            \DisplayProof. \\
        % \end{prooftree}
        Let $t[s_x] \defeq \sep(u)$ and $t'[s'_x] \defeq \sep(u')$.
        The IH is that $\eqT{t[\eqS{s_x}]} = \eqT{t'[\eqS{s'_x}]}$.
        We start by separating all terms in the image of $f$.
        This gives another substitution $g \defeq \sep \cdot f$, and we denote $t_y[s_z] \defeq g(y)$ for all $y \in \var(u_1) \cup \var(u_2)$.
        By closure of $\theoryeq{S}$ and $\theoryeq{T}$ under substitution, we have 
        \[
            \eqT{t[\eqS{s_x[\eqT{t_y[\eqS{s_z}]}]}]} =   \eqT{t'[\eqS{s'_x[\eqT{t_y[\eqS{s_z}]}]}]}.
        \]
        Applying $\mu^T \mu^S \cdot T \lambda S$ on both sides make us reach $TS\variables$, and the two sides of the equation are the $TS$-equivalence classes of $\sep(u[f])$ and of $\sep(u'[f])$, which must therefore be equal by \CRc.
        Hence $\sep(u[f])$ and $\sep(u'[f])$ are equal \modST. \qedhere
    \end{itemize}
\end{proof}

The proof of \cref{thm:dl_=>_composite_theory} now follows from \cref{lem:dl_=>_composite_theory_lem1,lem:modulo_S_T}.

The next theorem was given in Zwart's thesis \cite[Theorem 3.9]{Zwart_2020} but not published elsewhere.
We have updated the reasoning and obtained a much shorter proof using the shortcut $\EM{TS} \cisom \catalg{\lambda}$.

\begin{theorem}
    \label{thm:composite_theory_from_distributive_law}
    Let $S$ and $T$ be the free algebra monads of algebraic theories $\bbS$ and $\bbT$. If there is a distributive law $\lambda: ST \Rightarrow TS$, then
    the monad $\langle TS, \eta^T\eta^S, \mu^T\mu^S \cdot T\lambda S\rangle$ is presented algebraically by $\bbU^\lambda$. \customqed
\end{theorem}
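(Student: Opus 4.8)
The plan is to establish the presentation \emph{semantically}, exhibiting a concrete isomorphism of categories of algebras rather than building a monad isomorphism $U \isom TS$ by hand. By \cref{def:algebraic_presentation} it suffices to prove $\EM{TS} \cisom \catalg{\bbU^\lambda}$, where $TS$ carries the Beck structure $\langle TS, \eta^T\eta^S, \mu^T\mu^S \cdot T\lambda S\rangle$. Beck's isomorphism $\EM{TS}\isom\catalg{\lambda}$ recalled in \cref{sec:preliminaries} keeps carriers fixed, hence is concrete, so the task reduces to proving $\catalg{\lambda} \cisom \catalg{\bbU^\lambda}$, again concretely (identity on carriers and on underlying functions).

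On objects I would first decompose a $\bbU^\lambda$-algebra. Since $\Sigma_{\bbU^\lambda} = \Sigma_\bbS \uplus \Sigma_\bbT$ and $E_{\bbU^\lambda} = E_\bbS \cup E_\bbT \cup E_\lambda$, a $\bbU^\lambda$-algebra $(X,\brackets{\cdot})$ is exactly a set $X$ carrying an $\bbS$-algebra structure and a $\bbT$-algebra structure that additionally satisfies every equation in $E_\lambda$. Because $\bbS$ presents $S$ and $\bbT$ presents $T$, these structures correspond, via the concrete isomorphisms $\EM{S}\cisom\catalg{\bbS}$ and $\EM{T}\cisom\catalg{\bbT}$, to an $S$-algebra $(X,\sigma)$ and a $T$-algebra $(X,\tau)$ on the same carrier, with $\sigma,\tau$ given by term evaluation. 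What remains is the key lemma: for such a pair, the $\lambda$-pentagon of \cref{def:lambda_algebra} commutes if and only if $(X,\brackets{\cdot})$ satisfies all of $E_\lambda$.

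For the key lemma I would evaluate both legs of the pentagon on a generic element $\eqclass{s[\eqclass{t_x}{\bbT}/x]}{\bbS} \in STX$, where $s$ is an $\bbS$-term and the $t_x$ are $\bbT$-terms over $X$. The leg $\sigma \cdot S\tau$ evaluates to $\brackets{s[t_x/x]}$, the interpretation of the mixed $\bbU^\lambda$-term $s[t_x/x]$. For the other leg, naturality of $\lambda$ lets me factor this element through $ST$ applied to a leaf assignment $\variables\to X$, so that $\lambda_X\big(\eqclass{s[\eqclass{t_x}{\bbT}/x]}{\bbS}\big) = \eqclass{t[\eqclass{s_y}{\bbS}/y]}{\bbT}$ for the separation $t[s_y/y]$ of $s[t_x/x]$, which is precisely the pair recorded in $E_\lambda$; then $\tau \cdot T\sigma$ evaluates it to $\brackets{t[s_y/y]}$. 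Hence the pentagon holds on this element exactly when $\brackets{s[t_x/x]} = \brackets{t[s_y/y]}$. Quantifying over all $\bbS$- and $\bbT$-term shapes ranges over the equations of $E_\lambda$, while quantifying over all leaf assignments into $X$ is exactly satisfaction of each such equation; together these two quantifications traverse all of $STX$, so the pentagon commutes iff $(X,\brackets{\cdot}) \models E_\lambda$. I expect this to be the main obstacle: the bookkeeping of the naturality argument that legitimises replacing $\lambda_X$ by $\lambda_\variables$ plus substitution, and the verification that this double quantification matches the single universal quantifier over $STX$.

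On morphisms the correspondence is immediate: a $\bbU^\lambda$-homomorphism commutes with every symbol of $\Sigma_\bbS \uplus \Sigma_\bbT$, hence is simultaneously an $\bbS$- and a $\bbT$-homomorphism, i.e.\ both an $S$- and a $T$-algebra homomorphism, which is exactly a $\catalg{\lambda}$-morphism. All these identifications fix the carrier and the underlying function, so the resulting functors commute with the forgetful functors to $\Set$ and the isomorphism is concrete. Composing with Beck's isomorphism yields $\EM{TS} \cisom \catalg{\bbU^\lambda}$, and by \cref{def:algebraic_presentation} this shows that $\bbU^\lambda$ presents $\langle TS, \eta^T\eta^S, \mu^T\mu^S \cdot T\lambda S\rangle$, as claimed.
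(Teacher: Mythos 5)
Your proposal is correct and takes essentially the same route as the paper: the paper's proof also invokes Beck's concrete isomorphism $\EM{TS} \cisom \catalg{\lambda}$ and then builds the concrete isomorphism $\catalg{\lambda} \cisom \catalg{\bbU^\lambda}$ from the presentations $\EM{S}\cisom\catalg{\bbS}$ and $\EM{T}\cisom\catalg{\bbT}$, checking that commutativity of the $\lambda$-pentagon on a generic element $\eqclass{s[\eqclass{t_x}{\bbT}/x]}{\bbS}$ corresponds exactly to satisfaction of the equations in $E_\lambda$. Your explicit flagging of the naturality step relating $\lambda_X$ to $\lambda_\variables$ is, if anything, slightly more careful than the paper's write-up.
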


\section{Axiomatisations of Composite Theories}
\label{sec:application}

In \cref{thm:dl_=>_composite_theory}, we showed how to obtain an algebraic presentation $\bbU^\lambda$ of the composite monad arising from a distributive law $\lambda : ST \to TS$.
However, the set of equations $E_\lambda$ accounting for the interactions between $\bbS$- and $\bbT$-terms is maximal in the sense that it contains all possible equations that consist of representatives
of some pair $(u,\lambda(u))$ in the graph of $\lambda$. In practice, we would like to have a minimal description of $E_\lambda$, such as the one for $\Ring$ in \cref{ex:monoid_abelian_ring}, which only adds two distribution axioms to the theories of monoids and Abelian groups.

In this section, we identify criteria on the shape of axioms that allow us to prove that certain minimal subsets of $E_\lambda$ suffice to generate the whole of
$E_\lambda$.
We apply term rewriting methods for proving the necessary claims.

The shape of axioms will be described in terms of \emph{layers}.

\begin{definition}
    Let $\bbS$ and $\bbT$ be two algebraic theories.
    Given a term $s[t_x/x] \in \Sigma_\bbS^* \Sigma_\bbT^* \variables$, its \myemph{$ST$-layers} is the pair $(m,n)$ of natural numbers where $m \defeq \depth(s)$ and $n \defeq \max \setvbar{\depth(t_x)}{x \in \var(s)}$, where $\depth$ denotes
    the maximal number of nested (possibly nullary) operation symbols. This corresponds to the inductively defined notion of depth of term trees where we define constants to have depth 1, and variables depth 0.
    $TS$-layers are defined similarly for terms in $\Sigma_\bbT^* \Sigma_\bbS^* \variables$.
\end{definition}

\begin{example}
    We illustrate $ST$-layers with terms in $\Ring$ (where $\bbS=\Monoid,\bbT=\AbGrp$). 
    \begin{center}\begin{tabular}{c|ccccccc}%{c|c|c|c|c}
    % \begin{tabularx}{350pt}{Y|Y|Y|Y|Y@{}}
        $ST$-Layers
        & \, $(0,0)$ \, 
        & \, $(0,1)$ \, 
        & \, $(1,0)$ \, 
        & \, $(1,1)$ \, 
        & \, $(0,2)$ \,
        & \, $(2,0)$ \,
        % & \, $(2,1)$ \,
        \\ \hline
        \multirow{2}{*}{Examples}
        & $x$ %\multirow{2}{*}{$x,y$}       
        & $0$ 
        & $1$
        & $x \cdot 0$ %$x \cdot (y + z)$
        & $x+0$
        & $x \cdot 1$
        % & \, $x \cdot (y \cdot (z+u))$ \, 
        \\
        & $y$
        & $x + y$
        & $x \cdot y$
        & \, $(x+y) \cdot (y+z)$ \,
        & \, $(x + y) + z$ \,
        & \, $x \cdot (y \cdot z)$ \, %$(x \cdot y) \cdot (z+u)$
    % \end{tabularx}
    \end{tabular}\end{center}
\end{example}

\begin{comment}
\begin{example}
    We illustrate layers with an arbitrary composite theory $\bbU^\lambda$.
    \begin{itemize}
        \item 
        A separated term $s[t_x]$ has layers $(1,1)$ if $s$ is an $\bbS$-operation, at least one $t_x$ is a $\bbT$-operation, and the other $t_x$ are either $\bbT$-operations or variables.

        % \item 
        % An equation $(s[t_x], t[s_y])$ with left-hand side having layers $(0,n)$ or $(n,0)$ means that either $s$ is trivial or all $t_x$ are trivial, respectively.
        % By the unit axioms \eqref{eqn:distributive_law_unit_axiom_S} and \eqref{eqn:distributive_law_unit_axiom_T} of $\lambda$, the equation is thus simply a reflexive equality $s=s$ or $t_x=t_x$.
        \end{itemize}
\end{example}
\end{comment}

For the remainder of this section, we assume that $\bbS$, $\bbT$, $\lambda$, $E_\bbS$, $E_\bbT$, $E_\lambda$, and $\bbU^\lambda$ are as in \cref{thm:dl_=>_composite_theory}.

\begin{lemma}
    \label{lem:simplification_E_lambda}
    Let $E' \subseteq E_\lambda$  be a set of equations defined as follows.
    For each $f^{(n)} \in \bbS$, $g^{(m)} \in \bbT$ and each $i \in \set{1,\ldots,n}$,
    $E'$ contains one equation of the form $l = r$, where
    $l = f(x_1,\ldots,x_{i-1},g(\vec{y}),x_{i+1},\ldots,x_n)$
    and
    $r \in \lambda_\variables(\smash{\eqS{\eqT{l}}})$.
    The pair $(\Sigma_{\bbU^\lambda} = \Sigma_\bbS \uplus \Sigma_\bbT, E')$ can now be seen a TRS.
    If this TRS is terminating, then
    $E_\bbS \cup E_\bbT \cup E'$ generates the same congruence on $\bbU^\lambda$-terms as $E_\bbS \cup E_\bbT \cup E_\lambda$.
\end{lemma}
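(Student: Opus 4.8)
The plan is to prove the two inclusions of congruences separately. The inclusion of the congruence generated by $E_\bbS \cup E_\bbT \cup E'$ into the one generated by $E_\bbS \cup E_\bbT \cup E_\lambda$ is immediate, since $E' \subseteq E_\lambda$ by construction. For the reverse inclusion it suffices to show that every equation of $E_\lambda$ lies in the congruence generated by $E_\bbS \cup E_\bbT \cup E'$, because that congruence already contains $E_\bbS$ and $E_\bbT$. So I would fix an equation $(s[t_x/x],\, t[s_y/y]) \in E_\lambda$; here $s[t_x/x]$ is $ST$-separated, $t[s_y/y]$ is its $TS$-separation, and in particular $s[t_x/x] \theoryeq{U^\lambda} t[s_y/y]$.

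First I would analyse the TRS $\mathcal{R}' \defeq (\Sigma_{\bbU^\lambda}, E')$ obtained by orienting each equation of $E'$ from left to right. The key observation is a structural characterisation of its normal forms. Since $E'$ contains a rule $f(x_1,\ldots,g(\vec y),\ldots,x_n) \to r$ for \emph{every} $f^{(n)}\in\Sigma_\bbS$, $g^{(m)}\in\Sigma_\bbT$ and every position $i$, and since the rewrite relation is closed under context and substitution, a redex is exactly any subterm whose root is an $\bbS$-symbol with a child rooted by a $\bbT$-symbol. Hence a term is an $\mathcal{R}'$-normal form if and only if no $\bbS$-symbol sits directly above a $\bbT$-symbol, equivalently every root-to-leaf path has the shape $T^\ast S^\ast$, equivalently the term is $TS$-separated.

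Next I would use termination. By hypothesis $\mathcal{R}'$ is \SN, so $s[t_x/x]$ has a normal form $n$ with $s[t_x/x] \rightarrowdbl n$ in $\mathcal{R}'$; no confluence is needed, since any normal form will do. By the structural characterisation $n$ is $TS$-separated, and since every rewrite step uses a rule of $E' \subseteq E_\lambda$ we have $s[t_x/x] \theoryeq{U^\lambda} n$, whence $n \theoryeq{U^\lambda} t[s_y/y]$. Thus $n$ and $t[s_y/y]$ are two $TS$-separated terms that are equal in $\bbU^\lambda$. Because $\bbU^\lambda$ is a composite theory (\cref{thm:dl_=>_composite_theory}), \cref{def:composite_theory} forces $n$ and $t[s_y/y]$ to be equal \modST. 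For separated terms, equality \modST means their $TS$-classes coincide, and I would record the elementary fact that this implies $n$ and $t[s_y/y]$ are related using only $E_\bbS$- and $E_\bbT$-equations: replace each $\bbS$-subterm by a fixed representative of its $\bbS$-class (using $E_\bbS$ and congruence), transport along a $\bbT$-derivation witnessing equality of the $\bbT$-parts over the alphabet $S\variables$ (using $E_\bbT$), then expand the representatives back (using $E_\bbS$). Chaining these steps, $s[t_x/x] \;(=_{E'})\; n \;(=_{E_\bbS\cup E_\bbT})\; t[s_y/y]$ places the chosen $E_\lambda$-equation in the congruence generated by $E_\bbS \cup E_\bbT \cup E'$, completing the reverse inclusion.

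The main obstacle is twofold. The essential rewriting content is the structural characterisation of $\mathcal{R}'$-normal forms as precisely the $TS$-separated terms; this is where completeness of $E'$ over all triples $(f,g,i)$ is used, and it is what lets termination do the real work of performing the separation using only the small axiom set. The second, more bookkeeping-heavy point is the passage from ``equal \modST'' to ``$=_{E_\bbS\cup E_\bbT}$-derivably equal'', which relies on \cref{def:composite_theory} together with the fact that $\bbU^\lambda$ is a composite theory; care is needed here because the two separated terms may use different, though $\bbS$-equal, representatives for the same $\bbS$-class.
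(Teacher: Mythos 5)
Your proposal is correct and follows essentially the same route as the paper: normalise with the terminating TRS $E'$ to reach a $TS$-separated term (using that $E'$ is exhaustive over all triples $(f,g,i)$, so normal forms are exactly the separated terms), then invoke the composite-theory property of $\bbU^\lambda$ to get equality \modST, which is derivable from $E_\bbS \cup E_\bbT$ alone. The only piece the paper includes that you skip is the check that $E'$ genuinely is a TRS, i.e.\ that $\var(l) \supseteq \var(r)$ for each rule, which follows because $\lambda_A \colon STA \to TSA$ keeps the right-hand side over the same variable set $A$.
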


\begin{proof}
    Let us show why $(\Sigma_{\bbU^\lambda} = \Sigma_\bbS \uplus \Sigma_\bbT, E')$ is a TRS.
    First, no left-hand side is a variable by definition of $E'$.
    Second, $A \defeq \var(s[t_x]) \supseteq \var(t[s_y])$ holds for all $(s[t_x],t[s_y]) \in E'$.
    This is the case since $\lambda_A : STA \to TSA : \smash{\eqclass{s[\eqclass{t_x}\bbT]}\bbS \mapsto \eqclass{t[\eqclass{s_y}\bbS]}\bbT}$ forces the equivalence class of $t[s_y]$ to be in $TSA$ and therefore to only use the variables in $A$.
    
    Now let us argue why the congruence relation is left unchanged.
    Take an equation $(u,u') \in E_\lambda \cup E_\bbS \cup E_T$.
    The goal is to obtain this equation using only $E_\bbS \cup E_\bbT \cup E'$.
    \begin{itemize}
        \item First, using only equations in $E'$, the $\bbU^\lambda$-terms $u$ and $u'$ can be separated.
        Indeed, we assume that the TRS $(\Sigma_{\bbU^\lambda}, E')$ is terminating, thus both $u$ and $u'$ can be rewritten to normal forms.
        The equations $E'$ are exhaustive in the following sense: every term containing a $\Sigma_\bbT$-symbol below an $\Sigma_\bbS$-symbol is reducible (not in normal form).
        Thus the normal forms of $u$ and $u'$ must be in $\Sigma^*_\bbT \Sigma^*_S\variables$, i.e., separated.    
        Let us denote them $t[s_x/x]$ and $t'[s'_y/y]$. 
    
        \item 
        Since $\bbU^\lambda$ is a composite theory (proven in \cref{thm:dl_=>_composite_theory}), and the separated normal forms $t[s_x/x]$ and $t'[s'_y/y]$ are $\bbU^\lambda$-equal, they must also be equal \modST. 
        By equality \modST, we have a proof of $t[s_x/x] = t'[s'_y/y]$ using only equations from $E_\bbS$ and $E_\bbT$
        (explicitly so when using the equivalent formulation $(4)$ of equality \modST in \cite[Prop.~3.4]{Zwart_2020}).
        \qedhere
    \end{itemize}
\end{proof}

Choosing a right-hand side $r$ in \cref{lem:simplification_E_lambda} uses the axiom of choice, similarly to the definition of $\sep$ in \cref{sec:dl_=>_composite_theory}.
As pointed out in \cref{rem:AC}, if the theories $\bbS$ and $\bbT$ can be oriented to obtain a confluent and terminating TRS, then the normal form is a natural choice for $r$.
For example, in \cite{Pirog_Staton_2017}, the theory of left-zero monoids and the theory with a unary idempotent operation were both oriented, allowing for a practical presentation of the composite theory that the authors called $\mathsf{CUT}$.

\begin{example}
    Let us retrieve the axiomatisation of $\Ring$ as given in \cref{ex:monoid_abelian_ring}, but starting from its corresponding distributive law $\lambda : L \AbGrpMonad \to \AbGrpMonad L$ \cite[§4]{Beck_1969}.
    The set $E$ will only contain equations whose left-hand side is among
    \(
        (x+y)z, \,
        x(y+z), \,
        0 \cdot x, \,
        x \cdot 0, \,
        (-x)y, \text{ and }
        x(-y).
    \)
    For each of those, there are infinitely many choices for the right-hand side.
    For instance $(x \cdot 0, 0)$, $(x \cdot 0, 0+0)$, etc.
    Thankfully, there is an easy choice for the right-hand side $r$, because the theory $\Monoid$ can be oriented,
    \(
        (x y) z \to x(y z), \;\; 1 \cdot x \to x, \text{ and } x \cdot 1 \to x,
    \)
    as can the theory $\AbGrp$ without the symmetry axiom.
    Not taking the symmetry axiom into account simply means that we have to choose one equation between $((x+y)z,\, xz + yz)$ and $((x+y)z,\, yz+xz)$.
    We end up with 
    $6$ equations:
    \begin{align*}
        (x + y)z &= xz + yz,
        & x \cdot 0 &= 0 ,
        & (-x) y &= -(xy), \\
        z (x + y) &= zx + zy,
        & 0 \cdot x &= 0,
        & x (-y) &= -(xy).
    \end{align*}
    Reducing from $6$ to only the $2$ equations of left and right distributivity can be done using automated tools. 
    In our case, we used Prover9~\cite{McCune_2005_Prover9} and obtained the result instantaneously (see \cref{sec:prover9_appendix} in appendix).
\end{example}

Note that if $E' \subseteq E_\lambda$ is not terminating, then the conclusion is not guaranteed to hold.
The example below exhibits a situation where the set $E'$ of equations as defined in \cref{lem:simplification_E_lambda} is not enough to generate all of the $E_\lambda$ equations.

\begin{example}
    \label{ex:counter_example_aab}
    We show that the subset of equations of $E_\lambda$ where all left-hand sides have layers $(1,1)$ is not always sufficient (together with $E_\bbS$ and $E_\bbT$) to generate all $E_\lambda$ equations obtained from a distributive law $\lambda$. 
    This example is an extension of the well-known non-terminating TRS $ab \to bbaa$ \cite[Ex.2.3.9]{Terese_2003}.

    Consider the theories $\bbS$ and $\bbT$, with signatures $\Sigma_\bbS \defeq \set{a^{(1)}}$ and $\Sigma_\bbT \defeq \set{b^{(1)}}$, and equations $E_\bbS \defeq \set{aaa = aa}$ and $E_\bbT \defeq \set{bbb = bb}$.
    We use some string rewriting notations, such as $aax$ or $a^2x$ as shorthand for $a(a(x))$, etc.
    The set of equivalence classes of $\bbS$ is $S\variables = \{ \eqS{a^2x}, \eqS{ax}, \eqS{x} \mid x \in \variables\}$. 
    Similarly,  $T\variables = \{ \eqT{b^2x}, \eqT{bx}, \eqT{x} \mid x \in \variables\}$. 
    We define a mapping
    \[\begin{array}{rcll}
    \lambda\colon ST\variables & \to & TS\variables\\[.2em]
         \eqS{a^n\eqT{b^mx}} &\mapsto & \eqT{b^2\eqS{a^2x}}, & \text{ for }n, m \in \{1,2\}  \\[.2em]
         \eqS{a^n\eqT{x}} &\mapsto & \eqT{\eqS{a^nx}}, & \text{ for }n \in \{1,2\} \\[.2em]
         \eqS{\eqT{b^nx}} &\mapsto& \eqT{b^n\eqS{x}}, & \text{ for }n \in \{1,2\}\\[.2em]
         \eqS{\eqT{x}} &\mapsto& \eqT{\eqS{x}}
    \end{array}
    \]
    We show that $\lambda$ is a distributive law: %\\
    \begin{itemize}
        \item 
        Unit law \eqref{eqn:distributive_law_unit_axiom_S}: 
        $\lambda_\variables(S\eta^T_{\variables}(\eqS{a^nx})) = \lambda_\variables(\eqS{a^n\eqT{x}}) = \eqT{\eqS{a^nx}} = \eta^T_{S\variables}(\eqS{a^nx})$.%\\

        \item 
        Unit law \eqref{eqn:distributive_law_unit_axiom_T}:  
        $\lambda_\variables(\eta^S_{T\variables}(\eqT{b^nx})) = \lambda_\variables(\eqS{\eqT{b^nx}}) = \eqT{b^n\eqS{x}} = T\eta^S_\variables(\eqT{b^nx})$.%\\

        \item 
        Multiplication law \eqref{eqn:distributive_law_multiplication_axiom_S}: We only show the case for $n,m,k \geq 1$. Other cases can be easily verified in a similar manner.\\
        \begin{tikzcd}[scale cd=.95]
            \eqS{a^n \eqS{a^m \eqT{b^kx}}} \in SST & \eqS{a^n\eqT{b^2\eqS{a^2x}}} \in STS & \eqT{b^2 \eqS{a^2\eqS{a^2x}}} \in TSS \\
            \eqS{a^{n+m} \eqT{b^kx} }\in ST & & \eqT{b^2\eqS{a^2x}} = \eqT{b^2\eqS{a^4x}} \in TS
            \ar[from=1-1, to=1-2, "S\lambda"]
            \ar[from=1-2, to=1-3, "\lambda S"]
            \ar[from=1-1, to=2-1, "\mu^S T"']
            \ar[from=1-3, to=2-3, "T \mu^S"]
            \ar[from=2-1, to=2-3, "\lambda"']
        \end{tikzcd}%\\

        \item 
        Multiplication law \eqref{eqn:distributive_law_multiplication_axiom_T}:
        Analogous to the previous point.
    \end{itemize}
    From \cref{thm:dl_=>_composite_theory}, defining the set $E_\lambda$ of distributivity equations as below ensures that $E_\bbS \cup E_\bbT \cup E_\lambda$ is an axiomatization of the composite theory $\bbU^\lambda$.
    \[\begin{array}{rcl}
        E_\lambda &=& \{ a^nb^mx = b^2a^2x \mid m,n \geq 1, x \in \variables\} \cup\\
        && \{ a^nx = a^nx, b^nx=b^nx \mid n \in \{0,1,2\}, x \in \variables\}    
        \end{array}
    \]
    The subset of equations of $E_\lambda$ that have left-hand side with $ST$-layers $(1,1)$ is
    $E' = \{ab = b^2a^2\}$. However, we claim that $E_\bbS \cup E_\bbT \cup E'$ cannot derive all equations in $E_\lambda$. 
    Indeed, we observe that the distributivity equation $aab =_{E_\lambda} bbaa$ cannot be derived.
    Trying to do so leaves us stuck in a loop:
    (we underline the part where an equation is applied)
    \begin{align*}
        a\underline{ab} & =_{E'} \underline{ab}baa  =_{E'} bba\underline{ab}aa
        =_{E'}  bbabb\underline{aaaa}
        =_{E_\bbS} bb\underline{ab}baa \\ & =_{E'} \underline{bbbb}aabaa  
         =_{E_\bbT} bbaabaa =_{E'} \ldots \text{ (loop) } 
    \end{align*}
    It is not hard to see that there are no other ways of proving $aab = bbaa$ in $E_\bbS \cup E_\bbT \cup E'$. Hence $E_\bbS \cup E_\bbT \cup E'$ does not generate the same congruence as $E_\bbS \cup E_\bbT \cup E_\lambda$.
    In line with \cref{lem:simplification_E_lambda}, the above indeed also shows that $E'$, when viewed as a TRS, is not terminating. 
    Note that \cref{lem:simplification_E_lambda} only says that termination is a sufficient condition for a (1,1)-axiomatisation.
    It does not exclude that in some composite theories, the set of equations $E_\bbS \cup E_\bbT \cup E'$ might axiomatise $\bbU^\lambda$ even in presence of non-termination.
\end{example}

The next lemma identifies a class of equations where termination of the TRS $(\Sigma_\bbU = \Sigma_\bbS \uplus \Sigma_\bbT, E')$ is guaranteed.
These are equations in which the right-hand sides have layers $(n,1)$, which is inspired from similar results for string rewriting obtained by Zantema \& Geser~\cite{Zantema_Geser_2000_0p1q_1r0s}.

\begin{lemma}
    \label{lem:(1.1)->(n.1)_layers}
    Let $\bbS$ and $\bbT$ be two algebraic theories.
    Let $R$ be a set rules of the form $s[t_x/x] \to t[s_y/y]$.
    Let $Z = \{t_x \mid t_x \text{ is a variable} \}$, i.e., all $z \in Z$ occur directly below an $\bbS$-operation in $s[t_x/x]$.
    If each $s[t_x/x]$ has $ST$-layers $(1,1)$, 
    each $t[s_y/y]$ has $TS$-layers $(n,1)$ for some $n$ not fixed,
    and each $s_y$ is linear\footnote{
        \emph{Linear} in a TRS sense, i.e. variables appearing \textit{at most} once.
    } 
    in $Z$,
    then $R$ is terminating. \customqed
\end{lemma}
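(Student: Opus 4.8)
The plan is to exhibit a single reduction order containing $R$, namely a \emph{multiset path order} (MPO) $>_{\mathsf{mpo}}$ induced by a precedence in which all $\bbS$-operation symbols form one equivalence class, all $\bbT$-operation symbols form another, and every $\bbS$-symbol is strictly above every $\bbT$-symbol. Since the MPO of a well-founded (quasi-)precedence is a well-founded, stable and monotonic strict order on terms, it suffices to prove $l >_{\mathsf{mpo}} r$ for every rule $l = s[t_x/x] \to t[s_y/y] = r$ in $R$; then $\to_R \subseteq {>_{\mathsf{mpo}}}$ and termination follows.

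Fixing a rule, I would first unpack the two layer hypotheses. By $ST$-layers $(1,1)$, $l = f(a_1,\dots,a_k)$ where $f$ is a single $\bbS$-symbol and each argument $a_j$ is either a variable ($a_j \in Z$) or a term $g_j(\vec w_j)$ with $g_j$ a $\bbT$-symbol and $\vec w_j$ variables; at least one $a_j$ is of the latter form, otherwise the left layers would be $(1,0)$. By $TS$-layers $(n,1)$, the right-hand side $r = t[s_y/y]$ is a $\bbT$-term $t$ whose variable slots are filled by the $s_y$, and each $s_y$ has depth at most $1$, i.e.\ $s_y$ is a variable, an $\bbS$-constant, or $f'(\vec u)$ with $f'$ an $\bbS$-symbol and $\vec u$ \emph{variables}, all lying in $\var(l)$ since $\var(r) \subseteq \var(l)$.

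I would then prove $l >_{\mathsf{mpo}} r$ by induction on the subterm of $r$ under comparison, establishing $l >_{\mathsf{mpo}} r'$ for every subterm $r'$ of $r$. If $r'$ is $\bbT$-rooted, its root sits strictly below $f$ in the precedence, so the MPO clause for a greater root reduces the goal to $l >_{\mathsf{mpo}} (\text{each child of } r')$, discharged by the induction hypothesis; this walks down through all $\bbT$-layers of $t$. The base cases are the $s_y$: a variable or an $\bbS$-constant is immediate (the subterm clause, resp.\ comparison against the empty argument multiset, using $k \geq 1$), and for $s_y = f'(\vec u)$ the roots $f$ and $f'$ are equivalent, so the MPO reduces the goal to the multiset comparison $\{a_1,\dots,a_k\} >^{\mathsf{mul}}_{\mathsf{mpo}} \{u_1,\dots,u_{k'}\}$.

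This multiset comparison is the heart of the argument and the step I expect to be delicate. I would verify it via the Dershowitz--Manna characterisation, splitting $\{u_i\}$ into its $Z$-variables and its non-$Z$ variables. Each $Z$-variable of $\vec u$ equals one of the variable arguments $a_j$; linearity of $s_y$ in $Z$ guarantees it occurs at most once in $\vec u$ while occurring at least once among the $a_j$, so these match one-to-one as equal elements. Each non-$Z$ variable $w$ of $\vec u$ lies in $\var(l)\setminus Z$, hence occurs inside some $g_j(\vec w_j) = a_j$, giving $a_j >_{\mathsf{mpo}} w$ by the subterm property, so every such $w$ is strictly dominated by a $\bbT$-rooted argument of $f$. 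Since at least one $\bbT$-rooted argument is present and is never matched by an element of $\vec u$ (which are all variables), the removed sub-multiset is non-empty and the comparison is strict. This is precisely where both hypotheses enter: the $(n,1)$ shape forces $\vec u$ to be variables, so no $\bbS$-rooted term of $r$ must ever dominate an $\bbS$-rooted term — exactly what fails for the non-terminating $ab \to b^2a^2$ of \cref{ex:counter_example_aab}, where $s_y$ has depth $2$ — while linearity in $Z$ prevents a $Z$-variable from occurring on the right with higher multiplicity than is available on the left. Concluding, $\to_R \subseteq {>_{\mathsf{mpo}}}$, and well-foundedness of the MPO yields termination of $R$. \customqed
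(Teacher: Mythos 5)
Your proof is correct and follows essentially the same route as the paper's: a multiset path order with all $\bbS$-symbols mutually equivalent and strictly above all $\bbT$-symbols, a structural descent through the $\bbT$-part of the right-hand side, and a final multiset comparison in which linearity in $Z$ matches the variable arguments one-to-one while the (at least one) $\bbT$-rooted argument of the left-hand side is strictly removed. Your write-up is somewhat more explicit than the paper's (spelling out the Dershowitz--Manna verification and the role of each hypothesis), but the argument is the same.
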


\begin{example}\label{ex:axiomatisations}
    We give some axiomatisations of composite theories resulting from distributive laws in the literature:
    \begin{enumerate}
        \item Let $R(X) = X^A$ be the reader monad, with $A = \set{a_1, \ldots, a_n}$.
        There is a distributive law of the finite distribution monad $\distribution$ over $R$, $\lambda : \distribution R \to R\distribution$, that sends $p_1 h_1 + \ldots + p_n h_n$ to $(a \mapsto p_1 h_1(a) + \ldots + p_n h_n(a))$ \cite[Example 1.34]{Goy_2021_Thesis}.
        Recall that $R$ is presented algebraically by a single operation $f^{(n)}$ with two equations~\cite{Plotkin_Power_2001}, and $\distribution$ is presented by convex algebras. 
        The distribution axioms as described in \cref{lem:simplification_E_lambda} are in our case, for each $p \in [0,1]$
        \begin{align*}
            f(x_1, \ldots, x_n) \oplus_p y &= f(x_1 \oplus_p y, \ldots, x_n \oplus_p y). \\
            x \oplus_p f(y_1, \ldots, y_n) &= f(x \oplus_p y_1, \ldots, x \oplus_p y_n).
        \end{align*}
        We see that the right-hand sides of these equations have layers $(1,1)$ and both equations satisfy the linearity requirement of \cref{lem:(1.1)->(n.1)_layers}, thus ensuring termination. Hence by \cref{thm:dl_=>_composite_theory,lem:simplification_E_lambda}, the above equations together with the equations for $f$ and for convex algebras present the composite monad on $R \distribution$ induced by $\lambda$.
        Furthermore, we notice that each of the above equations can be derived from the other one using the axioms of convex algebras.
        Therefore, we only need to include one of them for each $p$.
        
        \item 
        There is a distributive law of multisets over distributions $\lambda\colon \multiset \distribution \to \distribution \multiset$ called the \emph{parallel multinomial law} in \cite{Jacobs_2021_MD_DM_distributive_law}, see also \cite{Dahlqvist_Parlant_Silva, Dash_Staton} and \cite[Ex.~1.37]{Goy_2021_Thesis}.
        It sends $\Lbag px_1 + (1-p)x_2, y \Rbag$ to $p\Lbag x_1,y \Rbag + (1-p) \Lbag x_2,y \Rbag$, which can be expressed in the syntax of convex algebras and commutative monoids as       
        \[
            (x_1 \oplus_p x_2) \cdot y = (x_1 \cdot y) \oplus_p (x_2 \cdot y).
        \]
        By \cref{thm:dl_=>_composite_theory}, \cref{lem:simplification_E_lambda} and \cref{lem:(1.1)->(n.1)_layers} these equations (one for each $p \in [0,1]$), together with the axioms of convex algebras and commutative monoids, present the composite monad on $\distribution\multiset$ induced by $\lambda$. 

        \item There is a distributive law $\lambda\colon L^+ L^+ \to L^+L^+$ for the non-empty list monad over itself \cite{ManesMulry_2007}. It sends a list of lists to the singleton list containing the list of all heads: $[[a,b],[c],[d,e,f]] \mapsto [[a,c,d]]$. We get the following distributivity axioms for the composite theory:
        \vspace{-0.25\baselineskip}
        \begin{align*}
            a * (b \star c) & = a * b \\
            (a \star b) * c & = a * c.
        \end{align*}
        Again, the equations satisfy the conditions for \cref{lem:(1.1)->(n.1)_layers}, and our results imply that the above equations together with the semigroup axioms for $*$ and $\star$ present the composite monad on $L^+ L^+$ induced by $\lambda$.
    \end{enumerate}
\end{example}

\section{Conclusion}\label{sec:conclusion}

% Summary of contribution
In this paper, we proved the correspondence between composite theories of $\bbT$ after $\bbS$ and distributive laws $\lambda\colon ST \to TS$.
Furthermore, we gave sufficient criteria for when a minimal set $E' \subseteq E_\lambda$ of distribution equations, along with $E_\bbS$ and $E_\bbT$, axiomatises the composite theory.

% Discussion regarding the results
The set $E'$ itself is unlikely to turn many heads, as distributive laws are often informally described in the literature in terms of such simple distribution axioms. The surprise, however, comes from the fact that $E'$ is not always enough (see \cref{ex:counter_example_aab}). This is a possible pitfall similar to the `simplicity' of the various false distributive laws of the powerset monad over itself \cite{Klin_Salamanca_2018}.

% Opening 1: More criteria like section 6
There are several directions for future work.
We showed that termination of $E'$ (as TRS) is sufficient for $E_\bbS \cup E_\bbT \cup E'$ to axiomatise the composite theory (\cref{lem:simplification_E_lambda}), and that taking equations in $E'$ to have layers $(1,1) \to (n,1)$ ensures termination (\cref{lem:(1.1)->(n.1)_layers}).
We would like to identify other criteria for termination, and make more use of term rewriting techniques.
In fact, we did consider equations with layers $(1,1) \to (1,n)$, but we were only able to prove termination (using dependency pairs) by adding a strong linearity requirement which seems too restrictive to be useful (see \cref{lem:(1.1)->(1.n)_layers} in the appendix).
We speculate that one could allow layers $(1,1) \to (2,2)$ in which some symbol in the left-hand side is absent from the right-hand side in order to avoid problems such as in \cref{ex:counter_example_aab} with $ab \to bbaa$.

% Opening 2: Weak composite theories
In light of negative results concerning monad compositions \cite{Klin_Salamanca_2018, Varacca_Winskel_2006, Zwart_Marsden_journal_2022, Dahlqvist_Neves_2018_no_monad_on_PD}, there has been much interest in understanding the limits of monad composition. Positive results using algebraic methods were given in  \cite{Dahlqvist_Parlant_Silva}.
Another approach has been to generalise to so-called weak distributive laws \cite{Garner_2020, Goy_2021_Thesis}.
Presentations of monads arising from the composition of monads via a weak distributive law, in particular monads for nondeterminism and probabilities, have been given in \cite{Bonchi_Sokolova_Vignudelli_2019, Goy_Petrisan_2020}. 
These presentations are obtained by adding a simple distribution axiom to the two underlying theories, similar to our results in Section~\ref{sec:application}, but the resulting theory is no longer a composite theory as the essential uniqueness \modST is not guaranteed to hold.
Another future line of work would be to extend the current correspondence to weak distributive laws \cite{Garner_2020, Goy_2021_Thesis} thereby giving a definition of \emph{weak composite theories}. Such a correspondence would allow for a more thorough study of weak distributive laws on the algebraic level, and could perhaps lead to no-go theorems for weak distributive laws.

\subsubsection{Acknowledgments:} 
%We thank all anonymous reviewers for their valuable feedback and suggestions.
Alo\"is Rosset and J\"org Endrullis received funding from the Netherlands Organization for Scientific Research (NWO) under the Innovational Research Incentives Scheme Vidi (project. No. VI.Vidi.192.004).

\newpage

{
\bibliographystyle{meta/splncs04}
\sloppy
\bibliography{main}

\renewcommand{\doi}[1]{\url{https://doi.org/#1}}
\begin{thebibliography}{10}
\providecommand{\url}[1]{\texttt{#1}}
\providecommand{\urlprefix}{URL }
\providecommand{\doi}[1]{https://doi.org/#1}

\bibitem{Aczel_Adamek_Milius_Velebil_2003}
Aczel, P., Ad{\'{a}}mek, J., Milius, S., Velebil, J.: Infinite trees and completely iterative theories: a coalgebraic view. Theoretical Computer Science  \textbf{300}(1-3),  1--45 (2003). \doi{10.1016/S0304-3975(02)00728-4}

\bibitem{Arts_Giesl_2000_dependency_pairs}
Arts, T., Giesl, J.: Termination of term rewriting using dependency pairs. Theoretical Computer Science  \textbf{236}(1),  133--178 (2000). \doi{10.1016/S0304-3975(99)00207-8}

\bibitem{Awodey_2006}
Awodey, S.: Category Theory. Oxford Logic Guides, Ebsco Publishing (2006)

\bibitem{Barr_Wells_1985_TTT}
Barr, M., Wells, C.: Toposes, Triples and Theories. Comprehensive studies in mathematics, Springer New York (1985)

\bibitem{Beck_1969}
Beck, J.: Distributive laws. In: Seminar on triples and categorical homology theory. p. 119–140. Springer (1969)

\bibitem{Bonchi_Sokolova_Vignudelli_2019}
Bonchi, F., Sokolova, A., Vignudelli, V.: The theory of traces for systems with nondeterminism and probability. In: 34th Annual {ACM/IEEE} Symposium on Logic in Computer Science, {LICS} 2019. pp. 1--14. {IEEE} (2019). \doi{10.1109/LICS.2019.8785673}

\bibitem{Borceux_1994_vol2}
Borceux, F.: Handbook of categorical algebra. 2, Encyclopedia of Mathematics and its Applications, vol.~51. Cambridge University Press, Cambridge (1994), categories and structures

\bibitem{Cheng_2020}
Cheng, E.: Distributive laws for {Lawvere} theories. Compositionality  (2020). \doi{https://doi.org/10.32408/compositionality-2-1}

\bibitem{Dahlqvist_Parlant_Silva}
Dahlqvist, F., Parlant, L., Silva, A.: Layer by layer – combining monads. In: Fischer, B., Uustalu, T. (eds.) Theoretical Aspects of Computing (ICTAC 2018). Lect. Notes Comp. Sci., vol. 11187. Springer (2018). \doi{10.1007/978-3-030-02508-3_9}

\bibitem{Dahlqvist_Neves_2018_no_monad_on_PD}
Dahlqvist, F., Neves, R.: Compositional semantics for new paradigms: probabilistic, hybrid and beyond. arXiv preprint arXiv:1804.04145  (2018)

\bibitem{Dash_Staton}
Dash, S., Staton, S.: A monad for probabilistic point processes. In: Spivak, D.I., Vicary, J. (eds.) {\rm Proceedings of the 3rd Annual International} Applied Category Theory Conference 2020, {\rm Cambridge, USA, 6-10th July 2020}. Electronic Proceedings in Theoretical Computer Science, vol.~333, pp. 19--32. Open Publishing Association (2021). \doi{10.4204/EPTCS.333.2}

\bibitem{Dershowitz_1982_Orderings_for_TRS}
Dershowitz, N.: Orderings for term-rewriting systems. Theoretical Computer Science  \textbf{17}(3),  279--301 (1982). \doi{10.1016/0304-3975(82)90026-3}

\bibitem{Dershowitz_Manna_1979_Multiset_ordering}
Dershowitz, N., Manna, Z.: Proving termination with multiset orderings. Commun. {ACM}  \textbf{22}(8),  465--476 (1979). \doi{10.1145/359138.359142}

\bibitem{Garner_2020}
Garner, R.: The vietoris monad and weak distributive laws. Applied Categorical Structures  \textbf{28}(2),  339--354 (2020). \doi{10.1007/s10485-019-09582-w}

\bibitem{Giesl_2004_Dependency_pairs}
Giesl, J., Thiemann, R., Schneider{-}Kamp, P.: The dependency pair framework: Combining techniques for automated termination proofs. In: Baader, F., Voronkov, A. (eds.) Logic for Programming, Artificial Intelligence, and Reasoning, 11th International Conference, {LPAR} 2004, Montevideo, Uruguay, March 14-18, 2005, Proceedings. Lecture Notes in Computer Science, vol.~3452, pp. 301--331. Springer (2004). \doi{10.1007/978-3-540-32275-7\_21}

\bibitem{Goy_2021_Thesis}
Goy, A.: On the compositionality of monads via weak distributive laws. (Compositionnalit{\'{e}} des monades par lois de distributivit{\'{e}} faibles). Ph.D. thesis, University of Paris-Saclay, France (2021), \url{https://tel.archives-ouvertes.fr/tel-03426949}

\bibitem{Goy_Petrisan_2020}
Goy, A., Petrisan, D.: Combining probabilistic and non-deterministic choice via weak distributive laws. In: Hermanns, H., Zhang, L., Kobayashi, N., Miller, D. (eds.) {LICS} '20: 35th Annual {ACM/IEEE} Symposium on Logic in Computer Science. pp. 454--464. {ACM} (2020). \doi{10.1145/3373718.3394795}

\bibitem{Hofbauer_1992_Termination_mpo}
Hofbauer, D.: Termination proofs by multiset path orderings imply primitive recursive derivation lengths. Theoretical Computer Science  \textbf{105}(1),  129--140 (1992). \doi{10.1016/0304-3975(92)90289-R}

\bibitem{Jacobs_2010}
Jacobs, B.: Convexity, duality and effects. In: Calude, C.S., Sassone, V. (eds.) Theoretical Computer Science - 6th {IFIP} {TC} 1/WG 2.2 International Conference, {TCS} 2010. Proceedings. {IFIP} Advances in Information and Communication Technology, vol.~323, pp. 1--19. Springer (2010). \doi{10.1007/978-3-642-15240-5\_1}

\bibitem{Jacobs_2021_MD_DM_distributive_law}
Jacobs, B.: From multisets over distributions to distributions over multisets. In: Proceedings of the 36th Annual ACM/IEEE Symposium on Logic in Computer Science. LICS '21, Association for Computing Machinery, New York, NY, USA (2021). \doi{10.1109/LICS52264.2021.9470678}

\bibitem{Jacobs_Silva_Sokolova_2014}
Jacobs, B., Silva, A., Sokolova, A.: Trace semantics via determinization. Journal of Computer and System Sciences  \textbf{81}(5),  859--879 (2015). \doi{10.1016/j.jcss.2014.12.005}

\bibitem{Klin_Salamanca_2018}
Klin, B., Salamanca, J.: Iterated covariant powerset is not a monad. In: Staton, S. (ed.) Proceedings of the Thirty-Fourth Conference on the Mathematical Foundations of Programming Semantics, {MFPS} 2018. Electronic Notes in Theoretical Computer Science, vol.~341, pp. 261--276. Elsevier (2018). \doi{10.1016/j.entcs.2018.11.013}

\bibitem{MacLane_1971}
MacLane, S.: Categories for the Working Mathematician, Graduate Texts in Mathematics, vol.~5. Springer-Verlag, New York (1971)

\bibitem{Manes_1976}
Manes, E.: Algebraic Theories, Graduate Texts in Mathematics, vol.~26. Springer (1976). \doi{10.1007/978-1-4612-9860-1}

\bibitem{ManesMulry_2007}
Manes, E., Mulry, P.: Monad compositions. i: General constructions and recursive distributive laws. Theory and Applications of Categories  \textbf{18},  172--208 (2007)

\bibitem{McCune_2005_Prover9}
McCune, W.: Release of prover9. In: Mile high conference on quasigroups, loops and nonassociative systems, Denver, Colorado (2005)

\bibitem{Moggi_1991}
Moggi, E.: Notions of computation and monads. Information and Computation  \textbf{93}(1),  55--92 (1991). \doi{10.1016/0890-5401(91)90052-4}, selections from 1989 IEEE Symposium on Logic in Computer Science

\bibitem{Parlant_2020_Thesis}
Parlant, L.: Monad composition via preservation of algebras. Ph.D. thesis, University College London, {UK} (2020), \url{https://ethos.bl.uk/OrderDetails.do?uin=uk.bl.ethos.819930}

\bibitem{Pirog_Staton_2017}
Pir{\'{o}}g, M., Staton, S.: Backtracking with cut via a distributive law and left-zero monoids. Journal of Functional Programming  \textbf{27}, ~e17 (2017). \doi{10.1017/S0956796817000077}

\bibitem{Plotkin_Power_2001}
Plotkin, G.D., Power, J.: Notions of computation determine monads. In: Nielsen, M., Engberg, U. (eds.) Foundations of Software Science and Computation Structures, 5th International Conference, {FOSSACS} 2002. Proceedings. LNCS, vol.~2303, pp. 342--356. Springer (2002). \doi{10.1007/3-540-45931-6\_24}

\bibitem{Riehl_2017}
Riehl, E.: Category Theory in Context. Aurora: Dover Modern Math Originals, Dover Publications (2017)

\bibitem{Rosset_Hansen_Endrullis_2022}
Rosset, A., Hansen, H.H., Endrullis, J.: Algebraic presentation of semifree monads. In: Hansen, H.H., Zanasi, F. (eds.) Coalgebraic Methods in Computer Science. pp. 110--132. Springer International Publishing, Cham (2022)

\bibitem{SchneiderKamp_etal_2007_Recursive_path_orders}
Schneider{-}Kamp, P., Thiemann, R., Annov, E., Codish, M., Giesl, J.: Proving termination using recursive path orders and {SAT} solving. In: Konev, B., Wolter, F. (eds.) Frontiers of Combining Systems, 6th International Symposium, FroCoS 2007, Liverpool, UK, September 10-12, 2007, Proceedings. Lecture Notes in Computer Science, vol.~4720, pp. 267--282. Springer (2007). \doi{10.1007/978-3-540-74621-8\_18}

\bibitem{Terese_2003}
Terese: Term rewriting systems, Cambridge tracts in theoretical computer science, vol.~55. Cambridge University Press (2003)

\bibitem{Varacca_Winskel_2006}
Varacca, D., Winskel, G.: Distributing probabililty over nondeterminism. Mathematical Structures in Computer Science  \textbf{16},  87--113 (2006). \doi{10.1017/S0960129505005074}

\bibitem{Zantema_Geser_2000_0p1q_1r0s}
Zantema, H., Geser, A.: A complete characterization of termination of 0\({}^{\mbox{p}}\) 1\({}^{\mbox{q}}\)-{\textgreater} 1\({}^{\mbox{r}}\) 0\({}^{\mbox{s}}\). Appl. Algebra Eng. Commun. Comput.  \textbf{11}(1),  1--25 (2000). \doi{10.1007/S002009900019}

\bibitem{Zwart_2020}
Zwart, M.: On the Non-Compositionality of Monads via Distributive Laws. Ph.D. thesis, Department of Computer Science‚ University of Oxford (September 2020)

\bibitem{Zwart_Marsden_journal_2022}
Zwart, M., Marsden, D.: {No-Go Theorems for Distributive Laws}. {Logical Methods in Computer Science}  \textbf{{Volume 18, Issue 1}} (Jan 2022). \doi{10.46298/lmcs-18(1:13)2022}

\end{thebibliography}
}
\newpage

\section{Appendix}

\subsection{Preliminaries monads extra}

\begin{table}[H]
    \centering
    \begin{tabular}{c c}
        \toprule
        & \\
        {
            \AxiomC{$(s,t) \in E$}
            \RightLabel{\scriptsize Axiom$_{E}$}
            \UnaryInfC{$s=t$}
            \DisplayProof
        }
        &{
            \AxiomC{}
            \RightLabel{\scriptsize Reflexivity}
            \UnaryInfC{$t=t$}
            \DisplayProof
        } \\ & \\
        {
            \AxiomC{$s=t$}
            \RightLabel{\scriptsize Symmetry}
            \UnaryInfC{$t=s$}
            \DisplayProof
        }
        &{
            \AxiomC{$t_1=t_2$}
            \AxiomC{$t_2=t_3$}
            \RightLabel{\scriptsize Transitivity}
            \BinaryInfC{$t_1=t_3$}
            \DisplayProof
        } \\ & \\
        {
            \AxiomC{$s_1=t_1$}
            \AxiomC{$\ldots$}
            \AxiomC{$s_n=t_n$}
            \RightLabel{\scriptsize Congruence}
            \TrinaryInfC{$\op(s_1,\ldots,s_n) = \op(t_1,\ldots,t_n)$}
            \DisplayProof
        }
        &{
            \AxiomC{$s=t$}
            \RightLabel{\scriptsize Substitution}
            \UnaryInfC{$s[f]=t[f]$}
            \DisplayProof
        } \\ & \\
        \bottomrule
    \end{tabular}
    \small
    \caption{Inference rules of equational logic, with $n \in \N$, $\op^{(n)} \in \Sigma$, and $f$ a substitution} 
    \label{tab:equational_logic}
\end{table}
\vspace*{-\baselineskip}

\subsection{Preliminaries TRS extra}

% Other termination techniques that we will employ are the \emph{dependency pairs} method \cite{Arts_Giesl_2000_dependency_pairs,Giesl_2004_Dependency_pairs}, and the \emph{multiset path order} method \cite{SchneiderKamp_etal_2007_Recursive_path_orders}.

We present the version of the multiset path order \cite{SchneiderKamp_etal_2007_Recursive_path_orders} that we will use.
% \begin{lemma}[Critical pair lemma]
%     \label{lem:critical_pair_lemma}
%     A TRS is locally confluent if and only if all its critical pairs converge.
% \end{lemma}

\begin{definition}[Multiset ordering, \cite{Dershowitz_Manna_1979_Multiset_ordering, Dershowitz_1982_Orderings_for_TRS}]
    \label{def:multiset_ordering}
    A transitive, irreflexive relation $(A,\succ)$ can be extended to the multiset $(\multiset A, \succ_\multiset)$, where a multiset is reduced by removing one or more elements and replacing them with any finite number (possibly $0$) of elements, all smaller than one of the elements removed.
\end{definition}

% \begin{definition}[Multiset extension, \cite{SchneiderKamp_etal_2007_Recursive_path_orders}]
%     A transitive relation $(A,\prec)$ % (i.e., reflexive and transitive) 
%     can be extended to the multiset $(\multiset A, \prec_\multiset)$ inductively as follows:
%     \begin{itemize}
%         \item 
%         for $n,m \in \N$, $a_1, \ldots, a_n, b_1, \ldots, b_m \in A$, and $b_1, \ldots, b_m \prec a_i$ for some $i$, then
%         \todo{we could say smaller than $a_n$ because the order doesn't mater. it makes less long}
%         \[
%             \Lbag a_1, \ldots, a_{i-1}, b_1, \ldots, b_m, a_{i+1}, \ldots, a_n \Rbag
%             \prec_\multiset
%             \Lbag a_1, \ldots, a_n \Rbag. 
%         \]

%         \item 
%         \todo{transitivity}.
%     \end{itemize}
% \end{definition}

The multiset ordering $(\multiset A, \succ_\multiset)$ is well-founded iff $(A,\succ)$ is well-founded \cite[Theorem §2]{Dershowitz_Manna_1979_Multiset_ordering}.

\begin{example}
    Taking the usual \qmarks{greater than} on natural numbers $(\N,>)$, then $\Lbag 2, 2, 0 \Rbag >_\multiset \Lbag 2, 1, 1 \Rbag$ in the multiset ordering, since an occurrence of $2$ has been replaced by two $1$'s, and an occurrence of $0$ has been removed. 
\end{example}

\begin{definition}[{\cite[§2]{Hofbauer_1992_Termination_mpo}}]
    Let $\bbU$ be an algebraic theory.
    We define \myemph{permutation equivalence} of $\bbU$-terms by $x \sim x$ for variables $x \in \variables$, and $f[t_1, \ldots, t_n] \sim g(s_1, \ldots, s_m)$ if and only if $f=g$ (which implies $m=n)$ and for all $i$, $t_i \sim s_{\pi(i)}$ for some permutation $\pi$ of $\set{1, \ldots, n}$.
\end{definition}

\begin{definition}[Multiset path order, \cite{SchneiderKamp_etal_2007_Recursive_path_orders}]
    \label{def:multiset_path_order}
    Let $\bbU$ be an algebraic theory with a reflexive, transitive relation on its function symbols $(\Sigma_\bbU, \leq)$.
    Let $< \defeq (\leq \setminus \geq)$ denote the strict version and $\approx \defeq (\leq \cap \geq)$ denote the induced equivalence.
    We inductively define a transitive relation $\succ$ on $\bbU$-terms as follows.
    Let $l,r$ be $\bbU$-terms.
    We say that $l \succ r$ if and only if $l=f(l_1, \ldots, l_n)$ for some $f^{(n)} \in \Sigma_\bbU$ and one of the following hold:
    \begin{enumerate}[1.]
        \item 
        \label{it:mpo1}
        $l_i \succ r$ or $l_i \sim r$ for some $i$; or
        
        \item
        \label{it:mpo2}
        $r = g(r_1, \ldots, r_m)$ for \textit{some} $g^{(m)} \in \Sigma_\bbU$ and either
        \begin{enumerate}[i.]
            \item 
            \label{it:mpo2i}
            $f > g$ and $l \succ r_j$ for all $j$, or
            
            \item 
            \label{it:mpo2ii}
            $f \approx g$ and $\Lbag l_1, \ldots, l_n \Rbag \succ_\multiset \Lbag r_1, \ldots, r_m \Rbag$.
        \end{enumerate}
    \end{enumerate}
\end{definition}

\begin{lemma}[{\cite[§3]{Hofbauer_1992_Termination_mpo}}]
    The relation $\succ$ of \cref{def:multiset_path_order} is a reduction order, i.e, is well-founded and closed under substitutions and contexts.
    Therefore, a TRS $(\Sigma, R)$ is terminating if $R \subseteq {\succ}$.
\end{lemma}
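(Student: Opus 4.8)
The plan is to verify the three defining properties of a reduction order for $\succ$ — closure under contexts, closure under substitutions, and well-foundedness — and then to read off termination. Throughout I will use the auxiliary fact, proved by a simultaneous induction on the definition of $\succ$, that the permutation equivalence $\sim$ is compatible with $\succ$: if $l \sim l'$ then $l \succ r$ iff $l' \succ r$, and symmetrically on the right (this uses that $\succ_\multiset$ respects $\sim$ on elements). I also record the \emph{subterm property}, $f(l_1,\dots,l_n) \succ l_i$, which is immediate from \cref{it:mpo1} of \cref{def:multiset_path_order} together with reflexivity of $\sim$ (take $l_i \sim l_i$).

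Closure under contexts and substitutions are the easy parts. For contexts it suffices, by induction on the depth of a one-hole context, to treat a single layer $f(u_1,\dots,[\,\cdot\,],\dots,u_n)$: if $s \succ t$ then the two argument multisets differ only by replacing $s$ with the strictly smaller $t$, which is exactly one step of the multiset reduction of \cref{def:multiset_ordering}, so \cref{it:mpo2ii} (with $f \approx f$) gives $f(\dots,s,\dots) \succ f(\dots,t,\dots)$. For substitutions I would argue by induction on the derivation of $l \succ r$: \cref{it:mpo1} is preserved since $\sim$ and $\succ$ are preserved by the substitution (induction hypothesis); \cref{it:mpo2i} is preserved because $f>g$ is unaffected and each $l \succ r_j$ is preserved; and \cref{it:mpo2ii} is preserved because applying a substitution elementwise to a multiset reduction is again a multiset reduction once each underlying $\succ$-comparison is preserved.

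The real work is well-foundedness. I would show directly that every term is \emph{strongly normalising} (admits no infinite $\succ$-chain), writing $\mathrm{SN}$ for this set; note $\mathrm{SN}$ is closed under $\sim$ (by compatibility) and downward under $\succ$. The key lemma is: if $t_1,\dots,t_n \in \mathrm{SN}$ then $f(t_1,\dots,t_n)\in \mathrm{SN}$, from which the claim follows by structural induction on terms. This key lemma is proved by a nested well-founded induction on the triple consisting of the precedence class of $f$ under $>$ (outer), the argument multiset $\Lbag t_1,\dots,t_n\Rbag$ under $\succ_\multiset$ (middle), and the size of the candidate successor (inner). It suffices to show that every $r$ with $f(\vec t)\succ r$ lies in $\mathrm{SN}$, by the clauses of \cref{def:multiset_path_order}: \cref{it:mpo1} gives $r \in \mathrm{SN}$ directly from $t_i \in \mathrm{SN}$; in \cref{it:mpo2i} each $r_j$ is a proper subterm of $r$ with $f(\vec t)\succ r_j$, so $r_j\in\mathrm{SN}$ by the inner (size) hypothesis, whence $r=g(\vec r)\in\mathrm{SN}$ by the outer (precedence) hypothesis since $g<f$; in \cref{it:mpo2ii} every $r_j$ is either a surviving $t_i$ or strictly below some removed $t_i$, hence in $\mathrm{SN}$, so $r\in\mathrm{SN}$ by the middle (multiset) hypothesis since $g\approx f$ and $\Lbag r_j\Rbag \prec_\multiset \Lbag t_i\Rbag$. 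Here the well-foundedness of $\succ_\multiset$ on multisets of $\mathrm{SN}$-terms is exactly the multiset ordering result recalled after \cref{def:multiset_ordering}, and the outer induction only requires the precedence $>$ to be well-founded (which holds, e.g., for a finite signature).

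Finally, termination of any TRS with $R \subseteq {\succ}$ is immediate: a rewrite step $C[l\sigma] \to C[r\sigma]$ with $(l\to r)\in R$ satisfies $l\succ r$, hence $l\sigma\succ r\sigma$ by substitution closure and $C[l\sigma]\succ C[r\sigma]$ by context closure, so $\to_R \subseteq {\succ}$; an infinite rewrite sequence would yield an infinite $\succ$-chain, contradicting well-foundedness. The main obstacle is organising the nested induction for well-foundedness correctly — in particular the innermost induction on successor size is what breaks the apparent circularity in \cref{it:mpo2i}, where one must certify that the arguments $r_j$ of a produced successor are themselves strongly normalising before invoking the precedence hypothesis.
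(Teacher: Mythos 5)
Your argument is correct, but there is nothing in the paper to compare it against: the paper states this lemma with a citation to Hofbauer and uses it as a black box, giving no proof of its own. What you have reconstructed is the standard Buchholz/Hofbauer-style argument for the multiset path order: closure under contexts via a single multiset step, closure under substitutions by induction on the derivation, and well-foundedness via the key lemma that the set $\mathrm{SN}$ of strongly normalising terms is closed under applying a root symbol to $\mathrm{SN}$ arguments, discharged by the triple nested induction on (precedence class of the root, argument multiset under $\succ_\multiset$, size of the candidate successor), with the innermost size induction correctly identified as the device that breaks the apparent circularity in clause \ref{it:mpo2}\ref{it:mpo2i}. Two points are worth making explicit if this were written out in full. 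First, the outer induction needs the strict precedence $<$ to be well-founded; \cref{def:multiset_path_order} only asks for a reflexive transitive relation on $\Sigma_\bbU$ and does not stipulate this, so it should be added as a hypothesis --- it holds trivially for the two-class precedence used in \cref{lem:(1.1)->(n.1)_layers} and for any finite signature, as you note. Second, in clause \ref{it:mpo2}\ref{it:mpo2ii} the ``surviving'' elements of the right-hand multiset are only identified with left-hand elements up to the permutation equivalence $\sim$, so your auxiliary fact that $\mathrm{SN}$ is closed under $\sim$ is genuinely needed there as well, not only in clause \ref{it:mpo1}. Neither point affects correctness; your proof supplies a self-contained justification for a result the paper merely imports.
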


Let us explain the basics about \emph{critical pairs}~\cite[§2.7]{Terese_2003}.
We only give an informal description here, as the precise definition is quite technical.
An overlap is when two rules, for instance $f(g(x),y) \to f(x,g(y))$ and $g(x) \to h(x)$, can be applied with some symbols being shared, in our case $g$. 
They create the critical pair $f(x, g(y)) \leftarrow f(g(x),y) \to f(h(x),y)$.
We say that a critical pair \emph{converges} if the two terms can be rewritten to a common term, i.e., $f(x, g(y)) \rightarrowdbl \cdot \leftarrowdbl f(h(x),y)$ in our case.

\subsection{Composite theories extra}

To prove \cref{lem:same_equivalence_class_TSX_implies_a_to_d}, let us recall the other definitions of composite theory from \cite[Def.~3.2, Prop.~3.4]{Zwart_2020}.
\begin{enumerate}[(I)]
    \item 
    \label{it:eqmodST_a_d}
    The separated terms $t[s_x]$ and $t'[s'_y]$ are \myemph{equal \modST} if there exist functions $h : \var(t) \to Z \leftarrow \var(t') : h'$ such that 
    \begin{enumerate}[(a), topsep=4pt, itemsep=1pt] %,partopsep=4pt, parsep=4pt]
        \item \label{item:(a)} \label{ax:weakessuniq1}
        $t[h] \theoryeq{T} t'[h']$.
        
        \item \label{item:(b)} \label{ax:weakessuniq2}
        $\forall x_1,x_2 \in \var(t): h(x_1) = h(x_2) \Rightarrow s_{x_1} \theoryeq{S} s_{x_2}$.
        
        \item \label{item:(c)} \label{ax:weakessuniq3}
        $\forall y_1,y_2 \in \var(t): h'(y_1) = h'(y_2) \Rightarrow s_{y_1} \theoryeq{S} s_{y_2}$.
        
        \item \label{item:(d)} \label{ax:weakessuniq4}
        $\forall x \in \var(t), \forall y \in \var(t'): h(x) = h'(y) \Rightarrow s_{x} \theoryeq{S} s_{y}$.
    \end{enumerate}

    \item 
    \label{it:eqmodST_i_iii}
    The separated terms $t[s_x]$ and $t'[s'_y]$ are \myemph{equal \modST} if there exist functions $h: X \to Z$, $h' : Y \to Z$ and $\bbS$-terms $\setvbar{s^*_z}{z \in Z}$ such that
    \begin{enumerate}[(i)]
        \item $t[h] \theoryeq{T} t'[h']$,
        \item $\forall x \in X: s_x \theoryeq{S} s^*_{h(x)}$, and
        \item $\forall y \in Y: s'_y \theoryeq{S} s^*_{h'(y)}$.
    \end{enumerate}
\end{enumerate}

\begin{proof}[Proof of \cref{prop:essuniqformulations}]
    % $\;$
    \begin{enumerate}[align=left]    
        \item[$\ref{it:dan-es}\Rightarrow\ref{it:eqclass-es}$:]
        We suppose equality \modST as given above in \cref{it:eqmodST_i_iii} and show that
        \(
            %\label{eq:same_equivalence class}
            t[\eqclass{s_x}{\bbS}/x] \theoryeq{T} t'[\eqclass{s'_x}{\bbS}/x].
        \)
        We reason
        % This holds immediately from the definition.
        % Indeed, by \ref{item:(a)} we have $t[h] \theoryeq{T} t'[h']$ for some $h,h'$, which coincide exactly when the corresponding $\bbS$-terms are $\bbS$-equal, as stated by \ref{item:(b)}-\ref{item:(d)}.
        % To see that in even more details, we can use the equivalent definition of equality \modST in Zwart's thesis \cite[Def.~3.2]{Zwart_2020}.
        %$\eqT{t[\eqS{s_x}]} = \eqT{t[\tilde{s}_{-} \cdot h]} = \eqT{t'[\tilde{s'}_{-} \cdot h]}$
        \begin{align*}
             t[\eqS{s_x}/x] 
             &\theoryeq{T} t[\eqS{s^*_{h(x)}} /x] \tag*{by $(ii)$}\\
             &\theoryeq{T} t[h][\eqS{s^*_z} /z] \\
             &\theoryeq{T} t'[h'][\eqS{s^*_z} /z] \tag*{by $(i)$} \\
             &\theoryeq{T} t'[\eqS{s^*_{h'(y)}} /y] \\
             &\theoryeq{T} t'[\eqS{s'_y} /y]. \tag*{by $(iii)$}
        \end{align*}
        
        \item[$\ref{it:eqclass-es}\Rightarrow\ref{it:dan-es}$:]
        We suppose that $t[\eqclass{s_x}{\bbS}/x] \theoryeq{T} t'[\eqclass{s'_x}{\bbS}/x]$ and prove \cref{it:eqmodST_a_d} above.
        Taking $Z \defeq S\variables$ and $h: x \mapsto \eqclass{s_x}{\bbS}$, and $h': x \mapsto \eqclass{s'_x}{\bbS},$ \ref{item:(a)} holds immediately.
        % \begin{align*}
        %     h: x &\mapsto \eqclass{s_x}{\bbS}, \\
        %     h': x &\mapsto \eqclass{s'_x}{\bbS}.
        % \end{align*}
        % The equivalence classes of $\bbS$-terms can be considered as variables, as we have only finitely many of them here, so we can always put them in bijection to a finite subset of $\variables$.
        We check \ref{item:(b)}: if $h(x_1) = h(x_2)$, it means $\smash{\eqclass{s_{x_1}}{\bbS} = \eqclass{s_{x_2}}{\bbS}}$, i.e., $s_{x_1} \theoryeq{S} s_{x_2}$.
        The reasoning for \ref{item:(c)} and \ref{item:(d)} is the same.
        \qedhere
    \end{enumerate}
\end{proof}

\subsection{Proofs of \cref{sec:composite_theory_=>_dl}}

\begin{lemma}
    \label{lem:composite_=>_dl_lem1}
    $(TS, \eta^{TS} \defeq \eta^T\eta^S, \mu^{TS})$ is a monad.
\end{lemma}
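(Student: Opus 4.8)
The plan is to exploit that $\phi\colon U \Rightarrow TS$ and $\psi\colon TS \Rightarrow U$ of \eqref{eq:def_phi}--\eqref{eq:def_psi} are mutually inverse natural transformations, as already established in the proof of \cref{thm:composite_=>_dl}. Consequently $TS \isom U$ as endofunctors on $\Set$, and the strategy is simply to \emph{transport} the monad structure $(U,\eta^U,\mu^U)$ across this isomorphism: with $\mu^{TS}$ as in \eqref{eq:def_muTS} and $\eta^{TS}=\eta^T\eta^S$, I would verify each of the three monad axioms by reducing it to the corresponding axiom for $U$, inserting the identities $\phi\psi=\id_U$ and $\psi\phi=\id_{TS}$ wherever an inner $U$-layer meets a $TS$-layer.

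First I would record two auxiliary facts. (1) The doubled transformations $\psi\psi\colon TSTS \Rightarrow UU$ and $\phi\phi\colon UU \Rightarrow TSTS$ are mutually inverse; this follows from $\phi\psi=\id$ and $\psi\phi=\id$ together with the interchange law, which moreover guarantees $\psi\psi = \psi_U \cdot TS\psi = U\psi \cdot \psi_{TS}$ (and dually for $\phi\phi$). (2) The stated unit agrees with the transported unit, i.e.\ $\eta^{TS}=\phi\cdot\eta^U$: evaluating on a variable $x$, its unique separation is the trivial one $t=x$, $s_x=x$, so $\phi(\eqclass{x}{\bbU}) = \eqT{\eqS{x}} = (\eta^T\eta^S)(x)$. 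These two facts let every composite below pass freely between the $U$-world and the $TS$-world.

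For the right unit law I would compute $\mu^{TS}\cdot TS\eta^{TS} = \phi\cdot\mu^U\cdot(\psi\psi)\cdot TS\eta^{TS}$; writing $\psi\psi=\psi_U\cdot TS\psi$ and using $\psi\cdot\eta^{TS}=\psi\cdot\phi\cdot\eta^U=\eta^U$, the tail reduces to $\psi_U\cdot TS\eta^U$, which by naturality of $\psi$ equals $U\eta^U\cdot\psi$, so that the right unit law $\mu^U\cdot U\eta^U=\id$ for $U$ leaves $\phi\psi=\id_{TS}$. The left unit law is symmetric, using $\psi\psi=U\psi\cdot\psi_{TS}$, naturality of $\eta^U$, and the left unit law of $U$. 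For associativity I would expand both $\mu^{TS}\cdot\mu^{TS}TS$ and $\mu^{TS}\cdot TS\mu^{TS}$; in each case the inner $\phi\phi$ produced by a whiskered $\mu^{TS}$ is cancelled by the adjacent $\psi\psi$ (fact (1)), so both sides collapse to $\phi$ precomposed with a triple application of $\psi$ and two copies of $\mu^U$, whereupon associativity of $\mu^U$ equates them.

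I expect the main obstacle to be the diagrammatic bookkeeping around the doubled transformations: confirming that $\psi\psi$ and $\phi\phi$ are genuinely inverse and that they commute correctly with the whiskered units and multiplications requires repeated, carefully ordered use of naturality and the interchange law. No step is conceptually deep — indeed the whole argument is an instance of the general principle that a natural isomorphism transports a monad structure — but getting the horizontal versus vertical composites in the right order is exactly where the care is needed.
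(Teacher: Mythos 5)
Your proposal is correct and follows essentially the same route as the paper: transport the monad structure of $U$ along the inverse pair $\phi,\psi$, using $\eta^{TS}=\phi\cdot\eta^U$, the interchange-law decomposition of $\psi\psi$, naturality, and the monad axioms for $U$. One minor bookkeeping remark: in the associativity step the whiskered inner $\mu^{TS}$ contributes only a single $TS\phi$ (not a $\phi\phi$), which is cancelled by the $TS\psi$ factor of the outer $\psi\psi$ — but this does not affect the argument, which goes through exactly as you outline.
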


\begin{proof}[Proof of \cref{lem:composite_=>_dl_lem1}]
    Note that by definition of $\phi$, the unit $\eta^{TS} \defeq \eta^T\eta^S$ is the same as $\phi \cdot \eta^U$, as both send a variable $v$ to $\eqclass{\eqclass{v}\bbS}\bbT$.
    
    We prove the first of the three monad axioms. 
    \begin{align*}
        \mu^{TS} \cdot TS \eta^{TS}
        &= (\phi \cdot \mu^U \cdot \psi\psi) \cdot (TS\phi \cdot TS \eta^U)
        \tag*{def.~\(\mu^{TS}\) and \(\eta^{TS}\)} \\
        &= \phi \cdot \mu^U \cdot \psi U \cdot TS\psi \cdot TS\phi \cdot TS \eta^U
        \tag*{horizontal composition} \\
        &= \phi \cdot \mu^U \cdot \psi U \cdot TS \eta^U
        \tag*{$\phi, \psi$ inverses} \\
        &= \phi \cdot \mu^U \cdot U\eta^U \cdot \psi
        \tag*{$\phi$ naturality} \\
        &= \phi \cdot \psi
        \tag*{unit axiom \eqref{eq:def_monad_axiom_unit} for $U$} \\
        &= \id{}
        \tag*{\text{\(\phi, \psi\)} inverses}
    \end{align*}
    The second axiom is proven similarly to the first one:
    \begin{align*}
        \mu^{TS} \cdot \eta^{TS}TS
        &= (\phi \cdot \mu^U \cdot \psi\psi) \cdot (\phi TS \cdot \eta^U TS)
        \tag*{def.~$\mu^{TS}$ and $\eta^{TS}$} \\
        &= \phi \cdot \mu^U \cdot U\psi \cdot \psi TS \cdot \phi TS \cdot \eta^U TS
        \tag*{vertical composition} \\
        &= \phi \cdot \mu^U \cdot U\psi \cdot \eta^U TS
        \tag*{$\phi, \psi$ inverses} \\
        &= \phi \cdot \mu^U \cdot \eta^U U \cdot \psi
        \tag*{$\eta^U$ naturality} \\
        &= \phi \cdot \psi 
        \tag*{unit axiom \eqref{eq:def_monad_axiom_unit} for $U$} \\
        &= \id{}
        \tag*{$\phi, \psi$ inverses}
    \end{align*}
    The third axiom is also just finding back the same axiom for $U$:
    \begin{align*}
        \mu^{TS} TS \cdot \mu^{TS}
        &= (\phi \cdot \mu^U \cdot \psi\psi) \cdot (TS \phi \cdot TS \mu^U \cdot TS \psi\psi)
        \tag*{def.~$\mu^{TS}$} \\
        &= \phi \cdot \mu^U \cdot \psi U \cdot TS\psi \cdot TS \phi \cdot TS \mu^U \cdot TS \psi\psi
        \tag*{vert.~composition} \\
        &= \phi \cdot \mu^U \cdot \psi U \cdot TS \mu^U \cdot TS \psi\psi
        \tag*{$\phi, \psi$ inverses} \\
        &= \phi \cdot \mu^U \cdot U \mu^U \cdot \psi UU \cdot TS \psi\psi
        \tag*{$\psi$ naturality} \\
        &= \phi \cdot \mu^U \cdot \mu^U U \cdot \psi UU \cdot TS \psi\psi
        \tag*{mult.~axiom \eqref{eq:def_monad_morphism_axiom_multiplication} for $U$} \\
        &= \phi \cdot \mu^U \cdot \mu^U U \cdot \psi\psi\psi
        \tag*{vertical composition} \\
        &= \phi \cdot \mu^U \cdot \mu^U U \cdot UU\psi \cdot \psi\psi TS
        \tag*{vertical composition} \\
        &= \phi \cdot \mu^U \cdot U \psi \cdot \mu^U TS \cdot \psi\psi TS
        \tag*{$\mu^U$ naturality} \\
        &= (\phi \cdot \mu^U \cdot U \psi \cdot \psi TS) \cdot (\phi TS \cdot \mu^U TS \cdot \psi\psi TS)
        \tag*{$\phi, \psi$ inverses} \\
        &= \mu^{TS} \cdot \mu^{TS} TS
        \tag*{def.~$\mu^{TS}$}
    \end{align*}
\end{proof}

\begin{lemma}
    \label{lem:composite_=>_dl_lem2}
    The nat.~transformations $\eta^T S$ and $T\eta^S$ are monad morphisms.
    \customqed
\end{lemma}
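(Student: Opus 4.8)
The plan is to reduce both claims to the standard fact that a composite of monad morphisms is again a monad morphism, by factoring $\eta^T S$ and $T\eta^S$ through the isomorphism $\phi$. Since $\bbS$ and $\bbT$ are subtheories of $\bbU$ (that is, $\Sigma_\bbS,\Sigma_\bbT \subseteq \Sigma_\bbU$ and $E_\bbS, E_\bbT \subseteq E_\bbU$), the inclusions of terms induce natural transformations $\iota_\bbS \colon S \Rightarrow U$ and $\iota_\bbT \colon T \Rightarrow U$, given on representatives by $\iota_\bbS(\eqS{s}) = \eqclass{s}{\bbU}$ and $\iota_\bbT(\eqT{t}) = \eqclass{t}{\bbU}$. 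These are well defined because $s \theoryeq{S} s'$ implies $s \theoryeq{U} s'$ (as $E_\bbS \subseteq E_\bbU$), and dually for $\bbT$. First I would check that $\iota_\bbS$ and $\iota_\bbT$ are monad morphisms: the unit law holds since $\eta^U$ and $\iota_\bbS \cdot \eta^S$ both send $x$ to $\eqclass{x}{\bbU}$, and the multiplication law holds because substituting $\bbS$-terms into $\bbS$-terms agrees with substituting the same terms viewed as $\bbU$-terms. Both verifications are routine.

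Next I would record that $\phi \colon U \Rightarrow TS$ is itself a monad morphism (indeed an isomorphism with inverse $\psi$). The unit law $\phi \cdot \eta^U = \eta^{TS}$ was already observed in the proof of \cref{lem:composite_=>_dl_lem1}. For the multiplication law $\phi \cdot \mu^U = \mu^{TS} \cdot \phi\phi$, I substitute the definition $\mu^{TS} = \phi \cdot \mu^U \cdot \psi\psi$ and use that horizontal composition is functorial, so $\psi\psi \cdot \phi\phi = (\psi\cdot\phi)(\psi\cdot\phi) = \id_{UU}$ because $\psi$ and $\phi$ are mutually inverse; this collapses the right-hand side to $\phi\cdot\mu^U$.

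The heart of the argument is then the identification of the two natural transformations as composites with $\phi$:
\[
    \eta^T S = \phi \cdot \iota_\bbS, \qquad T\eta^S = \phi \cdot \iota_\bbT.
\]
For the first, an $\bbS$-term $s$ separates trivially as $v[s/v]$, with $v$ a variable regarded as a trivial $\bbT$-term, so $\phi(\iota_\bbS(\eqS{s})) = \phi(\eqclass{s}{\bbU}) = \eqT{\eqS{s}} = (\eta^T S)(\eqS{s})$. For the second, a $\bbT$-term $t$ is already separated as $t[x/x]$ with its variables regarded as trivial $\bbS$-terms, so $\phi(\iota_\bbT(\eqT{t})) = \eqT{t[\eqS{x}/x]} = T\eta^S(\eqT{t})$, since $T\eta^S$ relabels each variable $x$ to $\eqS{x}$ by the formula for $T$ on functions. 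Having established these two equalities, both statements of the lemma follow immediately, because the composite of the monad morphism $\iota_\bbS$ (respectively $\iota_\bbT$) with the monad morphism $\phi$ is a monad morphism $S \Rightarrow TS$ (respectively $T \Rightarrow TS$).

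I expect the main obstacle to be purely bookkeeping rather than conceptual: getting the trivial separations of pure $\bbS$- and $\bbT$-terms exactly right (a lone variable as a $\bbT$-term in the $\bbS$-case, and variables-as-$\bbS$-terms in the $\bbT$-case), so that the two identities with $\phi$ hold on the nose, together with confirming that $\iota_\bbS$ and $\iota_\bbT$ respect multiplication. No genuinely hard step arises, since the conceptual work — that $\phi,\psi$ are inverse natural transformations and that $(TS,\eta^{TS},\mu^{TS})$ is a monad — has already been carried out.
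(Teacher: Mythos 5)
Your proof is correct, but it takes a genuinely different route from the paper's. The paper verifies the two monad-morphism axioms for $\eta^T S$ directly and elementwise: the unit law is immediate, and the multiplication law $\mu^{TS} \cdot (\eta^T S)(\eta^T S) = \eta^T S \cdot \mu^S$ is checked on an arbitrary element $\eqclass{s[\eqclass{s_x}{\bbS}/x]}{\bbS}$ of $SS\variables$ by unfolding $\mu^{TS} = \phi \cdot \mu^U \cdot \psi\psi$ and the definitions \eqref{eq:def_phi}--\eqref{eq:def_muTS}; the case of $T\eta^S$ is then declared analogous. You instead factor $\eta^T S = \phi \cdot \iota_\bbS$ and $T\eta^S = \phi \cdot \iota_\bbT$ through the inclusion morphisms $S, T \Rightarrow U$ induced by $\Sigma_\bbS, \Sigma_\bbT \subseteq \Sigma_{\bbU}$ and $E_\bbS, E_\bbT \subseteq E_\bbU$, check that $\iota_\bbS$, $\iota_\bbT$ and $\phi$ are monad morphisms, and invoke closure of monad morphisms under composition. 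All the ingredients hold: $\psi\psi \cdot \phi\phi = (\psi\cdot\phi)(\psi\cdot\phi) = \id_{UU}$ by the interchange law, so $\mu^{TS}\cdot\phi\phi = \phi\cdot\mu^U$; the trivial separations $v[s/v]$ and $t[x/x]$ are legitimate separated terms, and well-definedness of $\phi$ (essential uniqueness of separations) lets you evaluate $\phi$ on them, matching $\eta^T_{S\variables}$ and the action of $T\eta^S$ from \eqref{eq:free_monad_on_function}. What your approach buys is uniformity — both halves of the lemma follow from one argument rather than one computation plus an ``analogous'' claim — and a reusable structural fact, namely that $\phi \colon U \Rightarrow TS$ is a monad isomorphism, which is essentially the presentation statement of \cref{thm:composite_=>_dl}. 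The cost is that the computational burden does not disappear but migrates into the (admittedly routine) verification that $\iota_\bbS$ and $\iota_\bbT$ respect multiplication, which is of about the same weight as the paper's direct chain of equalities.
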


\begin{proof}[Proof of \cref{lem:composite_=>_dl_lem2}] \belowdisplayskip=-12pt % To have qed symbol higher
    We show $\eta^T S: S \Rightarrow TS$ is a monad morphism.
    The proof for $T \eta^S$ is then analogous.
    There are two axioms to prove.
    The first one \eqref{eq:def_monad_morphism_axiom_unit}, $\eta^T S \cdot \eta^T = \eta^{TS}$ is immediate, as both sides simply send a variable $v \in \variables$ to $\eqclass{\eqclass{v}{\bbS}}{\bbT}$.
    The second one \eqref{eq:def_monad_morphism_axiom_multiplication} is $\mu^{TS} \cdot (\eta^T S)(\eta^T S) = \eta^T S \cdot \mu^S$.
    Given any $\eqclass{s[\eqclass{s_x}{\bbS}/x]}{\bbS}$ in $SS \variables$:
    \begin{align*} 
        \big( \mu^{TS} \cdot (\eta^T S)(\eta^T S) \big) \big( \eqclass{s[\eqclass{s_x}{\bbS}/x]}{\bbS} \big) 
        &= \mu^{TS} \big( \eqclass{\eqclass{s[\eqclass{\eqclass{s_x}{\bbS}}{\bbT}/x]}{\bbS}}{\bbT} \big) 
        \tag*{$\eta^T S$ twice}\\
        &= \phi \cdot \mu^U \cdot \psi\psi \big( \eqclass{\eqclass{s[\eqclass{\eqclass{s_x}{\bbS}}{\bbT}/x]}{\bbS}}{\bbT} \big) 
        \tag*{def.~$\mu^{TS}$ \eqref{eq:def_muTS}} \\
        &= \phi \cdot \mu^U \big( \eqclass{s[\eqclass{s_x}{\bb{U}}/x]}{\bb{U}} \big) 
        \tag*{def.~$\psi$ \eqref{eq:def_psi}} \\
        &= \phi \big( \eqclass{s[s_x/x]}{\bb{U}} \big) 
        \tag*{applying multiplication} \\
        &= \eqclass{\eqclass{s[s_x/x]}{\bbS}}{\bbT} 
        \tag*{def.~$\phi$ \eqref{eq:def_phi}} \\
        &= \eta^T S \big( \eqclass{s[s_x/x]}{\bbS} \big)  \\
        &= \big( \eta^T S \cdot \mu^S \big) \big( \eqclass{s[\eqclass{s_x}{\bbS}]}{\bbS}\big)
    \end{align*} \qedhere
    % \noindent\begin{tabularx}{\textwidth}{@{}XX@{}}
    %     \[
    %         \begin{tikzcd}[row sep = 0.25em, ampersand replacement=\&]
    %             \& S \ar[dd, "\eta^T S"] \\
    %             \id \ar[ur, "\eta^S"] \ar[dr, "\eta^{TS}"'] \& \\
    %             \& TS
    %         \end{tikzcd}
    %     \] &
    %     \[
    %         \begin{tikzcd}[ampersand replacement=\&, column sep=large]
    %             SS \ar[d, "\mu^S"'] \ar[r, "(\eta^T S)(\eta^T S)"] \& (TS)(TS) \ar[d, "\mu^{TS}"] \\
    %             S \ar[r, "\eta^T S"'] \& TS
    %         \end{tikzcd}
    %     \]
    % \end{tabularx}
\end{proof}

\begin{lemma}
    \label{lem:composite_=>_dl_lem3}
    The middle unitary law holds: $\mu^{TS} \cdot T\eta^S \eta^T S = Id_{TS}$.
    \customqed
\end{lemma}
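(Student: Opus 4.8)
The plan is to exploit the definition $\mu^{TS} = \phi \cdot \mu^U \cdot \psi\psi$ together with the fact that $\phi$ and $\psi$ are mutually inverse, thereby reducing the middle unitary law to a statement purely about the free-algebra monad $U$. Since $\phi\psi = \id_{TS}$, it suffices to establish the single identity $\mu^U \cdot \psi\psi \cdot (T\eta^S\eta^T S) = \psi$; composing this with $\phi$ on the left then yields $\mu^{TS}\cdot T\eta^S\eta^T S = \phi\psi = \id_{TS}$. (Equivalently, since $\psi$ is an iso and a monad morphism, so that $\psi\cdot\mu^{TS} = \mu^U\cdot\psi\psi$, one may apply $\psi$ to both sides of the desired equation and reach the same reduced identity.)

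First I would unwind the middle unit insertion. Reading $T\eta^S\eta^T S : TS \Rightarrow TSTS$ as the horizontal composite of $T\eta^S : T \Rightarrow TS$ and $\eta^T S : S \Rightarrow TS$, the interchange law gives $T\eta^S\eta^T S = TS(\eta^T S) \cdot (T\eta^S S)$. Feeding this into $\psi\psi = \psi U \cdot TS\psi$ and pushing $\psi$ through by naturality, the two whiskered inclusions $\iota^T \defeq \psi\cdot T\eta^S : T \Rightarrow U$ and $\iota^S \defeq \psi \cdot \eta^T S : S \Rightarrow U$ --- the canonical embeddings of $\bbT$- and $\bbS$-terms into $\bbU$-terms --- emerge. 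Concretely, $TS\psi \cdot TS(\eta^T S) = TS\iota^S$, then naturality of $\psi$ rewrites $\psi U \cdot TS\iota^S = U\iota^S \cdot \psi S$, and finally $\psi S\cdot T\eta^S S = \iota^T S$, so the whole composite collapses to $\psi\psi\cdot (T\eta^S\eta^T S) = U\iota^S \cdot \iota^T S$.

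It then remains to check that $\mu^U \cdot U\iota^S \cdot \iota^T S = \psi$, which I would verify on a separated representative $\eqT{t[\eqS{s_x}/x]} \in TS\variables$: the inclusion $\iota^T S$ reinterprets it as $\eqclass{t[\eqS{s_x}]}{\bbU} \in US\variables$, the inclusion $U\iota^S$ turns the inner $\bbS$-classes into $\bbU$-classes, and $\mu^U$ substitutes to give $\eqclass{t[s_x]}{\bbU}$ --- which is exactly $\psi$ applied to $\eqT{t[\eqS{s_x}/x]}$ by definition \eqref{eq:def_psi}. Conceptually this is the observation that the two embeddings together with the $\bbU$-multiplication simply reassemble a separated term $t[s_x]$ into its $\bbU$-class, which is precisely what $\psi$ computes.

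The main obstacle I anticipate is bookkeeping the middle unit insertion $T\eta^S\eta^T S$ correctly: it is neither of the outer insertions $\eta^{TS}TS$ nor $TS\eta^{TS}$ featuring in the ordinary monad unit laws (i) and (ii), and getting the whiskerings and the order of the naturality squares right is the delicate point. Carrying out the reduction on components, with a separated representative, sidesteps the $2$-categorical subtleties and makes the collapse to $\psi$ transparent.
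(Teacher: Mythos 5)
Your proof is correct and, at its core, takes the same route as the paper: both arguments evaluate the composite on a separated representative $\eqclass{t[\eqclass{s_x}{\bbS}/x]}{\bbT}$, watch the unit insertions and $\psi\psi$ turn it into $\eqclass{t[\eqclass{s_x}{\bbU}/x]}{\bbU}$, apply $\mu^U$ to get $\eqclass{t[s_x/x]}{\bbU}$, and recover the original class via $\phi$ (equivalently, via your reduced identity $\mu^U \cdot \psi\psi \cdot T\eta^S\eta^T S = \psi$ followed by $\phi\psi = \id$). The preliminary interchange/naturality repackaging through the embeddings $\iota^S, \iota^T$ is a harmless extra layer; the computational content is identical to the paper's direct element chase.
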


\begin{proof}[Proof of \cref{lem:composite_=>_dl_lem3}] 
    Let $\eqclass{t[\eqclass{s_x}{\bbS}/x]}{\bbT}$ be any element of $TS\variables$. Then:
    \begin{align*}
        \big(\mu^{TS} \cdot T\eta^S \eta^T S \big) \big(\eqclass{t[\eqclass{s_x}{\bbS}/x]}{\bbT}\big) 
        &= \mu^{TS} \big(\eqclass{t[\eqclass{\eqclass{\eqclass{s_x}{\bbS}}{\bbT}}{\bbS}/x]}{\bbT}\big) 
        \tag*{applying units} \\
        & = \big(\phi \cdot \mu^U \cdot \psi\psi\big) \big(\eqclass{t[\eqclass{\eqclass{\eqclass{s_x}{\bbS}}{\bbT}}{\bbS}/x]}{\bbT}\big) 
        \tag*{def.~$\mu^{TS}$ \eqref{eq:def_muTS}} \\
        &= \big(\phi \cdot \mu^U\big) \big(\eqclass{t[\eqclass{s_x}{\bb{U}}/x]}{\bb{U}}\big)
        \tag*{def.~$\psi$ \eqref{eq:def_psi}} \\
        &= \phi \big(\eqclass{t[s_x/x]}{\bb{U}}\big)
        \tag*{applying multiplication} \\
        &= \eqclass{t[\eqclass{s_x}{\bbS}/x]}{\bbT}
        \tag*{def.~$\phi$ \eqref{eq:def_phi}}
    \end{align*}
\end{proof}

\begin{proof}[End of proof of \cref{thm:composite_=>_dl}]
    We show that $\lambda = \mu^{TS} \cdot \eta^T ST \eta^S$ give the expression in the statement of the theorem.
    Let $\eqclass{s[\eqclass{t_x}{\bbT}/x]}{\bbS}$ be any element of $ST\variables$, and $t'[s'_x/x]$ be a separated term such that $t'[s'_x/x] \theoryeq{U} s[t_x/x]$ then:
    \begin{align*}
        \lambda \big(\eqclass{s[\eqclass{t_x}{\bbT}/x]}{\bbS}\big) 
        &=  \big(\mu^{TS} \cdot \eta^T ST \eta^S\big) \big(\eqclass{s[\eqclass{t_x}{\bbT}/x]}{\bbS}\big) 
        \tag*{def.~$\lambda$} \\
        &= \mu^{TS} \big(\eqclass{\eqclass{s[\eqclass{\eqclass{t_x}{\bbS}}{\bbT}/x]}{\bbS}}{\bbT}\big) 
        \tag*{applying units} \\
        &= \big(\phi \cdot \mu^U \cdot \psi\psi\big) \big(\eqclass{\eqclass{s[\eqclass{\eqclass{t_x}{\bbS}}{\bbT}/x]}{\bbS}}{\bbT}\big) 
        \tag*{def.~$\mu^{TS}$ \eqref{eq:def_muTS}} \\
        &= \big(\phi \cdot \mu^U\big) \big(\eqclass{s[\eqclass{t_x}{\bbU}/x]}{\bbU}\big) 
        \tag*{def.~$\psi$ \eqref{eq:def_psi}} \\
        &= \phi \big(\eqclass{s[t_x/x]}{\bbU}\big) 
        \tag*{applying multiplication} \\
        % &= \phi \big(\eqclass{t'[s'_x/x]}{\bbU}\big) 
        % \tag*{$\bbU$-equal terms} \\
        &= \eqclass{t'[\eqclass{s'_x}{\bbS}/x]}{\bbT}
        \tag*{def.~$\phi$ \eqref{eq:def_phi}}
    \end{align*}
\end{proof}

\subsection{Proofs of \cref{sec:dl_=>_composite_theory}}

\noindent\begin{minipage}{.69\linewidth}
\begin{proof}[Proof of \cref{lem:functorial_newmans_lemma}]
    The proof is identical to $(ii)$ in \cite[Theorem 1.2.1]{Terese_2003}.
    We prove by well-founded induction that for all term-functor $A$, we have ${\text{\CRc}(A)}$ by showing that if all its descendent have \CRc, then $A$ must have it too.
    The I.H.~is that all proper descendants satisfy \CRc.
    Take $C \leftarrowdbl A \rightarrowdbl B$.
    The cases where one side is a 0-step are immediate.
    Otherwise, the reasoning can be summarized in one diagram, confluence is obtained with common reduct $D_3$, and the diagram commutes:
\end{proof}
\end{minipage}
\hfill
\begin{minipage}{.3\linewidth}
    % https://q.uiver.app/?q=WzAsOCxbMiwwLCJBIl0sWzEsMSwiQl8xIl0sWzMsMSwiQ18xIl0sWzAsMiwiQiJdLFsyLDIsIkRfMSJdLFs0LDIsIkMiXSxbMSwzLCJEXzIiXSxbMiw0LCJEXzMiXSxbMCwxXSxbMSwzLCIiLDAseyJzdHlsZSI6eyJoZWFkIjp7Im5hbWUiOiJlcGkifX19XSxbNCw2LCIiLDAseyJzdHlsZSI6eyJoZWFkIjp7Im5hbWUiOiJlcGkifX19XSxbMiw0LCIiLDAseyJzdHlsZSI6eyJoZWFkIjp7Im5hbWUiOiJlcGkifX19XSxbNSw3LCIiLDAseyJzdHlsZSI6eyJoZWFkIjp7Im5hbWUiOiJlcGkifX19XSxbMyw2LCIiLDEseyJzdHlsZSI6eyJoZWFkIjp7Im5hbWUiOiJlcGkifX19XSxbMSw0LCIiLDEseyJzdHlsZSI6eyJoZWFkIjp7Im5hbWUiOiJlcGkifX19XSxbMCwyXSxbMiw1LCIiLDEseyJzdHlsZSI6eyJoZWFkIjp7Im5hbWUiOiJlcGkifX19XSxbNiw3LCIiLDEseyJzdHlsZSI6eyJoZWFkIjp7Im5hbWUiOiJlcGkifX19XSxbMiw2LCIiLDAseyJvZmZzZXQiOi0xLCJjdXJ2ZSI6LTEsInN0eWxlIjp7ImhlYWQiOnsibmFtZSI6ImVwaSJ9fX1dXQ==
    \[\begin{tikzcd}[row sep=small, column sep=tiny, scale cd=.8]
    	&& A \\
    	& {B_1} && {C_1} \\
    	B && {D_1} && C \\
    	& {D_2} \\
    	&& {D_3}
    	\arrow[from=1-3, to=2-2]
    	\arrow[two heads, from=2-2, to=3-1]
    	\arrow[two heads, from=3-3, to=4-2]
    	\arrow[two heads, from=2-4, to=3-3]
    	\arrow[two heads, from=3-5, to=5-3]
    	\arrow[two heads, from=3-1, to=4-2]
    	\arrow[two heads, from=2-2, to=3-3]
    	\arrow[from=1-3, to=2-4]
    	\arrow[two heads, from=2-4, to=3-5]
    	\arrow[two heads, from=4-2, to=5-3]
    	\arrow[shift left=1, bend left=10, two heads, from=2-4, to=4-2]
    	\arrow[phantom, "\text{\WCRc}", from=1-3, to=3-3]
    	\arrow[phantom, "\text{IH} \circlearrowleft", from=2-2, to=4-2]
    	\arrow[phantom, "\text{IH} \circlearrowleft", from=2-4, to=5-3]
    \end{tikzcd}\]
\end{minipage}

\begin{proof}[Proof of \cref{lem:functorial_critical_pair_lemma}]
    $(\Rightarrow)$ 
    Suppose \WCRc.
    A critical pair is a one-step peak.
    By \WCRc, both reducts must have a common descendent, with the diagram commuting.
    
    $(\Leftarrow)$
    Take a peak $T_0 \xleftarrow{S_0} T \xrightarrow {S_1} T_1$ with $S_0 = L_0 F$ with rule $\rho_0 = (L_0 \to R_0)$ and substitution $F$, and $S_1 = L_1 G$ with rule $\rho_1 = (L_1 \to R_1)$ and substitution $G$.
    The notation is usually with lowercase letters and lowercase Greek letters $l_0 \sigma$ and $l_1 \tau$, but we choose for clarity to write functors as uppercase and natural transformations as lowercase.
    Indeed, if we write the variable at the end of our terms that is usually omitted in string rewriting we have $l_0 (\sigma(x))$ and the substitution $\sigma(x)$ gives a term of rewriting system, i.e., a functor and thus we prefer to denote it as an uppercase letter, here $F$.
    
    By \cite[Def.~2.1.5 and Lem.~2.7.6]{Terese_2003}, the redex occurrences $S_0$ and $S_1$ can be either \textit{disjoint} (i.e.,~parallel positions), \textit{nested} or \textit{overlapped}.
    With only unary operations, the disjoint case is not possible in string rewriting systems.
    \begin{itemize}
        \item Suppose $S_0$ and $S_1$ are nested.
        Suppose w.l.o.g.~that $S_1$ is a subterm of $S_0$.
        Let $C[~]$ be the context of $S_0$ in $T$, i.e., $T = C[S_0]=C[L_0 F]$.
        We adopt the categorical convention of not writing parentheses or brackets when not needed, thus writing $T= C L_0 F$.
        Since there is no overlap, the redex $S_1$ occurs inside the term $F$.
        In other words, $F = E S_1 = E L_1 G$ for some context $E$.
        Observe that $C R_0 E R_1 G$ is a common reduct and that the diagram commutes by naturality of $\rho_0$.
        % https://q.uiver.app/?q=WzAsNCxbMCwwLCJ0ID0gQyBzXzAgPSBDIGxfMCBFIGxfMSBcXHRhdSJdLFswLDEsInRfMCA9IEMgcl8wIEUgbF8xIFxcdGF1Il0sWzEsMCwidF8xID0gQyBsXzAgRSByXzEgXFx0YXUiXSxbMSwxLCJDIHJfMCBFIHJfMSBcXHRhdSJdLFswLDEsIkMgXFxyaG9fMCBFIGxfMSBcXHRhdSIsMl0sWzAsMiwiQyBsXzAgRSBcXHJob18xIFxcdGF1Il0sWzIsMywiQyBcXHJob18wIEUgcl8xIFxcdGF1Il0sWzEsMywiQyByXzAgRSBcXHJob18xIFxcdGF1IiwyXV0=
        \[\begin{tikzcd}[column sep=large]
        	{T = C S_0 = C L_0 E L_1 G} & {T_1 = C L_0 E R_1 G} \\
        	{T_0 = C R_0 E L_1 G} & {C R_0 E R_1 G}
        	\arrow["{C \rho_0 E L_1 G}"', from=1-1, to=2-1]
        	\arrow["{C L_0 E \rho_1 G}", from=1-1, to=1-2]
        	\arrow["{C \rho_0 E R_1 G}", dotted, from=1-2, to=2-2]
        	\arrow["{C R_0 E \rho_1 G}"', dotted,  from=2-1, to=2-2]
        	\arrow[phantom, "{C (\rho_0 \text{ nat.})}", from=1-1, to=2-2]
        \end{tikzcd}\]
        
        \item
        Suppose $S_0$ and $S_1$ overlap.
        Denote again $C$ the context of $S_0$ in $T$ and $S'_0$ and $S'_1$ their respective contracta.
        If the overlap is trivial $S_0 = S_1$ and $\rho_0 = \rho_1$, then
        \[\begin{tikzcd}[sep = small]
        	{C S_0 = C S_1} & {C S_1'} \\
        	{C S'_0} & {C S'_0}
        	\arrow["{C \rho_0}"', from=1-1, to=2-1]
        	\arrow["{C \rho_1}", from=1-1, to=1-2]
        	\arrow[equal, from=1-2, to=2-2]
        	\arrow[equal,  from=2-1, to=2-2]
        	\arrow[phantom, "{\circlearrowleft}", from=1-1, to=2-2]
        \end{tikzcd}\]
        Suppose the overlap non-trivial, and w.l.o.g.~$S_1 \leq S_0$.
        By \cite[Lemma 2.7.12]{Terese_2003}, the contracta $S'_0, S'_1$ can be found as instances $\langle H_0 P, H_1 P\rangle$ of a critical pair $\langle H_0, H_1 \rangle$.
        By hypothesis, the critical pair converge with a commuting diagram.
        Thus by applying $C \circ - \circ \rho$ around, we have what we desire.
        \[
            \begin{tikzcd}%[sep = small]
            	{\cdot} & {H_1} \\
            	{H_0} & {H_2}
            	\arrow["{\alpha}"', from=1-1, to=2-1]
            	\arrow["{\beta}", from=1-1, to=1-2]
            	\arrow["{\exists \gamma}", dotted, from=1-2, to=2-2]
            	\arrow["{\exists \delta}"', dotted, from=2-1, to=2-2]
            	\arrow[phantom, "{\circlearrowleft}", from=1-1, to=2-2]
            \end{tikzcd}
            \quad \Rightarrow \quad
            \begin{tikzcd}%[sep = small]
            	{\cdot} & {T_1 = C H_1 P} \\
            	{T_0 = C H_0 P } & {C H_2 P}
            	\arrow["{C \alpha P}"', from=1-1, to=2-1]
            	\arrow["{C \beta P}", from=1-1, to=1-2]
            	\arrow["{C \gamma P}", dotted, from=1-2, to=2-2]
            	\arrow["{C \delta P}"', dotted, from=2-1, to=2-2]
            	\arrow[phantom, "{\circlearrowleft}", from=1-1, to=2-2]
            \end{tikzcd}
        \]
    \end{itemize}
\end{proof}

\begin{proof}[Proof of the missing cases of \cref{lem:modulo_S_T}]
    \begin{itemize}[itemsep=5pt, topsep=3pt]
        \item
        Axiom:
        We distinguish when the axiom is taken in $E_\bbS$, $E_\bbT$ or $E_\lambda$:
        \begin{itemize}
            \item 
            For $(s_1,s_2) \in E_\bbS$, by \cref{lem:sep_on_S_terms}, we have $\sep(s_1) \theoryeq{S} s_1 \theoryeq{S} s_2 \theoryeq{S} \sep(s_2)$, and hence  $\sep(s_1) \theoryeq{S,T} \sep(s_2)$.
            \item 
            For $(t_1,t_2) \in E_\bbT$, there is no $\bbS$-part and, similarly to the previous point, we obtain $\sep(t_1) \theoryeq{S,T} \sep(t_2)$.
            %$\sep(t_1) \theoryeq{T} t_1 \theoryeq{T} t_2 \theoryeq{T} \sep(t_2)$.

            \item 
            For $(s[t_x],t[s_y]) \in E_\lambda$,
            then $\lambda(\eqS{s[\eqT{t_x}]}) = \eqT{t[\eqS{s_y}]}$.
            We already have a path $\lambda$ from $\type(s[t_x]) = ST\variables$ to the normal form $TS \variables$.
            Therefore $\eqT{t[\eqS{s_y}]}$, and the $TS$-equivalence classes of $\sep(s[t_x])$ and $\sep(t[s_y])$ must be equal by well-definedness of $\sep$.
            In particular, $\sep(s[t_x])$ and $\sep(t[s_y])$ are equal \modST.
        \end{itemize}
        
        % \item
        % Axiom($\bbT$): 

        % \item
        % Axiom($\lambda$): 
        % Like above, set $h$ and $h'$ to be the identity on the variables of $t'$.
        % Criteria \ref{item:(a)}-\ref{item:(d)} hold by reflexivity on $t'$ and on each $s'_y$.
        
        \item
        Reflexivity:
        % \smash{
        \AxiomC{}
        % \RightLabel{\scriptsize Refl.}
        \UnaryInfC{$u = u$}
        \DisplayProof
        % }
        Immediate, since $\sep(u)$ is equal \modST to itself.
        
        \item
        Symmetry:
        % \smash{
        \AxiomC{$u_1 = u_2$}
        % \RightLabel{\scriptsize Sym.}
        \UnaryInfC{$u_2 = u_1$}
        \DisplayProof
        % }
        The IH is that $\sep(u_1)$ and $\sep(u_2)$ are equal \modST, which is what we desire.
        
        \item
        Transitivity:
        \AxiomC{$u_1 = u_2$}
        \AxiomC{$u_2 = u_3$}
        % \RightLabel{\scriptsize Trans.}
        \BinaryInfC{$u_1 = u_3$}
        \DisplayProof
        Immediate from the IH, since equality \modST is transitive (visible using \cref{lem:same_equivalence_class_TSX_implies_a_to_d}).
\end{itemize}
\end{proof}

\begin{proof}[Proof of \cref{thm:composite_theory_from_distributive_law}]
    Recall that the category of Eilenberg-Moore $TS$-algebras is concretely isomorphic to the category of $\lambda$-algebras, i.e., $\EM{TS} \cisom \catalg{\lambda}$. 
    To finish the proof, it suffices to show that $\catalg{\lambda} \cisom \catalg{\bbU^\lambda}$.
    %$We show that $\catalg{\lambda}$ is isomorphic to the category of algebras for $\bbU^\lambda$.

    Our hypothesis that $S$ and $T$ are free algebra monads mean:
    \begin{equation}
        \label{eq:S_T_algebraic_presentations}
        \begin{array}{rcl}
             \EM{S} &\cisom& \catalg{\bbS} \\
             (\sigma: SX \to X) &\mapsto& (X, \brackets{\cdot}^\sigma) \\
             (\sigma_{\brackets{\cdot}}: SX \to X) &\mapsfrom& (X, \brackets{\cdot})
        \end{array}
        \qquad
        \begin{array}{rcl}
             \EM{T} &\cisom& \catalg{\bbT} \\
             (\tau: TX \to X) &\mapsto& (X, \brackets{\cdot}^\tau) \\
             (\tau_{\brackets{\cdot}}: TX \to X) &\mapsfrom& (X, \brackets{\cdot})
        \end{array}
    \end{equation}
    where $\brackets{s}^\sigma \defeq \sigma(\eqS{s})$, \ 
    $\sigma_{\brackets{\cdot}}(\eqS{s}) \defeq \brackets{s}$, \ 
    $\brackets{t}^\tau \defeq \tau(\eqT{t})$, and \ 
    $\tau_{\brackets{\cdot}}(\eqT{t}) \defeq \brackets{t}$.
    % \begin{align*}
    %     \brackets{s}^\sigma &\defeq \sigma(\eqS{s}),
    %     & \brackets{t}^\tau &\defeq \tau(\eqT{t}), \\
    %     \sigma_{\brackets{\cdot}}(\eqS{s}) &\defeq \brackets{s},
    %     &\tau_{\brackets{\cdot}}(\eqT{t}) &\defeq \brackets{t}.
    % \end{align*}

    We construct mappings $F : \catalg{\lambda} \rightleftarrows \catalg{\bbU^\lambda} : G$.
    Let $F$ and $G$ be the identity on morphisms.
    On objects let $G(X, \brackets{\cdot}) \defeq (X, \sigma_{\brackets{\cdot}}, \tau_{\brackets{\cdot}})$, where $\sigma_{\brackets{\cdot}}$ and $\tau_{\brackets{\cdot}}$ are defined as in \eqref{eq:S_T_algebraic_presentations}, and let $F(X, \sigma, \tau) \defeq (X, \brackets{\cdot}^{\sigma\tau})$, where
    $\brackets{\cdot}^{\sigma\tau} \defeq \brackets{\cdot}^\sigma \cup \brackets{\cdot}^\tau$.
    % :
    % \[
    %     \brackets{\op}^{\sigma\tau} \defeq 
    %     \begin{cases}
    %         \brackets{\op}^{\sigma} & \mbox{if } \op \in \Sigma^\bbS\\
    %         \brackets{\op}^{\tau} & \mbox{if } \op \in \Sigma^\bbT
    %     \end{cases} 
    % \]
    % We define $F$ and $G$ to act as the identity on morphisms.

    Let us show that $F$ is a functor.
    We suppose that $(X, \sigma, \tau)$ is a $\lambda$-algebra, i.e., the pentagon from \cref{def:lambda_algebra} commutes.
    We show that $(X, \brackets{\cdot}^{\sigma\tau})$ is a $\bbU^\lambda$-algebra.
    The interpretation $\brackets{\cdot}^{\sigma\tau}$ satisfies the equations in $E_\bbS$ and $E_\bbT$ because respectively $\brackets{\cdot}^{\sigma}$ and $\brackets{\cdot}^{\tau}$ do so.
    Moreover, any equation $(s[t_x],t[s_y])$ in $E_\lambda$ is also satisfied:
    % We notice that the pentagon commutes on input $\eqclass{s[\eqclass{t_x}\bbT]}\bbS \in ST\variables$ if and only if $\brackets{\cdot}^{\sigma\tau}$ satisfies $(s[t_x],t[s_y])$.
    % It moreover satisfies any equation $(s[t_x],t[s_y])$ in $E_\lambda$:
    % because we suppose that the pentagon from \cref{def:lambda_algebra} commutes, which implies that
    \begin{center}
        \begin{tikzpicture}[scale=.9]
            \node (ST) at (-4,2) {
                $\eqclass{s[\eqclass{t_x}\bbT]}\bbS \in ST\variables$
                };
            \node (S) at (-4, 1) {
                $\eqclass{s[\tau(\eqclass{t_x}\bbT)]}\bbS 
                \stackrel{\eqref{eq:S_T_algebraic_presentations}}{=}
                \eqclass{s[ \brackets{t_x}^\tau]}\bbS \in S\variables$
            };
            \node (TS) at (4,2) {
                $\eqclass{t[\eqclass{s_y}\bbS]}\bbT \in TS\variables$
            };
            \node (T) at (4,1) {
                $\eqclass{t[\sigma(\eqclass{s_y}\bbS)]}\bbT 
                \stackrel{\eqref{eq:S_T_algebraic_presentations}}{=} 
                \eqclass{t[ \brackets{s_y}^\sigma]}\bbT \in T\variables$
            };
            \node (v) at (0,0) {
                $\brackets{s[ \brackets{t_x}^\tau]}^\sigma = \brackets{s[t_x]}^{\sigma\tau} = \brackets{t[s_y]}^{\sigma\tau} = \brackets{t[\brackets{s_y}^\sigma]}^\tau$
            };
            \draw[->] (ST) -- (TS) node[midway, above] {$\lambda$};
            \draw[->] (ST) -- (S) node[midway, right] {$S\tau$};
            \draw[->] (TS) -- (T) node[midway, left] {$T\sigma$};
            \draw[->] (S) -- (v) node[pos=.8, above right] {$\sigma$};
            \draw[->] (T) -- (v) node[pos=.8, above left] {$\tau$};
        \end{tikzpicture}
    %     \begin{tikzcd}[ampersand replacement=\&, column sep=0cm, row sep=small]
    %         \eqclass{s[\eqclass{t_x}\bbT]}\bbS \in ST\variables \& \& \eqclass{t[\eqclass{s_y}\bbS]}\bbT \in TS\variables \\
    %         \eqclass{s[\tau(\eqclass{t_x}\bbT)]}\bbS 
    %         \stackrel{\eqref{eq:S_T_algebraic_presentations}}{=}
    %         \eqclass{s[ \brackets{t_x}^\tau]}\bbS \in S\variables
    %         \& \& \eqclass{t[\sigma(\eqclass{s_y}\bbS)]}\bbT 
    %         \stackrel{\eqref{eq:S_T_algebraic_presentations}}{=} 
    %         \eqclass{t[ \brackets{s_y}^\sigma]}\bbT \in T\variables \\
    %         \&
    %         % \mathclap{
    %         \brackets{s[ \brackets{t_x}^\tau]}^\sigma = \brackets{s[t_x]}^{\sigma\tau} = \brackets{t[s_y]}^{\sigma\tau} = \brackets{t[\brackets{s_y}^\sigma]}^\tau
    %         % }
    %         \ar[from=1-1, to=1-3, "\lambda"]
    %         \ar[from=1-1, to=2-1, "S\tau"']
    %         \ar[from=2-1, to=3-2, "\sigma"']
    %         \ar[from=1-3, to=2-3, "T\sigma"]
    %         \ar[from=2-3, to=3-2, "\tau"]
    %     \end{tikzcd}
    \end{center}

    Let us show that $G$ is a functor.
    We suppose that $(X, \brackets{\cdot})$ is a $\bbU^\lambda$-algebra, i.e., $\brackets{\cdot}$ satisfies all equations in $E_\bbS \cup E_\bbT \cup E_\lambda$.
    We show that 
    $(X, \sigma_{\brackets{\cdot}}, \tau_{\brackets{\cdot}})$ is a $\lambda$-algebra.
    %, i.e., the pentagon from \cref{def:lambda_algebra} commutes.
    Take some $\smash{\eqclass{s[\eqclass{t_x}\bbT]}\bbS \in ST\variables}$ and let $\smash{\eqclass{t[\eqclass{s_y}\bbS]}\bbT \in TS\variables}$ be its image through $\lambda$.
    Thus $(s[t_x],t[s_y])$ is an equation in $E_\lambda$ which must be satisfied by $\brackets{\cdot}$, hence the pentagon necessarily commutes on input $\smash{\eqclass{s[\eqclass{t_x}\bbT]}\bbS \in ST\variables}$:
    \begin{center}
        \begin{tikzpicture}[scale=.9]
            \node (ST) at (-4,2) {
                $\eqclass{s[\eqclass{t_x}\bbT]}\bbS \in ST\variables$
                };
            \node (S) at (-4, 1) {
                $\eqclass{s[\tau_{\brackets{\cdot}}(\eqclass{t_x}\bbT)]}\bbS 
                \stackrel{\eqref{eq:S_T_algebraic_presentations}}{=}
                \eqclass{s[ \brackets{t_x}]}\bbS \in S\variables$
            };
            \node (TS) at (4,2) {
                $\eqclass{t[\eqclass{s_y}\bbS]}\bbT \in TS\variables$
            };
            \node (T) at (4,1) {
                $\eqclass{t[\sigma_{\brackets{\cdot}}(\eqclass{s_y}\bbS)]}\bbT 
                \stackrel{\eqref{eq:S_T_algebraic_presentations}}{=} 
                \eqclass{t[ \brackets{s_y}]}\bbT \in T\variables$
            };
            \node (v) at (0,0) {
                $\brackets{s[ \brackets{t_x}]} = \brackets{s[t_x]} = \brackets{t[s_y]} = \brackets{t[\brackets{s_y}]}$
            };
            \draw[->] (ST) -- (TS) node[midway, above] {$\lambda$};
            \draw[->] (ST) -- (S) node[midway, right] {$S\tau_{\brackets{\cdot}}$};
            \draw[->] (TS) -- (T) node[midway, left] {$T\sigma_{\brackets{\cdot}}$};
            \draw[->] (S) -- (v) node[pos=.85, above right] {$\sigma_{\brackets{\cdot}}$};
            \draw[->] (T) -- (v) node[pos=.85, above left] {$\tau_{\brackets{\cdot}}$};
        \end{tikzpicture}
    \end{center}
    To finish the proof, simply notice that $F$ and $G$ are inverses because they use the inverses constructions from \eqref{eq:S_T_algebraic_presentations}.
\end{proof}

\subsection{Extra section 6}

\begin{proof}[Proof of \cref{lem:(1.1)->(n.1)_layers}]
    We will denote symbols in $\Sigma_\bbS$ by $f,f',\ldots$ and symbols in $\Sigma_\bbT$ by $g,g',\ldots$.
    We will establish termination using the multiset path order (see \cite{Hofbauer_1992_Termination_mpo} or \cref{def:multiset_path_order}) where
    all $f,f' \in \Sigma_\bbS$ are equivalent $f \approx f'$, all $g,g' \in \Sigma_\bbT$ are equivalent $g \approx g'$, and all $\bbS$-symbols are greater than $\bbT$-symbols $f > g$.
    % \begin{itemize}
    %     \item $f \geq f'$ for every $f,f' \in \Sigma_\bbS$, 
    %     \item $g \geq g'$ for every $g,g' \in \Sigma_\bbT$, and
    %     \item $f \geq g$ for every $f \in \Sigma_\bbS$ and $g \in \Sigma_\bbT$.
    % \end{itemize}
    % In other words,
    % $f \approx f'$,  
    % $g \approx g'$ and
    % $f > g$ for every $f,f' \in \Sigma_\bbS$ and $g,g' \in \Sigma_\bbT$.
    The goal is to show that $R \subseteq {\succ}$.
    %We will denote tuples of variables as vectors, for simplicity.
    
    %Consider an equation of the form $\big( f(g(y_1, \ldots, y_m), x_2, \ldots, x_p) , t[s_z/z] \big) \in E_\lambda$.
    Consider $s[t_x/x] \to t[s_y/y] \in R$. 
    %We suppose that only one of the $t_x$ is a $\bbT$-operation, as per \cref{it:one_argument_at_a_time} above, and w.l.o.g.~say that it is the first argument of $s$.
    % For simplicity, we will denote tuples of variable as vectors.
    %The goal is to show that $f(g(\vec{y}), \vec{x}) \succ t[s_z/z]$ whenever each $s_z$ has depth $\leq 1$ and is linear in $x_2, \ldots, x_p$.
    %$\setvbar{t_x}{t_x \text{ is a variable}}$.
    We use induction on the structure of $t$:
    %say $s = f(s_1,\ldots,s_n)$, then
    \begin{enumerate}[(i)]
        \item 
        For $t[s_y/y]$ being a variable, we obtain $s[t_x/x] \succ t[s_y/y]$ by repeated application of case \ref{it:mpo1} of \cref{def:multiset_path_order}.
        \item 
        For $t$ being a variable $y$ and $s_y = f'(z_1,\ldots,z_k)$, then we are in case \ref{it:mpo2}\ref{it:mpo2ii} of \cref{def:multiset_path_order}:
        \begin{itemize}
            % \item 
            % for every $1 \leq i \leq k$, we have $g(f(\vec{x}),\vec{y}) \succ w_i$ since every variable $w_i$ appears in the left-hand side $g(f(\vec{x}),\vec{y})$ as noticed in \cref{rem:E_lambda_TRS}, and we have linearity in $x_2, \ldots, x_p$; and
            
            \item 
            We have indeed $f \approx f'$.
            %$g \approx g'$, and 
            
            \item 
            %We also have $\Lbag f(\vec{y}), x_2, \ldots, x_p \Rbag \succ_\multiset \Lbag w_1, \ldots, w_k \Rbag$.
            % This is itself a use of case \ref{it:mpo1} of \cref{def:multiset_path_order}.
            %Indeed, we know that every variable $w_i$ appears in the left-hand side $g(f(\vec{x}),\vec{y})$ as noticed in \cref{rem:E_lambda_TRS}, and appears only once because of linearity in $x_2, \ldots, x_p$.
            %Therefore, in the multiset $f(\vec{y})$ can be replaced by all $y$'s, and then all $x$'s or $y$'s that are not part of $w_1, \ldots, w_k$ can be removed.
            For $s = f(t_1,\ldots,t_n)$, we also have $\Lbag t_1, \ldots, t_m \Rbag \succ_\multiset \Lbag z_1, \ldots, z_k \Rbag$.
            % This is itself a use of case \ref{it:mpo1} of \cref{def:multiset_path_order}.
            Consider $z_i$ for $1 \le i \le k$.
            Since $z_i$ is a variable, it must appear in some $t_j$ for $1 \le j \le m$.
            If $t_j$ is a variable, then $z_i = t_j$ and $z_i$ appears only once in $\Lbag z_1, \ldots, z_k \Rbag$ because $s_z$ is linear in $t_j$.
            If $t_j$ is not a variable, we have $t_j \succ z_i$ by repeated application of case \ref{it:mpo1} of \cref{def:multiset_path_order}.
            Finally, at least one $t_j$ for $1 \le j \le m$ is not a variable (and thus replaced by $0$ or more smaller elements).
        \end{itemize}
        \item 
        For $t = g(t_1,\ldots,t_m)$, we are in case \ref{it:mpo2}\ref{it:mpo2i} of \cref{def:multiset_path_order}.
        We have $s[t_x/x] \succ t_i[s_y/y]$ for every $1 \leq i \leq m$, by IH, and $f > g$ by hypothesis. \qedhere
    \end{enumerate}
\end{proof}

\begin{lemma}
    \label{lem:(1.1)->(1.n)_layers}    
    Let $\bbS$ and $\bbT$ be two algebraic theories.
    Let $R$ be a set of rules of the form $s[t_x/x] \to t[s_y/y]$.
    Suppose each $s[t_x/x]$ has layers $(1,1)$ and each $t[s_y/y]$ has layers $(1,n)$ for some $n$ not fixed, and is linear.
    % \todo{J:I think we need the entire right-hand side to be linear. Is that still useful or too strict?
    % A: I don't know...}
    % \todo{J: the assumption seems very strong for me; so we could move this lemma into the appendix unless we find an application.}
    %each $s_y$ is linear in $\setvbar{t_x}{t_x \text{ is a variable}}$. 
    Then $R$ is terminating.
\end{lemma}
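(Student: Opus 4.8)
The plan is to prove termination via the \emph{dependency pair} method rather than via a direct reduction order. The reason is that the rules are, in general, only \emph{weakly} decreasing under any natural measure: a layer-$(1,1)$ left-hand side may preserve the number of $\bbT$-symbols (as already happens for the string rule $ab \to ba^s$, which is of this shape), so no plain strictly monotone interpretation can orient $R$. I would therefore invoke the dependency pair characterisation of termination: $R$ is terminating iff there is a reduction pair $(\succsim, \succ)$ such that $l \succsim r$ for every rule $l \to r \in R$ and $s \succ t$ for every dependency pair $s \to t \in DP(R)$.

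First I would determine the defined symbols and the shape of the dependency pairs. Each left-hand side $s[t_x/x]$ has $ST$-layers $(1,1)$, so its root is an $\bbS$-operation and its immediate arguments are variables or single $\bbT$-operations applied to variables; hence the defined symbols are exactly the $\bbS$-operation symbols occurring as roots of left-hand sides, and the only defined-symbol occurrence in a left-hand side is its root. Each right-hand side $t[s_y/y]$ has $TS$-layers $(1,n)$, so its root is a $\bbT$-operation and its immediate arguments $s_y$ are \emph{pure} $\bbS$-terms. Consequently every dependency pair has the form $f^\sharp(\bar t) \to h^\sharp(\bar w)$, where $f,h$ are defined $\bbS$-symbols, the source arguments $\bar t$ contain at least one $\bbT$-symbol (layer $(1,1)$ forces at least one argument to be a genuine $\bbT$-operation or constant), and the target arguments $\bar w$ are pure $\bbS$-terms, hence $\bbT$-free.

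The reduction pair I would use is the weakly monotone polynomial interpretation over $\N$ that counts $\bbT$-symbols: set $\brackets{f}(x_1,\dots,x_n) = x_1 + \dots + x_n$ for every $\bbS$-symbol $f$ and every marked symbol $f^\sharp$, and $\brackets{g}(x_1,\dots,x_m) = x_1 + \dots + x_m + 1$ for every $\bbT$-symbol $g$, and define $s \succsim t$ (resp. $s \succ t$) iff $\brackets{s} \ge \brackets{t}$ (resp. $>$) for all assignments into $\N$. For a rule $l \to r$ one gets $\brackets{l} = c_L + \sum_v m_l(v)\,x_v$ and $\brackets{r} = 1 + \sum_v m_r(v)\,x_v$, where $c_L \ge 1$ is the number of $\bbT$-symbols in $l$ and $m_l,m_r$ are variable multiplicities; full linearity of $r$ gives $m_r(v) \le 1$, while $\var(r) \subseteq \var(l)$ gives $m_l(v) \ge 1$ on $\var(r)$, so $\brackets{l} \ge \brackets{r}$ for all assignments and every rule is weakly decreasing. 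For a dependency pair $f^\sharp(\bar t) \to h^\sharp(\bar w)$ the target is a marked pure $\bbS$-term, so $\brackets{h^\sharp(\bar w)} = \sum_v m_u(v)\,x_v$ with $m_u(v) \le 1$ (a subterm of the linear $r$), whereas $\brackets{f^\sharp(\bar t)} = c_L + \sum_v m_l(v)\,x_v$ with $c_L \ge 1$; hence $\brackets{f^\sharp(\bar t)} \ge 1 + \brackets{h^\sharp(\bar w)} > \brackets{h^\sharp(\bar w)}$ strictly for all assignments. Since a weakly monotone interpretation over $\N$ yields a genuine reduction pair (closed under substitutions, and under contexts by weak monotonicity), the dependency pair theorem gives termination of $R$.

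The main obstacle is conceptual rather than computational: one must recognise that a direct termination order is hopeless (the $\bbT$-symbol count is only non-increasing on layer-$(1,1)$ rules) and that dependency pairs are exactly what converts this non-increase into a strict decrease on the pairs. The two delicate points in making the interpretation rigorous are (i) verifying that all dependency pairs have $\bbT$-free right-hand sides, which uses the $(1,n)$ shape of the right-hand side (a single $\bbT$-operation at the root) — precisely the feature that fails for $ab \to bbaa$ in \cref{ex:counter_example_aab}; and (ii) controlling variable multiplicities, where the \emph{full} linearity of $t[s_y/y]$ is indispensable, since a duplicated variable carrying a $\bbT$-symbol could make a right-hand side interpretation strictly larger and destroy both the weak decrease of the rules and the strict decrease of the pairs.
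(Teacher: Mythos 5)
Your proposal is correct and follows essentially the same route as the paper's proof: termination via dependency pairs combined with the polynomial interpretation that counts $\Sigma_\bbT$-symbols ($\brackets{f}=\sum x_i$ for $\bbS$-symbols, $\brackets{g}=\sum x_i+1$ for $\bbT$-symbols), with the rules weakly decreasing thanks to linearity and the dependency pairs strictly decreasing because their right-hand sides are $\bbT$-free. Your write-up is somewhat more explicit than the paper's about the variable-multiplicity bookkeeping and why the $(1,n)$ shape forces $\bbT$-free DP targets, but the argument is the same.
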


\begin{proof}
    We show termination using dependency pairs \cite{Arts_Giesl_2000_dependency_pairs,Giesl_2004_Dependency_pairs}.
    We therefore extend our signature by adding a marked version $h_\#$ of each symbol $h \in \Sigma_\bbS \uplus \Sigma_\bbT$.    
    For terms $t = h(\vec{u})$ we write $t_\#$ for the term $h_\#(\vec{u})$ obtained by marking the root symbol of $t$.
    We will denote symbols in $\Sigma_\bbS$ by $f,f',\ldots$ and symbols in $\Sigma_\bbT$ by $g,g',\ldots$. 
    
    %The rules that we consider are of the form $f(g(y_1,\ldots,y_m),x_2,\ldots,x_p) \to g'(s_1,\ldots,s_q)$.
    The dependency pairs here are
%    \begin{align*}
%        \mathsf{DP}(R) = 
%        &\set{f_\#(g(\vec{y}),\vec{x}) \to g'_\#(s_1, \ldots, s_q)} \\
%        & \cup \setvbar{
%            f_\#(g(\vec{y}),\vec{x}) \to s'_\#
%        }{
%            s' \text{ is a subterm of } s_i \text{ for some } i = 1,\ldots,q
%        }.
%    \end{align*}
    \begin{align*}
        \mathsf{DP}(R) = 
        \{\,
            f_\#(\vec{t}) \to s'_\#
        \mid\;
            &f(\vec{t}) \to g(s_1,\ldots,s_q) \in R,\\
            &s' \text{ is a non-variable subterm of } s_i \text{ for some } i = 1,\ldots,q
        \,\}.
    \end{align*}
    We use the following polynomial interpretation with
    \begin{align*}
        \brackets{f_\#}(x_1,\ldots,x_n) &= \brackets{f}(x_1,\ldots,x_n) = x_1 + \ldots + x_n \\
        \brackets{g_\#}(x_1,\ldots,x_n) &= \brackets{g}(x_1,\ldots,x_n) = x_1 + \ldots + x_n + 1 
    \end{align*} 
    for all $f \in \Sigma_\bbS$ and $g \in \Sigma_\bbT$.
    This interpretation simply counts symbols in $\Sigma_\bbT$.
    Then $R \subseteq {\geq}$ since the rules $s[t_x/x] \to t[s_y/y]$ contain at least $1$ $\Sigma_\bbT$-symbol in $s[t_x/x]$ (not all $t_x$ are variables), only one $\Sigma_\bbT$-symbol in $t[s_y/y]$, and the rules are non-duplicating. Similarly $\mathsf{DP}(R) \subseteq {>}$ since their right-hand sides contain no $\Sigma_\bbT$-symbol.
\end{proof}

\subsection{Identifying derivable equations using Prover 9}
\label{sec:prover9_appendix}

The ring axioms given as follows,
\[
    E_\mathsf{Mon} \defeq \left\{
    \begin{aligned}
        (x \cdot y) \cdot z &= x \cdot (y \cdot z), \\
        1 \cdot x &= x, \\
        x \cdot 1 &= x
    \end{aligned}
    \right\},
    \quad
    E_\mathsf{AbGrp} \defeq \left\{
    \begin{aligned}
        (x + y) + z &= x + (y + z), \\
        x + (-x) &= 0, \\
        x + y &= y + x, \\
        x + 0 &= x
    \end{aligned}
    \right\},
\]
\[
    E_\mathsf{Ring} \defeq E_\mathsf{Mon} \uplus E_\mathsf{AbGrp} \uplus
    \left\{
    \begin{aligned}
        x \cdot (y + z) &= (x \cdot y) + (x \cdot z), \\
        (y + z) \cdot x &= (y \cdot x) + (z \cdot x)
    \end{aligned}
    \right\},
\]
% \begin{align*}
%   (x \cdot y) \cdot z &= x \cdot (y \cdot z) &
%   1 \cdot x &= x &
%   x \cdot 1 &= x \\
%   (x + y) + z &= x + (y + z) &
%   x + (-x) &= 0 &
%   x + y &= y + x \\
%   (x + y) \cdot z &= (x \cdot z) + (y \cdot z) &
%   z \cdot (x + y) &= (z \cdot x) + (z \cdot y) &
%   x + 0 &= x 
% \end{align*}
imply the following equalities
\begin{align*}
  x \cdot 0 &= 0 &
  0 \cdot x &= 0 &
  (-x) \cdot y &= -(x \cdot y) &
  x \cdot (-y) &= -(x \cdot y)
\end{align*}
This can be proven automatically by Prover9~\cite{McCune_2005_Prover9} using the following input:
\begin{verbatim}
assign(max_seconds, 30).

formulas(sos).

  (x * y) * z = x * (y * z) # label(times_associativity).
  1 * x = x # label(times_neutral_left).
  x * 1 = x # label(times_neutral_right).

  (x + y) + z = x + (y + z) # label(plus_associativity).
  x + (-x) = 0 # label(plus_inverse).
  x + y = y + x # label(plus_commutativity).
  x + 0 = x # label(plus_neutral).

  (x + y) * z = (x * z) + (y * z) # label(distributivity_right).
  z * (x + y) = (z * x) + (z * y) # label(distributivity_left).
  
end_of_list.

set(restrict_denials).

formulas(goals).

  x * 0 = 0 # answer(times_zero_right).
  0 * x = 0 # answer(times_zero_left).
  (-x) * y = -(x * y) # answer(times_minus_left).
  x * (-y) = -(x * y) # answer(times_minus_right).

end_of_list.
\end{verbatim}
Prover9 only needs a fraction of a second to proof all 4 equations:
\begin{verbatim}
-------- Proof 1 -------- times_zero_left.
-------- Proof 2 -------- times_zero_right.
-------- Proof 3 -------- times_minus_left.
-------- Proof 4 -------- times_minus_right.

THEOREM PROVED
\end{verbatim}

%\thanks{The work of the author was supported by the Netherlands Organization for Scientific Research (NWO) under the Innovational Research Incentives Scheme Vidi (project. No. VI.Vidi.192.004).}

\end{document}